\definecolor{linkc}{HTML}{0030c0}
\renewcommand{\backref}[1]{}
\renewcommand{\backrefalt}[4]{%
\ifcase #1 %
\or 
[p.\ #2]%
\else 
[pp.\ #2]%
\fi}
\declaretheorem[numberwithin=section]{theorem}
\declaretheorem[sibling=theorem]{lemma}
\declaretheorem[sibling=theorem,name=Proposition]{proposition}
\theoremstyle{definition}
\declaretheorem[sibling=theorem]{definition}
\declaretheorem[sibling=theorem]{remark}
\newcommand{\para}{%
    \@startsection{paragraph}{4}%
    {\z@}{2ex \@plus 3.3ex \@minus .2ex}{-1em}%
    {\normalfont\normalsize\bfseries}%
}
\DeclarePairedDelimiter{\ceil}{\lceil}{\rceil}
\DeclarePairedDelimiter{\floor}{\lfloor}{\rfloor}
\DeclarePairedDelimiter{\abs}{\lvert}{\rvert} 
\DeclarePairedDelimiter{\opnorm}{\lVert}{\rVert_{\mathrm{op}}}
\DeclarePairedDelimiter{\diamnorm}{\lVert}{\rVert_{\diamond}}
\DeclarePairedDelimiter{\onenorm}{\lVert}{\rVert_{1}}
\DeclarePairedDelimiter{\twonorm}{\lVert}{\rVert_{2}}
\DeclarePairedDelimiter{\fnorm}{\lVert}{\rVert_{\textup{F}}}
\DeclarePairedDelimiter{\bracks}{\lbrack}{\rbrack}
\DeclarePairedDelimiter{\parens}{\lparen}{\rparen}
\DeclareMathOperator{\poly}{poly}
\DeclareMathOperator{\polylog}{polylog}
\DeclareMathOperator*{\Tr}{Tr}
\DeclareMathOperator{\Real}{Re}
\DeclareMathOperator{\Imag}{Im}
\newcommand{\dH}[2]{d_{\widetilde{\mathrm{H}}}(#1,#2)}
\DeclareMathOperator*{\Ex}{\mathbf{E}} \DeclareMathOperator*{\E}{\mathbf{E}} 
\DeclareMathOperator{\dist}{dist}
\DeclareMathOperator{\emeasure}{error} 
\DeclareMathOperator{\diag}{diag}
\newcommand{\eps}{\varepsilon}
\newcommand{\diff}{\partial}
\newcommand{\ZZ}{{\mathbb{Z}}}
\newcommand{\RR}{{\mathbb{R}}}
\newcommand{\CC}{{\mathbb{C}}}
\newcommand{\ii}{\mathrm{i}}
\newcommand{\bb}{{\boldsymbol{b}}}
\newcommand{\bD}{{\boldsymbol{D}}}
\newcommand{\bE}{{\boldsymbol{E}}}
\newcommand{\bG}{{\boldsymbol{G}}}
\newcommand{\bJ}{{\boldsymbol{J}}}
\newcommand{\bL}{{\boldsymbol{L}}}
\newcommand{\bM}{{\boldsymbol{M}}}
\newcommand{\bu}{{\boldsymbol{u}}}
\newcommand{\bU}{{\boldsymbol{U}}}
\newcommand{\bv}{{\boldsymbol{v}}}
\newcommand{\bV}{{\boldsymbol{V}}}
\newcommand{\bw}{{\boldsymbol{w}}}
\newcommand{\bW}{{\boldsymbol{W}}}
\newcommand{\bx}{{\boldsymbol{x}}}
\newcommand{\bX}{{\boldsymbol{X}}}
\newcommand{\by}{{\boldsymbol{y}}}
\newcommand{\bY}{{\boldsymbol{Y}}}
\newcommand{\bz}{{\boldsymbol{z}}}
\newcommand{\bZ}{{\boldsymbol{Z}}}
\newcommand{\bphi}{{\boldsymbol{\phi}}}
\newcommand{\bPhi}{{\boldsymbol{\Phi}}}
\newcommand{\bpsi}{{\boldsymbol{\psi}}}
\newcommand{\brho}{{\boldsymbol{\rho}}}
\newcommand{\bSigma}{{\boldsymbol{\Sigma}}}
\newcommand{\bPsi}{{\boldsymbol{\Psi}}}
\newcommand{\bdelta}{{\boldsymbol{\delta}}}
\newcommand{\bDelta}{{\boldsymbol{\Delta}}}
\newcommand{\beps}{{\boldsymbol{\eps}}}
\newcommand{\blambda}{{\boldsymbol{\lambda}}}
\newcommand{\wh}[1]{\widehat{#1}}
\newcommand{\calE}{\mathcal{E}}
\newcommand{\calM}{\mathcal{M}}
\newcommand{\calR}{\mathcal{R}}
\newcommand{\calS}{\mathcal{S}}
\newcommand{\calU}{\mathcal{U}}
\newcommand{\ol}[1]{\overline{#1}}
\newcommand{\rd}{{\mathrm{d}}}
\newcommand{\half}{{\tfrac 1 2}}
\newcommand{\calW}{{\mathcal{W}}}
\newcommand{\calC}{{\mathcal{C}}}
\definecolor{ewintodo}{HTML}{de024b}
\definecolor{RKcomment}{HTML}{FF8800}
\newcommand{\ignore}[1]{}
\newcommand{\dd}{{\mathsf{d}}} 
\newcommand{\nn}{{\mathsf{n}}} 
\newcommand{\UU}{{\mathsf{U}}} 
\newcommand{\uu}{{\mathfrak{u}}} 
\newcommand{\PPUU}{{\mathsf{PU}}} 
\newcommand{\uone}{{\UU(1)}} 
\newcommand{\BB}{{\mathcal{B}}} 
\newcommand{\PPBB}{{\mathcal{PB}}} 
\newcommand{\net}{{\mathcal{N}}} 
\newcommand{\id}{{\mathbb{I}}} 
\newcommand{\pudist}[2]{\dist_{\mathrm{phaseop}}(#1,#2)}
\newcommand{\tvdist}[2]{\dist_{\mathrm{TV}}(#1,\,#2)}
\newcommand{\controlled}{{\mathrm{c}}} 
\DeclareMathOperator{\alg}{\mathcal{A}} 
\DeclareMathOperator{\circuit}{\mathcal{C}} 
\DeclareMathOperator{\tomalg}{\mathcal{L}} 
\newcommand{\macheps}{\varepsilon_{\mathrm{mach}}} 
\newcommand{\anc}{\textup{anc}} 
\begin{document}
\title{Query-optimal estimation of\\ unitary channels in diamond distance}
\author{
Jeongwan Haah\footnote{Microsoft Quantum, Redmond, Washington, USA},~ 
Robin Kothari\footnote{Most of this work was done while the author was at Microsoft Quantum, Redmond, Washington, USA. The author is currently at Google Quantum AI, Venice, California, USA.},~ 
Ryan O'Donnell\footnote{Carnegie Mellon University Computer Science Department, Pittsburgh, Pennsylvania, USA. 
Part of this research was performed while the author was at Microsoft Quantum.},~
and Ewin Tang\footnote{University of Washington, Seattle, Washington, USA}
}

\maketitle

\begin{abstract}
We consider process tomography for unitary quantum channels.
Given access to an unknown unitary channel acting on a $\dd$-dimensional qudit, 
we aim to output a classical description of a unitary  
that is $\eps$-close to the unknown unitary in diamond norm.
We design an algorithm achieving error~$\eps$ 
using $O(\dd^2/\eps)$ applications of the unknown channel and only one qudit.
This improves over prior results, which use $O(\dd^3/\eps^2)$ [via standard process tomography] or $O(\dd^{2.5}/\eps)$ [Yang, Renner, and Chiribella, PRL 2020] applications.
To show this result, we introduce a simple technique to ``bootstrap'' an algorithm that can produce constant-error estimates 
to one that can produce $\eps$-error estimates 
with the Heisenberg scaling.
Finally, we prove a complementary lower bound showing that estimation requires $\Omega(\dd^2/\eps)$ applications, even with access to the inverse or controlled versions of the unknown unitary.
This shows that our algorithm has both optimal query complexity and optimal space complexity.
\end{abstract}

\tableofcontents

\section{Introduction}

Quantum mechanics models the time evolution of a closed system
as a unitary operator on the vector space of states.
From the perspective of an experimentalist,
this model serves as a mechanism for producing statistical predictions.
This operational point of view prompts a fundamental question: when and how one can interact with a system with some unknown dynamics to learn its corresponding unitary operator?
This question generally goes under the name of \emph{process tomography},
and has been studied extensively in a variety of settings
depending on chosen figures of merit; 
see our discussion in~\cref{sec:CompPrior} below.
The figures of merit may be one or more of the following:
the accuracy of an estimate (measured with some metric
on the space of quantum channels),
the number of times one has to run the unknown dynamics,
the complexity of initial states and that of measurements,
the size of coherent Hilbert space required,
and the complexity of classical processing.

In this paper, we design an algorithm with the following properties. 
Let~$\diamnorm \cdot$ denote the diamond norm, 
the completely bounded norm on the space of linear operators acting on density matrices.
Let~$\UU(\dd)$ denote the set of all $\dd \times \dd$ unitary matrices.
If~$X \in \UU(\dd)$ is any unitary operator, 
we denote by~$\calU(X)$ the associated unitary channel
\begin{equation}    
    \calU(X) : \rho \mapsto X \rho X^\dagger.
\end{equation}

\begin{restatable}[Upper bound]{theorem}{upperbound}
    \label{thm:upperbound}
    There is a quantum algorithm~$\alg$ that,    
    given black-box access to an unknown $\dd$-dimensional unitary channel $Z \in \UU(\dd)$ and any $\eps > 0$, outputs a classical description of a unitary. This output is probabilistic and can be viewed as a $\UU(\dd)$-valued random variable $\wh{\bZ}$. The algorithm satisfies the following properties:
\begin{enumerate}
    \item \textup{(Query complexity)} $\alg$ queries the black box $O(\dd^2/\eps)$ times.
    \item \textup{(Space complexity)} $\alg$ only uses one qudit of dimension $\dd$. Specifically, it only prepares states of the form $V_2 (Z V_1)^p V_0 \ket 0$, where the positive integer~$p$ and unitaries~$V_0, V_1, V_2 \in \UU(\dd)$ are adaptively chosen, and measures them in the computational basis.
    \item \textup{(Gate complexity)} If the $\dd$-dimensional qudit is embedded in $\nn = \lceil \log_2 \dd \rceil$ qubits, then $\alg$ uses $\poly(\dd,1/\eps)$ one- and two-qubit quantum gates in total beyond the queries to the black box, and the classical time complexity is $\poly(\dd,1/\eps)$.
    \item \textup{(Closeness of output unitary)} The unitary output by $\alg$ is $\eps$-close to the unknown $Z$ in diamond norm with high probability. Specifically, $\E[~\diamnorm{\calU(\wh{\bZ})- \calU(Z)}^2~] \le \eps^2$, which implies $\Pr[~ \diamnorm{\calU(\wh{\bZ}) - \calU(Z)} \le 3\eps~] \ge \tfrac 2 3$.
    \item \textup{(Closeness of output channel)} We can also view the output of $\alg$ as a channel. Let $\calM$ denote the mixed unitary channel induced by running~$\alg$ on~$Z$ and then applying its output $\wh{\bZ}$. Then $\diamnorm*{\calM  - \calU(Z)} \le \eps^2$. \label{item:upperbound-channel}
\end{enumerate}    
\end{restatable}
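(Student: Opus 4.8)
The plan is to split the argument into (i) a constant-accuracy ``base'' learner and (ii) a generic bootstrapping procedure that upgrades it to an $\eps$-accurate learner with Heisenberg scaling, exactly as the abstract advertises. For (i) I would first prove a lemma of the form: there is a one-qudit algorithm $\alg_0$ that, given black-box access to $\calU(Z)$, makes $O(\dd^2)$ queries---each inside a state preparation of the promised shape $V_2(ZV_1)^pV_0\ket0$---and returns $\wh{\bZ}_0$ with $\E[\diamnorm{\calU(\wh{\bZ}_0)-\calU(Z)}^2]\le c_0^2$ for a small absolute constant $c_0$. One natural route is to relate $\calU(Z)$ to pure-state tomography of its Choi state, which costs only $O(\dd^2)$ copies at constant error because that state is pure; the one-qudit requirement is the extra twist, which I expect one handles by iterating $ZV_1$ and doing phase-estimation-style measurements on a single register rather than preparing an entangled reference.

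Granting $\alg_0$, the bootstrapping maintains a running classical estimate $\wh Z_k$, starting from $\wh Z_0=\alg_0(\calU(Z))$, with $\diamnorm{\calU(\wh Z_k)-\calU(Z)}\lesssim \eps_k$ after round $k$. In round $k{+}1$ the residual channel $\calU(\wh Z_k^\dagger Z)$ is $O(\eps_k)$-close to the identity; I form the unitary $(\wh Z_k^\dagger Z)^{p_{k+1}}$ for an integer $p_{k+1}\asymp 1/\eps_k$ chosen so that its channel sits a small constant away from the identity, run $\alg_0$ on that channel, take the \emph{principal} $p_{k+1}$-th root of the returned unitary to get $\wh W_{k+1}$, and set $\wh Z_{k+1}=\wh Z_k\wh W_{k+1}$. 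Since $(\wh Z_k^\dagger Z)^{p_{k+1}}$ has all eigenphases safely inside $(-\pi,\pi)$, the root is unambiguous and root extraction contracts diamond distance by a factor $\asymp 1/p_{k+1}$, so when $\alg_0$ behaves the error drops from $\eps_k$ to $\eps_{k+1}\asymp c_0\,\eps_k$. With the powers chosen so that $\eps_k$ decays geometrically there are $K=O(\log(1/\eps))$ rounds, round $k$ costs $O(\dd^2 p_k)=O(\dd^2/\eps_k)$ queries to $Z$, and the geometric sum is dominated by its last term, yielding the claimed $O(\dd^2/\eps)$; moreover each query still sits inside a state $V_2(ZV_1)^pV_0\ket0$ once one expands $(\wh Z_k^\dagger Z)^{p}$ and absorbs the known $\wh Z_k$ into the $V_i$'s, so only one qudit is ever used. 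This gives the query-complexity and space-complexity claims.

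The delicate part, which I expect to be the main obstacle, is the probabilistic bookkeeping. A single failure of $\alg_0$ in some round could a priori corrupt every later round; the saving grace is \emph{self-stabilization}: $\wh W_{k+1}$ is a principal $p_{k+1}$-th root, so its eigenphases lie in $(-\pi/p_{k+1},\pi/p_{k+1}]$ and it is automatically within $O(1/p_{k+1})=O(\eps_k)$ of the identity however badly $\alg_0$ did---a bad round can fail to improve the estimate but cannot undo earlier progress. To also rule out ``wrap-around'' in the following round I would pick $p_{k+1}$ adaptively from a cheap one-qudit re-estimate of the current residual error (a few phase-estimation-style measurements on $\wh Z_k^\dagger Z$, costing $O(\dd)$ queries and hence negligible). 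Then, with $\alg_0$ having a small constant failure probability, I claim the final squared diamond error is $O(\eps^2)$ in expectation; after rescaling $\eps$ this is the output-closeness claim, and Markov's inequality turns it into the stated high-probability bound. One must check that re-doing the occasional round keeps the \emph{expected} query cost at $O(\dd^2/\eps)$---this holds because the realized error sequence still contracts geometrically in expectation.

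For the remaining items: the gate-complexity claim is routine once one checks that all fixed unitaries used---change-of-basis maps, an element of a unitary $1$-design used for symmetrization (below), and the accumulated estimates $\wh Z_k$---are implementable, exactly or to sufficient precision, with $\poly(\dd,1/\eps)$ one- and two-qubit gates, and that the classical work (matrix roots, multiplications) is polynomial. The channel-closeness claim needs one more idea: a mixed-unitary channel at typical diamond distance $\delta$ from $\calU(Z)$ is in general only $O(\delta)$ from it, not $O(\delta^2)$, so deducing $\diamnorm{\calM-\calU(Z)}\le\eps^2$ from $\E[\diamnorm{\calU(\wh{\bZ})-\calU(Z)}^2]\le\eps^2$ requires the estimator to be unbiased to first order---the linear-in-error part of $\E[\calU(\wh{\bZ})]-\calU(Z)$ must cancel. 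I would arrange this by symmetrizing $\alg_0$ over conjugation by a unitary $1$-design (making the estimator covariant, hence first-order unbiased at the identity) and then verifying that this cancellation survives composition through the $O(\log(1/\eps))$ rounds; what is left is then the genuine second-order term, of size $O(\E[\diamnorm{\calU(\wh{\bZ})-\calU(Z)}^2])=O(\eps^2)$.
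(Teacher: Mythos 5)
Your high-level plan is the same as the paper's: a constant-error base learner plus a root-taking bootstrap on powered residuals $(Z\bV_j^\dagger)^{p_j}$ (the contraction you invoke is exactly \cref{lem:sqrt}, and your ``self-stabilization'' observation is the paper's graceful-failure footnote in \cref{thm:error-boosting}), and the channel claim via symmetrization to unitary covariance plus first-order cancellation is the paper's \cref{prop:conv-to-SAR}. But two steps are genuine gaps rather than deferred routine work. First, the one-qudit, constant-error base learner is asserted, not obtained. Choi-state tomography needs an entangled reference register, so it is excluded by the space constraint, and ``phase-estimation-style measurements on a single register'' is not a known substitute. The paper's base routine (\cref{thm:base}) is column-by-column pure-state tomography on $Z\ket{j}$, and even at constant target accuracy it needs two ingredients you do not supply: a random-matrix argument that the Haar-random per-column error vectors do not compound into an extra $\sqrt{\dd}$ factor in operator norm, and the Fourier-transform trick of \cref{prop:weaker-base} to recover the relative phases, since each column is only learnable up to an unknown phase. (Also, for the covariance step a unitary $1$-design is not obviously enough: the change of variables that makes the wrapped algorithm covariant uses invariance of the Haar measure under arbitrary group translations, which a finite $1$-design does not provide.)

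Second, the probabilistic bookkeeping as you state it fails. With a constant per-round failure probability, a failure at round $k$ freezes the error at roughly $2^{-k}$, so early-round failures contribute $\Omega(1)$ to $\E[\dist(Z,\bU\uone)^2]$, not $O(\eps^2)$. The paper resolves this by boosting each round with the median trick (\cref{prop:confidence}) to failure probability $\eta_j=\eta\,8^{j-T-1}$ --- tiny in early rounds, looser later --- which simultaneously keeps the total query count at $O(\dd^2/\eps)$ with no $\log\log(1/\eps)$ overhead and makes the failure contribution to the expectation sum to $O(\eta\eps^2)$. Your alternative, detecting failures by a ``cheap one-qudit re-estimate of the residual error with $O(\dd)$ queries'' and redoing rounds, is unsupported: certifying that the residual is within a constant of the identity in diamond norm amounts to bounding its extremal eigenphase, which even with the paper's eigenvalue routine (\cref{thm:esteigs}) uses two qudits plus a qubit and $\Omega(\dd\log\dd)$ applications, and is not known to be achievable with a single qudit and no controlled access; moreover the verifier's own error probability must be folded into the very expectation bound you are proving. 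Without a per-round confidence schedule of the paper's type (or an equivalently justified detect-and-redo analysis), the closeness-of-output bound, and hence the $\eps^2$ channel bound built on it, does not follow.
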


\noindent
Complementary to these performance guarantees, we also show the following lower bound, which also holds against algorithms that can use $Z^\dagger$ and controlled versions of $Z$ and $Z^\dagger$.

\begin{restatable}[Lower bound]{theorem}{lowerbound}
    \label{thm:lowerbound}
    Let $\alg$ be an algorithm that, for an unknown $\dd$-dimensional unitary $Z \in \UU(\dd)$ accessible through black box oracles that implement $Z$, $Z^\dagger$, $\controlled Z = \ket 0 \bra 0 \otimes \id + \ket 1 \bra 1 \otimes Z$, and $\controlled Z^\dagger = \ket 0 \bra 0 \otimes \id + \ket 1 \bra 1 \otimes Z^\dag$, can output a classical description of a unitary channel $\calU(\wh{\bZ})$ such that $\diamnorm{\calU(\wh{\bZ}) - \calU(Z)} < \eps < \frac{1}{8}$ with probability $\geq \tfrac 2 3$.
    Then $\alg$ must use $\Omega(\frac{\dd^2}{\eps})$ oracle queries.
\end{restatable}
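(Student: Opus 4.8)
The plan is to plant a dense cluster of hard instances just beside the identity, argue that any algorithm meeting the stated guarantee must essentially pin down which instance it received, and then show that each oracle call advances it toward that identification only at a rate capped by representation theory.

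\emph{Hard instances and the information target.}
Let $P_1,\dots,P_N$ with $N=\dd^2-1$ be the Hermitized generalized Pauli operators on $\CC^\dd$: traceless, $\Tr(P_jP_k)=\dd\,\delta_{jk}$, $\opnorm{P_j}\le 1$, and $\sum_j P_j^2=(\dd^2-1)\id$. For $\theta$ in a Euclidean ball $B\subseteq\RR^N$ of radius $r=\Theta(1/\dd)$ about the origin, set $Z_\theta=\exp\!\big(\mathrm i\sum_j\theta_jP_j\big)\in\UU(\dd)$. Since the generator has operator norm $O(r\sqrt N)=O(1)$ on $B$, standard estimates give $\diamnorm{\calU(Z_\theta)-\calU(Z_{\theta'})}=\Theta\!\big(\opnorm{\sum_j(\theta_j-\theta'_j)P_j}\big)$, which is squeezed between $\twonorm{\theta-\theta'}$ and $\sqrt\dd\,\twonorm{\theta-\theta'}$ because $\opnorm{M}\in[\fnorm{M}/\sqrt\dd,\ \fnorm{M}]$; hence an $\eps$-close diamond-distance estimate of $\calU(Z_\theta)$ yields an $\ell_2$-estimate of $\theta$ to accuracy $\eps$. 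Taking a maximal $2\eps$-separated subset $\mathcal S\subseteq B$ --- so $\log|\mathcal S|=\Omega(\dd^2)$ by a volume bound --- and running $\alg$ on a uniformly random instance of $\{\calU(Z_\theta):\theta\in\mathcal S\}$, correctness with probability $\tfrac23$ and Fano's inequality force the algorithm's transcript to carry $\Omega(\dd^2)$ bits of information about $\theta$.

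\emph{Charging the queries.}
By the (standard) optimality of parallel strategies for unitary-channel estimation, we may assume the post-query state is $\ket{\Psi_\theta}=(Z_\theta^{\otimes q}\otimes\id_\anc)\ket\Phi$ for a fixed probe $\ket\Phi$; the analysis below adapts mutatis mutandis to slots using $Z_\theta^\dagger$ or the controlled variants. Schur--Weyl duality decomposes $(\CC^\dd)^{\otimes q}=\bigoplus_\lambda V_\lambda\otimes M_\lambda$ over Young diagrams $\lambda$ with $q$ boxes and at most $\dd$ rows, with $Z_\theta^{\otimes q}=\bigoplus_\lambda\pi_\lambda(Z_\theta)\otimes\id$. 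Two facts drive everything: (i) the derived generators satisfy $\opnorm{\mathrm d\pi_\lambda(P_j)}\le q$, which is the origin of the Heisenberg $1/\eps$ factor; and (ii) $\sum_j\mathrm d\pi_\lambda(P_j)^2$ is the quadratic Casimir, a scalar of size at most $(\dd-1)q(q+\dd)$, which caps the total sensitivity $\sum_j\|\partial_{\theta_j}\ket{\Psi_\theta}\|^2$ of the probe to the $\dd^2$ parameters. Feeding (i)--(ii) into a quantum Cram\'er--Rao / van~Trees estimate for the $\ell_2$-error already yields a bound of the shape $q=\Omega(\dd^{3/2}/\eps)$.

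\emph{The main obstacle: the missing $\sqrt\dd$.}
Upgrading this to the sharp $q=\Omega(\dd^2/\eps)$ must exploit that the $\dd^2$ parameters are generated by \emph{mutually noncommuting} observables, so no single measurement on $\ket{\Psi_\theta}$ can harvest Fisher information about all of them at the Cram\'er--Rao-optimal rate simultaneously. I would close the gap with a Gill--Massar-type inequality --- a unitary-channel analogue of the pure-state bound $\Tr\!\big(F_Q^{-1}F_{\mathrm{cl}}\big)\le\dd-1$ --- which replaces the crude trace bound on the quantum Fisher information by a bound on the \emph{classical} Fisher information of the measurement actually used, and hence (through Fano or van~Trees) by the claimed rate; a direct hybrid/Leibniz argument on the parallel probe, tracked block by block in $\lambda$, is a plausible alternative. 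Making this incompatibility bookkeeping lossless in $\dd$, rather than settling for $\dd^{3/2}$ or $\dd$, is the step I expect to require the most care. By contrast, robustness to $Z^\dagger$ and to $\controlled Z,\controlled Z^\dagger$ costs nothing: they share the generators up to sign and preserve the relevant norms and the $\lambda$-block structure (e.g.\ $\opnorm{\controlled Z_\theta-\controlled Z_{\theta'}}=\opnorm{Z_\theta-Z_{\theta'}}$, and the generator of $\controlled Z_\theta$ is $\ket{1}\!\bra{1}\otimes\big(\mathrm i\sum_j\theta_jP_j\big)$ of the same operator norm), so none of the estimates above degrade.
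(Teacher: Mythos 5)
Your proposal is not a proof: the step that actually carries the theorem is missing. Your own accounting shows that the quantum-Fisher/van Trees machinery, with the generator-norm bound $\opnorm{\mathrm{d}\pi_\lambda(P_j)}\le q$ and the Casimir bound on $\sum_j \mathrm{d}\pi_\lambda(P_j)^2$, yields only $q=\Omega(\dd^{3/2}/\eps)$; the upgrade to $\Omega(\dd^2/\eps)$ is deferred to a hoped-for ``Gill--Massar-type inequality'' for unitary channels, which you do not state or prove. This is precisely the hard part of the theorem (the paper itself notes that parameter counting and Heisenberg scaling ``do not combine naturally''), and it is far from clear the proposed tool can deliver it: the Gill--Massar bound $\Tr(F_Q^{-1}F_{\mathrm{cl}})\le \dim-1$ applied to the actual probe, which lives in $(\CC^\dd)^{\otimes q}$, gives $\dim = \dd^{q}$ and is vacuous, so one would need a genuinely new, structure-exploiting incompatibility bound --- exactly the lossless-in-$\dd$ bookkeeping you flag as open. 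Two further holes compound this. First, your hard family $Z_\theta=\exp(\ii\sum_j\theta_jP_j)$ with $\twonorm{\theta}\lesssim 1/\dd$ has diamond-diameter $O(1)$ only after the linearization, and its $2\eps$-packing has $\exp(\Omega(\dd^2))$ points only when $\eps\lesssim 1/\dd$; the theorem covers all $\eps<\tfrac18$, and in the regime $1/\dd\lesssim\eps<\tfrac18$ (where the claimed bound ranges from $\dd^2$ up to $\dd^3$) your construction packs nothing. Second, the reduction ``we may assume the post-query state is $(Z_\theta^{\otimes q}\otimes\id)\ket\Phi$'' is not justified in this setting: optimality of parallel strategies for unitary estimation is known for covariant figures of merit and plain queries, not for arbitrary adaptive algorithms that may also invoke $Z^\dagger$, $\controlled Z$, $\controlled Z^\dagger$ and are judged by a Fano-type criterion over a non-covariant discrete ensemble; controlled access in particular is strictly stronger than plain access and cannot be waved away at the level of generator norms alone.

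For contrast, the paper's proof avoids Fisher information entirely. It builds a net of $\exp(\Omega(\dd^2))$ reflections at constant pairwise diamond distance, sets $Z=R^\alpha$ with $\alpha=\Theta(\eps)$, and shows an $\eps$-accurate estimate of $Z$ identifies $R$ (by taking the $1/\alpha$ power). The two pillars are: (i) the Bavaresco--Murao--Quintino tester bound, $\frac1N\sum_x\Pr[\hat x=x\,|\,x]\le\frac1N\binom{Q+\dd^2-1}{Q}$, which holds for arbitrary adaptive strategies and remains meaningful even at exponentially small success probability; and (ii) a sharpened fractional-to-discrete query reduction (a truncation of the BCCKS ancilla state to low Hamming weight) converting a $Q$-query algorithm for $Z=R^\alpha$ into an $O(\alpha Q)$-query algorithm for $\controlled R$ whose success probability is only $\tfrac12 e^{-\pi\alpha Q}$ --- which (i) still rules out unless $\alpha Q=\Omega(\dd^2)$, i.e.\ $Q=\Omega(\dd^2/\eps)$. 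If you want to salvage your route, the reduction-to-identification framing plus a success-probability bound that tolerates exponentially small success (rather than Fano/Cram\'er--Rao, which do not) is the ingredient to import.
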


Due to subtle differences in assumptions and objectives,
we give comparison to the prior art 
after we properly define mathematical notions.
In brief: standard process tomography for general channels~\cite{leung2000towards} uses one qudit but requires a $1/\eps^2$ error dependence, which we improve quadratically.\footnote{We were not able to locate a $O(\dd^2/\eps^2)$-query process tomography algorithm for diamond-norm distance in the literature; we prove this result in \cref{sec:base}.}
Most prior work takes the accuracy measure to be either the one in \cref{item:upperbound-channel} of \cref{thm:upperbound}, referred to in this work as the problem of \emph{Approximate Storage-And-Retrieval}, or the closeness of the output unitary in \emph{entanglement infidelity}.
Prior work of Yang, Renner, and Chiribella~\cite{Yang2020} gives an algorithm that applies the unknown $Z$ in parallel to achieve the current best guarantees for these problems.
In particular, it recovers \cref{thm:upperbound}.\ref{item:upperbound-channel} with the same query complexity.
However, these problems are easier than that of unitary estimation to diamond-norm distance: norm conversion gives an algorithm for diamond-norm distance with $O(\dd^{2.5}/\eps)$ applications.
We improve on this by a factor of $O(\sqrt{\dd})$, and improve the number of qudits used from $O(\dd^{2.5}/\eps)$ to $1$.

\paragraph{Notation.}
Throughout the paper, 
all $\log$s have base $e = \sum_{k \ge 0} \frac 1 {k!} \approx 2.718$, and $\exp(x) = e^x = \sum_{j = 0}^\infty \frac{x^j}{j!}$ for any $x \in \RR$.
The dimension of a qudit is denoted by~$\dd \ge 1$, which we sometimes take to be $\nn$ qubits, so that $\dd = 2^\nn$.
For matrices, $\opnorm \cdot$ is the largest singular value (also known as the operator norm or spectral norm), $\onenorm \cdot$ is the trace norm (also known as the Schatten 1-norm), and $\fnorm \cdot$ is the Frobenius norm.
For a vector $v \in \CC^\dd$, $\diag(v) \in \CC^{\dd\times \dd}$ denotes the matrix with $v$ on the diagonal.
We use~$\id$ to denote the identity element of~$\UU(\dd)$.

\subsection{Preliminaries: Distances for unitary (and other) channels} \label{sec:norms}

We are mainly interested in distance measures between unitary channels.
However, for comparison with prior work we need to consider distances between general channels.
For more information on the distances discussed herein, the reader may consult, e.g.,~\cite{gilchrist2005distance,yuan2017fidelity}.
We defer proofs to \cref{sec:norm-proofs}.

A standard way to measure distances between channels $\calE_1,\calE_2$ operating on a register ``$A$'' is to choose a distance~$d({\cdot},{\cdot})$ on mixed quantum \emph{states} and then consider
\begin{equation} \label{eqn:disty}
    \sup_B \max_{\rho_{AB}} \{d((\calE_1 \otimes \id_B)(\rho_{AB}), (\calE_2 \otimes \id_B)(\rho_{AB}))\},
\end{equation}
where $B$ is an ancilla register and $\rho_{AB}$ is a mixed state on both registers.
When $d$ is trace distance, this leads to the diamond norm:

\begin{definition}\label{def:diam-norm}
    When the state metric in \Cref{eqn:disty} is trace distance, $d(\rho_1, \rho_2) = \half \onenorm{\rho_1 - \rho_2}$, we obtain the \emph{diamond-norm distance}, denoted $\dist_\diamond(\calE_1, \calE_2) = \half \diamnorm{\calE_1 - \calE_2}$.
\end{definition}

The diamond-norm distance is a natural choice for measuring \emph{worst-case} error.
It has direct operational meaning:
if a channel sitting in a quantum circuit is replaced by another channel 
that is $\eps$-different in diamond-norm,
then the output distribution of the algorithm is at most $\eps$-different from the original in total variation distance.
Between unitary channels, the diamond-norm distance 
is equivalent (up to constants) to other familiar metrics on unitary matrices.
We will switch between these metrics throughout the paper depending on which is convenient.
One such metric is operator norm distance up to phase, which we use in \cref{sec:base}.

\begin{definition}\label{def:p-op-norm}
    For unitaries $U, V \in \UU(\dd)$ we define
    \begin{equation}
        \pudist{U}{V} = \min_{\phi \in \uone} \|\phi U - V\|_{\mathrm{op}}.
    \end{equation}
\end{definition}

\noindent
We must take distance up to phase because a unitary channel only specifies a corresponding unitary up to global phase; for example, $U$ and $-U$ perform identical operations on quantum states.
Another equivalent metric is the intrinsic metric on the projective Lie group $\PPUU(\dd) = \UU(\dd)/(\uone\id)$, which we use in \cref{sec:geometry}.

\begin{definition}\label{def:pu-lie-norm}
    We consider the length of a path on the Lie group $\UU(\dd)$ with respect to the operator norm on the tangent space~$\uu(\dd)$.
    For unitaries $U, V \in \UU(\dd)$, we define $\dist(U, V)$ to be the length of a shortest smooth 
    path~$\gamma: [0,1] \to \UU(\dd)$ between them ($\gamma(0) = U, \gamma(1) = V$).
    As an equation,
    \begin{equation}
        \dist(U, V) = \inf_{\text{paths}\,\gamma} \operatorname{len}(\gamma),
        \qquad \text{where}\,\operatorname{len}(\gamma) = \int_0^1  \opnorm{\gamma'(t) \gamma(t)^{-1}} \rd t = \int_0^1  \opnorm{\gamma'(t)} \rd t .
    \end{equation}
    We further define the corresponding metric on the projective group $\PPUU(\dd)$ via this metric.
    \begin{equation}
        \dist(U, \uone V) = \inf_{\phi \in \uone} \dist(U, \phi V)
    \end{equation}
\end{definition}

\noindent An alternative way to define $\dist(U, V)$ is that it is the unitarily invariant metric satisfying $\dist(\id, \diag(e^{\ii \phi_1}, \ldots, e^{\ii \phi_\dd})) = \max_{k \in [\dd]} \abs{\phi_k}$ for $\phi_1,\ldots,\phi_\dd \in [-\pi, \pi)$.%
\footnote{
Taking this definition as a starting point, 
one may find it involved to prove triangle inequality,
which is trivial under the definition by the minimization over paths.
We will later see that these two definitions coincide.
}
That is, $\dist(U, V)$ is the largest angle an eigenvalue of $U^\dagger V$ forms with $1$.

\begin{restatable}[Equivalence of diamond norm, operator norm, intrinsic Lie metrics]{proposition}{normeq} \label{prop:norm-equivalence}
    For unitaries $U, V \in \UU(\dd)$, the metrics $\diamnorm{\calU(U) - \calU(V)}$, $\pudist{U}{V}$, and $\dist(U, \UU(1)V)$ are equivalent up to constants.
    Specifically, the following inequalities hold.
    \begin{gather}
        \pudist{U}{V} \leq \dist(U, \uone V) \leq \tfrac{\pi}{2}\pudist{U}{V} \label{eq:pudist-lie} \\
        \half\diamnorm{\calU(U) - \calU(V)} \leq \pudist{U}{V} \leq \diamnorm{\calU(U) - \calU(V)} \label{eq:diam-pudist}
    \end{gather}
\end{restatable}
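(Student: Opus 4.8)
The plan is to exploit unitary invariance to reduce everything to a diagonal unitary, and then compute each of the three quantities explicitly enough to compare them. First I would observe that all three metrics are invariant under the substitution $(U,V)\mapsto (AUB,AVB)$ for any $A,B\in\UU(\dd)$: for $\pudist{\cdot}{\cdot}$ this is bi-invariance of the operator norm; for the diamond norm it is invariance under pre- and post-composition with unitary channels; and for $\dist(\cdot,\cdot)$ it is because $A\gamma B$ has the same length as $\gamma$, using conjugation-invariance of $\opnorm\cdot$. Choosing $A,B$ so that $AVB=\id$ and $AUB = D := \diag(e^{\ii\theta_1},\dots,e^{\ii\theta_\dd})$ with $\theta_k\in(-\pi,\pi]$, it suffices to prove \eqref{eq:pudist-lie} and \eqref{eq:diam-pudist} in the special case $U=D$, $V=\id$.

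For \eqref{eq:pudist-lie}: the lower bound follows from the elementary fact that $\opnorm{X-Y}\le\dist(X,Y)$ for all $X,Y\in\UU(\dd)$ (write $X-Y=\int_0^1\gamma'(t)\,\rd t$ over any connecting path and take the infimum), applied to $X=\phi D$, $Y=\id$ and minimized over $\phi\in\uone$. For the upper bound, let $\phi^{*}=e^{\ii\beta^{*}}$ achieve $\pudist{D}{\id}=\opnorm{\phi^{*}D-\id}$, write $\phi^{*}D=\diag(e^{\ii\psi_k})$ with each $\psi_k\in(-\pi,\pi]$, and use the path $\gamma(t)=\diag(e^{\ii t\psi_k})$ from $\id$ to $\phi^{*}D$; it has length $\max_k\abs{\psi_k}$, and Jordan's inequality gives $\abs{\psi_k}\le\tfrac\pi2\abs{e^{\ii\psi_k}-1}\le\tfrac\pi2\,\pudist{D}{\id}$, so $\dist(D,\uone\id)\le\dist(\id,\phi^{*}D)\le\tfrac\pi2\,\pudist{D}{\id}$ by left-invariance.

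For \eqref{eq:diam-pudist} the key intermediate step is the closed form $\half\diamnorm{\calU(D)-\calU(\id)}=\sqrt{1-\nu^{2}}$, where $\nu:=\dist\!\big(0,\conv\{e^{\ii\theta_1},\dots,e^{\ii\theta_\dd}\}\big)$. This follows by restricting the optimization over input states to pure states $\ket v$ on $A\otimes B$ (convexity of $\onenorm\cdot$, with $\dim B=\dd$), using $\half\onenorm{\ket a\!\bra a-\ket b\!\bra b}=\sqrt{1-\abs{\braket{a|b}}^{2}}$ with $\ket a=(D\otimes\id)\ket v$, $\ket b=\ket v$, and noting $\braket{a|b}=\overline{\Tr(D\rho_A)}$ with $\rho_A=\Tr_B\ket v\!\bra v$; since $\Tr(D\rho_A)=\sum_k e^{\ii\theta_k}(\rho_A)_{kk}$ and $\big((\rho_A)_{kk}\big)_k$ ranges over all probability vectors, $\inf_v\abs{\Tr(D\rho_A)}=\nu$. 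It then remains to show $\sqrt{1-\nu^{2}}\le\pudist{D}{\id}\le 2\sqrt{1-\nu^{2}}$. For the right inequality, if $\nu>0$ the nearest point of the polygon to the origin is $\nu e^{\ii\beta_0}$ for some $\beta_0$, and the projection inequality gives $\cos(\theta_k-\beta_0)\ge\nu$ for all $k$; hence $\pudist{D}{\id}\le\max_k\abs{e^{\ii(\theta_k-\beta_0)}-1}=\max_k 2\abs{\sin\tfrac{\theta_k-\beta_0}{2}}\le\sqrt{2(1-\nu)}\le 2\sqrt{1-\nu^{2}}$, and if $\nu=0$ the bound is trivial since $\pudist{D}{\id}\le 2=2\sqrt{1-\nu^2}$. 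For the left inequality, if $\pudist{D}{\id}\ge\sqrt2$ then $\sqrt{1-\nu^{2}}\le 1\le\pudist{D}{\id}$; otherwise let $\beta^{*}$ be optimal, so every $\theta_k+\beta^{*}$ satisfies $2\abs{\sin\tfrac{\theta_k+\beta^{*}}{2}}\le\pudist{D}{\id}<\sqrt2$, placing all $e^{\ii(\theta_k+\beta^{*})}$ in an arc of half-width $\delta:=2\arcsin(\pudist{D}{\id}/2)<\tfrac\pi2$, whence $\nu\ge\cos\delta=1-\pudist{D}{\id}^{2}/2$ and therefore $1-\nu^{2}\le\pudist{D}{\id}^{2}$.

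I expect the main obstacle to be pinning down the constants in \eqref{eq:diam-pudist}, especially the factor $2$ in $\pudist{D}{\id}\le\diamnorm{\calU(D)-\calU(\id)}$, which forces the convex-geometry argument (projection of the origin onto the polygon of eigenvalues, supporting hyperplane, and the case split on whether the origin lies in the hull); everything else is routine once the closed form $\half\diamnorm{\calU(D)-\calU(\id)}=\sqrt{1-\nu^{2}}$ is in hand. Note this plan only uses the easy direction $\opnorm{X-Y}\le\dist(X,Y)$ and an explicit upper-bounding path for $\dist$, so it does not require first establishing the equivalence of the two definitions of $\dist$ alluded to in the footnote.
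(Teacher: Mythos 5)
Your proposal is correct, and at its core it runs parallel to the paper's own proof: both reduce, via bi-invariance, to the eigenvalues of $U^\dagger V$, and both hinge on the same closed form for the diamond norm, namely $\tfrac12\diamnorm{\calU(U)-\calU(V)} = \sqrt{1-\nu^2}$ with $\nu$ the distance from the origin to the convex hull of the eigenvalues (the paper's \cref{eq:diamond-hull}, which it gets by citing Watrous's isometry formula rather than your purification-plus-convexity derivation). Where you genuinely diverge is in what happens after that. The paper parametrizes everything by the spread $\sigma$ of the spectrum and writes all three quantities exactly --- $\diamnorm{\cdot}=2\sin(\sigma/2)$ (capped at $2$), $\pudist{U}{V}=2\sin(\sigma/4)$, $\dist(U,\uone V)=\sigma/2$ --- so \eqref{eq:diam-pudist} reduces to elementary sine inequalities, and \eqref{eq:pudist-lie} is obtained by minimizing \cref{eq:metricEq} over a global phase, which in turn rests on the exact identity $\dist(\id,e^X)=\opnorm{X}$ (\cref{eq:dist-by-eigenvalues}), whose proof needs the geodesic-on-the-sphere lower bound. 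You instead stay at the level of two-sided inequalities: for \eqref{eq:diam-pudist} you sandwich $\pudist{U}{V}$ between $\sqrt{1-\nu^2}$ and $2\sqrt{1-\nu^2}$ using the supporting-hyperplane/projection inequality at the nearest point of the hull plus a case split on whether the origin lies inside (and your bounds are in fact tight against the paper's exact formulas, e.g.\ $\sqrt{2(1-\nu)}=2\sin(\sigma/4)$ when $\sigma<\pi$); and for \eqref{eq:pudist-lie} you only use the easy direction $\opnorm{X-Y}\le\dist(X,Y)$ together with an explicit diagonal path and Jordan's inequality, deliberately avoiding the harder half of \cref{lem:Szarek}. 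The trade-off: the paper's route yields exact expressions and reuses a lemma it must prove anyway for the bootstrap analysis, while yours is more self-contained (no geodesic argument, no appeal to the equivalence of the two definitions of $\dist$) at the cost of a bit more convex-geometry bookkeeping and case analysis.
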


Other work on unitary estimation and ``Storage-and-Retrieval'' problem, defined below, considers an accuracy measure that is \emph{not} equivalent: the \emph{entanglement (in)fidelity}.
This is the accuracy measure that naturally arises when one estimates a channel by performing state tomography on its Choi state.
Another way to recover this norm is to consider the channel norm induced by Bures distance (square-root of infidelity), i.e.\ taking $d$ to be Bures distance in \Cref{eqn:disty}, but rather than maximizing over all~$\rho_{AB}$ one simply fixes $\rho_{AB}$ to be the maximally entangled state.\footnote{When $\calE_1$ is unitary, the entanglement infidelity is the same as the so-called  \emph{average gate infidelity} up to the small constant factor of $\dd/(\dd+1)$~\cite{horodecki1999general}. Without this restriction of fixing $\rho_{AB}$, the resulting distance is an alternative distance studied in~\cite{yuan2017fidelity}.  It is termed the \emph{minimum gate fidelity} when~$\calE_1$ is unitary.}
\begin{definition}
    The \emph{entanglement infidelity} between $\dd$-dimensional channels $\calE_1, \calE_2$ is defined to be 
    \begin{equation}
        \ol{F}(\calE_1,\calE_2) = 1 - F(J(\calE_1), J(\calE_2)),
    \end{equation}
    where $F$ is squared quantum state fidelity $F(\rho,\sigma) = (\onenorm{\sqrt \rho \sqrt \sigma})^2$
    and $J(\calE)$ is the Choi state~$(\calE_A \otimes \id_B)(\tfrac{1}{\dd}\sum_{i,j} \ket{ii}\bra{jj})$ of~$\calE$.  
\end{definition}

\noindent
The notation $\ol{F}({\cdot},{\cdot})$ is not standard.
In case $\calE_1 = \calU(U)$ and $\calE_2 = \calU(V)$ for $U,V \in \UU(\dd)$,
we have a simpler formula
\begin{equation} \label{eq:entfid}
    \ol{F}(\calU(U),\calU(V)) = 1 - \abs*{\tfrac{1}{\dd} \Tr(U^\dag V)}^2.
\end{equation}

\begin{remark}
    $\ol{F}(\cdot,\cdot)$ is \emph{not} a metric on~$\PPUU(\dd) = \UU(\dd) / (\uone \id)$, but it is related to one:
    \begin{equation}
        \sqrt{ 1 - \sqrt{1 - \ol F(\calU(U), \calU(V))} }
        = \tfrac{1}{\sqrt{2\dd}}\min_{\phi \in \uone}\fnorm{U - \phi V}
    \end{equation}
    is a metric (which is roughly $\sqrt{\ol F / 2}$ for small~$\ol F$).
    The Frobenius norm on the right-hand side can be thought of as the two-norm distance between the Choi states of $U$ and $V$, treated as pure state vectors, not density matrices, minimized over global, unphysical phase factors.
\end{remark}

An elementary argument shows that estimating a unitary channel to entanglement infidelity $\eps$ only implies estimating it to $\sqrt{\dd\eps}$ error in diamond-norm distance.
The factor of $\sqrt{\dd}$ reflects that entanglement infidelity is an \emph{average-case} accuracy measure, whereas diamond norm is a worst-case accuracy measure.
For example, the entanglement infidelity between 
the multiply controlled NOT gate $C^k X$ (\emph{e.g.}, the usual CNOT if $k=1$ or Toffoli if $k=2$)
and the identity is $O(2^{-k})$ for large~$k$,
but the diamond-norm distance between the two is~$1$.

\begin{restatable}{proposition}{fiddiam} \label{prop:ineq1}
    For all $\dd$ and any unitaries $U,V \in \UU(\dd)$,
    \begin{equation}
        4\, \ol{F}(\calU(U),\calU(V)) \leq \diamnorm{\calU(U) - \calU(V)}^2 \leq 2\dd\cdot\ol{F}(\calU(U),\calU(V)).
    \end{equation}
    The left-hand side is sharp for $\dd$ even and the right-hand side is sharp for all~$\dd \ge 2$:
    \begin{equation}
    \sup_{U \notin \uone V} \frac{\ol{F}(\calU(U),\calU(V))}{\diamnorm{\calU(U) - \calU(V)}^2} = \frac 1 4 \text{ if $\dd$ is even, and}\quad
    \sup_{U \notin \uone V} \frac{\diamnorm{\calU(U) - \calU(V)}^2}{\ol{F}(\calU(U),\calU(V))} = 2\dd.
    \end{equation}
\end{restatable}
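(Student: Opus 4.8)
The plan is to reduce both bounds to elementary facts about the spectrum of $W := U^\dagger V$ on the unit circle. Since $W$ is unitary, write its eigenvalues as $\lambda_1, \dots, \lambda_\dd$, and set $\mu := \abs{\tfrac1\dd\Tr W} = \abs{\tfrac1\dd\sum_k \lambda_k} \in [0,1]$ and $\nu := \dist(0, \conv\{\lambda_1,\dots,\lambda_\dd\}) \in [0,1]$. By \eqref{eq:entfid}, $\ol F(\calU(U),\calU(V)) = 1 - \mu^2$. On the other side, I would recall the standard formula for the diamond distance of unitary channels: evaluating $\calU(U) - \calU(V)$ on a pure input $\ket\psi$ (on the system together with an ancilla) gives the difference of the rank-one projectors onto $(U \otimes \id)\ket\psi$ and $(V\otimes\id)\ket\psi$, whose trace norm is $2\sqrt{1 - \abs{\braket{\psi | W \otimes \id | \psi}}^2}$; maximizing over $\ket\psi$ and using that the numerical range of the normal operator $W\otimes\id$ equals $\conv(\Eig W)$ gives
\[
    \tfrac12 \diamnorm{\calU(U) - \calU(V)} = \sqrt{1-\nu^2}.
\]
All of $\diamnorm{\calU(U) - \calU(V)}$, $\ol F(\calU(U),\calU(V))$, $\mu$, $\nu$ are unchanged if every $\lambda_k$ is multiplied by a common phase (equivalently, if $V$ is rescaled by a phase), which I will use freely. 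So the two claimed inequalities become the geometric statements $\nu^2 \le \mu^2$ and $2(1-\nu^2) \le \dd(1-\mu^2)$ about $\dd$ points on the unit circle.

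The left inequality, $4\ol F \le \diamnorm^2$, is then immediate: the centroid $\tfrac1\dd\sum_k\lambda_k$ lies in $\conv\{\lambda_k\}$, so its modulus $\mu$ is at least $\nu = \dist(0, \conv\{\lambda_k\})$. Equality holds exactly when the centroid is the nearest point of $\conv\{\lambda_k\}$ to the origin; for $\dd$ even this happens for $W$ with eigenvalue $e^{\ii\theta}$ of multiplicity $\dd/2$ and $e^{-\ii\theta}$ of multiplicity $\dd/2$, since then $\conv\{\lambda_k\}$ is the chord between $e^{\pm\ii\theta}$, its midpoint $\cos\theta$ is the centroid, so $\mu = \nu = \cos\theta$ and $4\ol F = \diamnorm^2 = 4\sin^2\theta$, giving $\sup \ol F / \diamnorm^2 = \tfrac14$.

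For the right inequality, $\diamnorm^2 \le 2\dd\,\ol F$, I may assume $W$ is not a scalar (else both sides vanish), so $\nu < 1$. Suppose first $\nu > 0$ and rotate so that the unique nearest point $p$ of $\conv\{\lambda_k\}$ to the origin is $\nu$ on the positive real axis. Then $\{\Real z = \nu\}$ supports $\conv\{\lambda_k\}$, so $\Real\lambda_k \ge \nu$ for every $k$; moreover $p$ lies in the relative interior of an edge (not a vertex, since vertices have modulus $1 > \nu$), and that edge is the chord through $p$ perpendicular to $\overrightarrow{Op}$, hence vertical, so its two endpoint eigenvalues are exactly $\nu \pm \ii\sqrt{1-\nu^2}$. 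Writing the remaining $m := \dd-2$ eigenvalues as $x_k + \ii y_k$ with $\nu \le x_k \le 1$ and $x_k^2 + y_k^2 = 1$, Cauchy--Schwarz applied to $\sum_k y_k$ together with $m\sum x_k^2 - (\sum x_k)^2 = \sum_{j<k}(x_j-x_k)^2 \ge 0$ yields the sharp bound $\abs*{\sum_k \lambda_k}^2 \le (2\nu + m)^2$, hence
\[
    \ol F = 1 - \tfrac1{\dd^2}\abs*{\textstyle\sum_k\lambda_k}^2 \geq \tfrac{4(1-\nu)(\dd-1+\nu)}{\dd^2};
\]
combined with $\diamnorm^2 = 4(1-\nu^2)$, the desired inequality reduces to $(\dd-2)(1-\nu) \ge 0$. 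For $\nu = 0$ (so $\diamnorm^2 = 4$) I would instead use the elementary bound $\abs*{\sum_k\lambda_k} \le \dd - 2$, valid whenever $0 \in \conv\{\lambda_k\}$ --- proved by reducing via Carath\'eodory's theorem to $\dd \le 3$ eigenvalues and checking that case directly --- which gives $\ol F \ge 1 - (\dd-2)^2/\dd^2 \ge 2/\dd$, i.e.\ $\diamnorm^2 \le 2\dd\,\ol F$. Sharpness of the constant $2\dd$ is witnessed by $U = \id$ and $V$ with $\dd-2$ eigenvalues equal to $1$ and the remaining two equal to $e^{\pm\ii\theta}$: a short computation gives $\diamnorm^2/\ol F = \dd^2(1+\cos\theta)/(\dd-1+\cos\theta)$, which equals $2\dd$ for all $\theta$ when $\dd = 2$ and tends to $2\dd$ as $\theta \to 0$ for every $\dd \ge 2$.

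The crux --- and the step I expect to fight with --- is the sharp inequality $\abs{\sum_k\lambda_k} \le 2\nu + \dd - 2$ underlying the right bound: the naive estimate $\nu \le \mu$ alone is off by a factor of $2$, so one genuinely needs the supporting-hyperplane geometry at the nearest point, in particular that two of the eigenvalues lie exactly on the supporting line. The related nuisance is that this geometry degenerates when $0$ is strictly interior to $\conv\{\lambda_k\}$, which is why the case $\nu = 0$ needs the separate Carath\'eodory argument; everything else is bookkeeping.
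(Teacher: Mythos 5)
Your proposal is correct and follows essentially the same route as the paper: both reduce to the spectrum of $U^\dagger V$ via $\ol F = 1 - \abs{\tfrac1\dd\Tr(U^\dagger V)}^2$ and $\diamnorm{\calU(U)-\calU(V)}^2 = 4(1-\nu^2)$ with $\nu$ the distance from $0$ to the convex hull of the eigenvalues, prove the left bound by noting the centroid lies in that hull, obtain the right bound from the same key estimate $\abs{\Tr(U^\dagger V)} \le 2\nu + (\dd-2)$, and exhibit the same extremal configurations for sharpness. The only divergence is cosmetic: you locate the two eigenvalues $\nu \pm \ii\sqrt{1-\nu^2}$ by nearest-point/supporting-line geometry and treat $\nu = 0$ via Carath\'eodory plus a three-point check, whereas the paper parametrizes by the shortest covering arc $\sigma$ (so those two eigenvalues are the arc endpoints $e^{\pm\ii\sigma/2}$) and handles $\sigma > \pi$ with a pigeonhole-plus-parallelogram three-point estimate.
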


\subsection{Unitary estimation and related problems}

We are interested in the following task:
\begin{definition}
    The \emph{estimation task for unitary channels} is the following:  One is given black-box access to a unitary channel by~$Z \in \UU(\dd)$.
    After applying the unitary channel some~$Q$ times, the algorithm should output a classical description\footnote{
        \label{foot:tedium}
        Throughout this paper, for ease of exposition we will be treating these descriptions as if they have infinite precision. To formalize our results when numbers can only be specified up to a machine precision $\macheps$, we consider the sets of unit vectors and unitary matrices that can be generated from $O(\log^3(1/\eps))$-depth quantum circuits.
        This discretizes these spaces into $O(\macheps)$-nets by the Solovay--Kitaev theorem. Then, a \emph{classical description} of a unitary matrix or a unit vector is a list of numbers defining a matrix or vector that need not exactly satisfy the desired constraints, but is $\macheps$-close to an element of the net. When we need to implement a unitary from its classical description, we implement this element. We only care about vectors and matrices up to $\poly(\eps/\dd)$ error, since this is the magnitude of sampling error from state tomography, which is where these classical descriptions come from. So, if we set $\macheps$ to be smaller than this, then the errors from working in finite precision are negligible, and we incur a $\polylog(1/\macheps)$ overhead (which is only $\polylog(\dd/\eps)$) in time complexity. This issue only affects constant factors in our proofs in a very minor way, and we trust the diligent reader to take note of the required changes.
    } of a unitary~$\wh{\bZ} \in \UU(\dd)$.

    The algorithm may be probabilistic and hence $\wh{\bZ}$ should be thought of as a random variable.
    The goal is to achieve 
    \begin{equation}    
        \E[\emeasure(\calU(Z),\calU(\wh{\bZ}))] < \eps.
    \end{equation}
    Here, $\emeasure({\cdot},{\cdot})$ is some accuracy measure between unitary channels.
\end{definition}
Note here we are only using the error measure between two unitary channels.
\begin{remark}\label{rem:confidenceBoosting}
    More generally, one can introduce another parameter~$\eta$ and have the goal be that  $\emeasure(\calU(Z),\calU(\wh\bZ)) \leq \eps$ except with probability at most~$\eta$. 
    As long as $\emeasure({\cdot},{\cdot})$ is a metric, this can be achieved with $O(Q \log(1/\eta))$ applications of~$Z$ as follows:  First, Markov's inequality implies that $\Pr[ \emeasure( \calU(Z), \calU(\wh \bZ) ) > 3\eps ] \le 1/3$.  Then, one can reduce to a general~$\eta > 0$ at the expense of an~$O(\log(1/\eta))$ factor using the ``median trick''; see \Cref{prop:confidence}.
\end{remark}
In order to more carefully quantify the resources needed to solve the unitary estimation task, we generally consider hybrid classical/quantum algorithms.  
These mix classical computation and measurement-outcome-processing with quantum state preparation, applications of~$Z$ and other quantum operations, and measurements. 
Besides the query complexity $Q$, there are some additional resources one wishes to minimize, one of which is the following.
\begin{definition}
    In the task of estimating $Z \in \UU(\dd)$, assume that $\dd = 2^{\nn}$ for some number of qubits~$\nn$.  The \emph{space overhead} of an estimation algorithm is the total number of qubits beyond the minimum,~$\nn$, used in the course of the algorithm.
\end{definition}

There are several tasks closely related to unitary estimation in the literature, going under the names of ``universal programming'', ``unitary cloning'', ``reference frame transmission'', and ``unitary learning''.  For comparison with previous work, it suffices for us to discuss the problem commonly known as ``learning'' a unitary channel. 
However, since ``learning'' and ``estimation'' are often regarded as synonymous in casual speech, we will follow~\cite{sedlak2019optimal} and use the terminology ``Storage-And-Retrieval'', to avoid confusion.

\begin{definition} \label{def:sar}
    The \emph{SAR task for unitary channels} (Storage-And-Retrieval) involves developing two quantum algorithms~\cite{Bisio2009}.  The ``storage'' algorithm~$\calS$ is given black-box access to a unitary $Z \in \UU(\dd)$; after applying the unitary some $Q$~times, $\calS$  should output a possibly mixed quantum state~$\psi$.  The ``retrieval''  algorithm~$\calR$ takes as input~$\psi$ and implements a possibly nonunitary qudit channel $\boldsymbol \calC$, which may be random with some distribution.
    This retrieval may destroy $\psi$.
    The goal is to achieve the following:
    \begin{equation}
        \Pr_{\boldsymbol \calC} [
                \emeasure(\calU(Z), \boldsymbol\calC ) > \eps
            ] \leq \eta.
    \end{equation}
    Note that here $\emeasure({\cdot},{\cdot})$ is measuring the error between a unitary channel and a possibly nonunitary channel.
    The \emph{storage complexity} of the algorithm is the number of qubits used for~$\psi$.  
\end{definition}

\begin{remark}
    The case of minimizing~$\eta$ while insisting on $\eps = 0$ is known as \emph{Probabilistic SAR} and is studied in, e.g.,~\cite{sedlak2019optimal}.
    In this case, the channel~$\boldsymbol\calC$ is necessarily unitary.
    The output of the storage algorithm in~\cite{sedlak2019optimal} is always pure.
    On the other hand, for comparison to prior work, we focus on the case of~$\eta = 0$, which may be termed \emph{Approximate SAR}.  This latter problem is the one commonly known as ``learning'' a unitary.
\end{remark}

We can perform this task with a unitary estimation algorithm by simply taking the estimate's classical description and then synthesizing and applying it.
We provide some notation formalizing this.

\begin{definition}
    If $\bU$ is a $\UU(\dd)$-valued classical random variable, we denote by $\calM(\bU)$ the mixed-unitary channel associated to~$\bU$
    \begin{equation}
        \calM(\bU) = \sum_X \Pr[\bU = X] ~\calU(X).
    \end{equation}   
\end{definition}

Note that if $\bU$ is a unitary-valued random variable,
the notation $\calU(\bU)$ stands for a unitary-channel-valued random variable;
that is, $\calU(\bU)$ is not the same thing as $\calM(\bU)$.
We will now compare unitary estimation and SAR.
An instructive example to keep in mind is the case where an algorithm $\alg$ takes a black box for the identity channel $Z = \id$ and outputs the matrix $\wh{\bZ}$, which is $[\begin{smallmatrix}1 \\ & e^{\ii \eps} \end{smallmatrix}]$ or $[\begin{smallmatrix}1 \\ & e^{-\ii \eps} \end{smallmatrix}]$ with probability $\frac12$.
One can verify that $\E[\diamnorm{\calU(Z) - \calU(\wh{\bZ})}] = \Theta(\eps)$ and $\diamnorm{\calU(Z) - \calM(\wh{\bZ})} = \Theta(\eps^2)$, so by the following remark, on this input $Z = \id$ with diamond-norm distance as the error measure, $\alg$ performs unitary estimation to error $\Theta(\eps)$ and Approximate SAR to error $\Theta(\eps^2)$.

\begin{remark} \label{rem:est-to-SAR}
    A unitary estimation algorithm~$\alg$ may be turned into an SAR algorithm with the same $\eps$~and~$\eta$ parameters: the storage algorithm~$\calS$ is~$\alg$, with~$\psi$ being the mixed state of orthogonal quantum states each of which is classical and encodes an output~$\wh{\bZ}$ of~$\alg$; and the retrieval algorithm~$\calR$ simply synthesizes and applies~$\boldsymbol\calC = \calU(\wh{\bZ})$ from its classical description. Here, the channel $\boldsymbol\calC$ is \emph{unitary}.

    More subtly, we can also turn $\alg$ 
    into an Approximate SAR algorithm as follows.
    We let the storage algorithm the same as before,
    but let the retrieval algorithm implement a mixed unitary $\calC = \calM( \wh \bZ )$
    with probability one.
    This reduction is identical to the previous one, with the difference being that we treat the randomness of $\wh \bZ$ as part of the channel being applied.
    The accuracy of this Approximate SAR is given by
    \begin{equation}    \label{eqn:avgerror}
        \emeasure( \calU(Z), \calM(\wh{\bZ})).
    \end{equation}
    This error measure is different in general from the error measure in the unitary estimation algorithm's promise, $\E_{\wh \bZ} \emeasure(\calU(Z), \calU(\wh \bZ) ) \le \eps$.
\end{remark}

If $\emeasure$ is diamond-norm distance, then $\emeasure(\calU(Z), \calM(\wh \bZ)) \le \E_{\wh \bZ} \emeasure( \calU(Z), \calU(\wh \bZ)) \leq \eps$ by the convexity of the norm, so in this setting Approximate SAR reduces to unitary estimation.
However, $\emeasure(\calU(Z), \calM(\wh \bZ))$ can in fact be made much smaller, to $O(\eps^2)$; see \cref{prop:conv-to-SAR}.
We can think about Approximate SAR as, roughly, the version of unitary estimation where we only need to be right ``in expectation''.
In particular, though the output of the unitary estimation algorithm $\calU(\wh \bZ)$ is explicitly given, the output of the converted Approximate SAR algorithm $\calM(\wh \bZ)$ is not, so we might not know what it is.

\begin{remark}\label{rem:SpecialEntFid}
(Continued from~\cref{rem:est-to-SAR})
If $\emeasure = \ol F$, the entanglement infidelity,
the accuracy of a unitary estimation algorithm is equal to that of its converted Approximate SAR algorithm.
(Note that $\ol F$ is not a metric.)
To see this, recall that the squared quantum state fidelity between a pure state $\psi$
and a mixed state $\rho$ is linear in~$\rho$: $\onenorm{\sqrt \psi \sqrt \rho}^2 = \bra \psi \rho \ket \psi$.
Then, with a maximally entangled state~$\Phi = \ket \Phi \bra \Phi$, we have
\begin{equation}
F(\calU(Z), \calM(\wh \bZ)) = \bra{\Phi} Z^\dag \E_{\wh \bZ} \wh \bZ \Phi \wh{\bZ}^\dag Z \ket \Phi = \E_{\wh \bZ} \bra{\Phi} Z^\dag  \wh \bZ \Phi \wh{\bZ}^\dag Z \ket \Phi = \E_{\wh \bZ} F(\calU(Z), \calU(\wh \bZ) )
\end{equation}
where we omitted subsystem indices and identity tensor factors.
Hence, $\ol F(\calU(Z),\calM(\wh \bZ)) = \E \ol F(\calU(Z),\calU(\wh \bZ) )$.
\end{remark}

One can use this fact about entanglement infidelity to show that the variance of a unitary estimation algorithm bounds the error of its converted Approximate SAR algorithm.
In some sense, this gives a quadratically better version of \cref{rem:est-to-SAR}, since $\eps$-error unitary estimation can give $\eps^2$-error Approximate SAR.

\begin{restatable}[{\cite[Lemma 2]{Yang2020}}]{proposition}{yrcmixing} \label{prop:conv-to-SAR}
    If a unitary estimation algorithm~$\alg$ with an output distribution~$\varphi(\wh \bZ|Z)\rd \wh \bZ$
    is unitarily covariant, i.e., $\varphi(A \wh \bZ B | A Z B ) = \varphi( \wh \bZ | Z )$ for all $A,B \in \UU(\dd)$,            
    then
    \begin{equation}
            \ol F(\calU(Z), \calM(\wh \bZ) ) = \diamnorm{ \calU(Z) - \calM(\wh \bZ) }.
    \end{equation}
    Therefore, by \cref{rem:SpecialEntFid,prop:ineq1},
    \begin{equation}
            \diamnorm{ \calU(Z) - \calM(\wh \bZ) }
            = \ol F(\calU(Z), \calM(\wh \bZ) )
            = \E_{\wh \bZ} \ol F(\calU(Z), \calU(\wh \bZ) )
            \le \tfrac 1 4 \E_{\wh \bZ} [ \diamnorm{\calU(Z) - \calU(\wh \bZ) }^2 ]. \label{mixingIneq}
    \end{equation}
\end{restatable}

\subsection{Prior work}\label{sec:CompPrior}

As far as we are aware, ours is the first work to study unitary estimation with the more stringent diamond-norm distance as its accuracy measure.
We survey the existing literature on related topics, and then compare it to our result.

\paragraph{Standard process tomography.}
Standard quantum process tomography~\cite[Chapter 8.4.2]{nielsen2002quantum} solves the task of general-channel estimation problem by preparing a basis of quantum states, passing them through the channel, and performing state tomography on the results.  One has to take care to analyze how the error bounds for state tomography affect the error bounds for channel estimation; see, e.g.,~\cite{lu2020direct} estimation of minimum gate fidelity for bounds of the form $Q \leq \poly(\dd)/\eps^2$ for general-channel estimation with respect to minimum fidelity.  For the special case of unitary channels one can get improved bounds, as one only needs to work with $O(\dd)$~pure-state estimation tasks. Naively analyzing this strategy yields an $Q \leq O(\dd^3/\eps^2)$ sample upper bound, as an $\eps$ error for each state estimate can compound to a $O(\sqrt{\dd}\eps)$ error in diamond-norm distance for the resulting channel. But as we show in \Cref{sec:base}, with care this method can be used to obtain $Q \leq O(\dd^2/\eps^2)$ query complexity for unitary estimation with respect to diamond-norm distance.  An advantage of this method is that it uses zero space overhead.

Another well-known approach to quantum process tomography is the ancilla-assisted method dating back to Leung~\cite{leung2000towards}. Here one prepares the maximally entangled state in $\dd^2$ dimensions, passes the first half through the channel, and uses state tomography.  Again, if the channel is unitary one can use pure state tomography, and with this approach it is not hard to deduce that $Q \leq O(\dd^2/\delta)$ queries suffices to obtain unitary estimation with respect to \emph{entanglement infidelity~$\delta$}.  Besides using $O(\log \dd)$ space overhead, this approach also only gives an $Q \leq O(\dd^3/\eps^2)$ query complexity bound for diamond-norm distance~$\eps$, via the relations \Cref{prop:norm-equivalence,prop:ineq1}. However, like with standard process tomography, with a tighter analysis, one may be able to give an $Q \leq O(\dd^2/\eps^2)$ query complexity bound.

\paragraph{Unitary estimation.}
The specific task of unitary estimation was perhaps first studied by Ac\'{i}n, Jan\'{e}, and Vidal~\cite{acin2001optimal}, where representation theory was used to determine an optimal algorithmic strategy with respect to entanglement infidelity under the assumption that the unknown~$Z$ is applied in parallel to one half of a bipartite system.  Asymptotic analysis of the error was not given, however.  
Later, Peres and Scudo~\cite{peres2002covariant} gave an alternate method establishing that $Q \leq O(1/\sqrt{\delta})$ queries suffice to obtain entanglement infidelity~$\delta$ in the case $\dd = 2$.
Then Bagan, Baig, and Mu\~{n}oz-Tapia~\cite{bagan2004entanglement} established the same scaling for the method from~\cite{acin2001optimal}; and,  they~\cite{bagan2004quantum} and Chiribella, D'Ariano, Perinotti, and Sacchi~\cite{chiribella2004efficient} did the same for a similar method that didn't require entangled measurements on both parts of the bipartite system.  See also independent work of Hayashi~\cite{hayashi2006parallel}.  
Chiribella, D'Ariano, and Sacchi~\cite{chiribella2005optimal} again showed a more general optimality result for entanglement infidelity, implying that $Q \geq \Omega(1/\sqrt{\delta})$ is a lower bound for this accuracy measure when $\dd = 2$; however, the asymptotic dependence for~$\dd > 2$ was not established.  Later, Kahn~\cite{kahn2007fast} showed that the optimal scaling for general~$\dd$ is of the form $Q = \Theta(f(\dd)/\sqrt{\delta})$ for some function~$f(\dd)$, but was unable to asymptotically analyze it.

Finally, in 2020, Yang, Renner, and Chiribella~\cite{Yang2020} were able to analyze the optimal unitary estimation algorithm for entanglement infidelity and showed that it achieves error~$\delta$ with $Q \leq O(\dd^2/\sqrt{\delta})$ queries.
Moreover, they showed \Cref{prop:conv-to-SAR}, and since their algorithm is unitarily covariant, this implies an Approximate SAR algorithm with $Q \leq O(\dd^2/\sqrt{\eps})$ with respect to diamond-norm error~$\eps$.
We remark that this optimal unitary estimation algorithm from~\cite{Yang2020} applies the unknown unitary~$Z$ in parallel and hence has space overhead on the order of the query complexity~$\Theta(\dd^2\log(\dd)/\sqrt{\eps})$.

\paragraph{Comparison to \cite{Yang2020}.}
Our main result recovers the unitary estimation (with respect to entanglement infidelity) and Approximate SAR (with respect to diamond-norm distance) results of Yang, Renner, and Chiribella~\cite{Yang2020}.
First, by norm conversion \cref{prop:ineq1}, our algorithm with $Q \leq O(\dd^2/\eps)$ achieves the guarantee $\E[~\ol{F}(\calU(\wh{\bZ}),\calU(Z))~] \leq \E[~\diamnorm{\calU(\wh{\bZ})- \calU(Z)}^2~] \le \eps^2$, giving the unitary estimation guarantee.
Second, since any algorithm can be made unitarily covariant, using the same reduction of \cref{prop:conv-to-SAR}, we achieve an Approximate SAR algorithm with $Q \leq O(\dd^2/\sqrt{\eps})$ with respect to diamond-norm error $\eps$.

Neither of these results from \cite{Yang2020} imply our result of unitary estimation to diamond norm error.
Through norm conversion~\cref{prop:ineq1}, a $O(\dd^2/\sqrt{\delta})$-query unitary estimation algorithm with respect to entanglement infidelity $\delta$ implies a $O(\dd^{2.5}/\eps)$-query algorithm with respect to diamond-norm distance $\eps$.
This conversion is tight, even if the unitary estimation algorithm is unitarily covariant.
Also, as discussed in \cref{rem:est-to-SAR}, the output of an approximate SAR algorithm is correct in expectation, but any individual output need not be close in diamond norm.

Finally, our algorithms have space overhead of zero, improving over \cite{Yang2020} and making our algorithm significantly closer to practical.
Note that the main figure of merit in \cite{Yang2020}, \emph{program cost}, denotes the size of the output of the storage algorithm, and is different from the space referred to here, which is the space complexity of the storage algorithm (\cref{def:sar}).

\paragraph{Comparison to \cite{vAcgn22}.}
Van Apeldoorn, Cornelissen, Gily\'{e}n, and Nannicini shows that, given a unitary $Z \in \UU(\dd)$ implementing an unknown state $Z\ket{0} = \ket{z} \in \mathbb{C}^\dd$, one can compute an estimate $\ket{\wh{\bz}}$ which is $\eps$-close in Euclidean norm with $\geq 1-\delta$ probability~\cite[Theorem~23]{vAcgn22}.
Their algorithm uses $O(\frac{\dd}{\eps}\log\frac{\dd}{\delta})$ applications of the controlled black box unitary $\controlled Z$ and its inverse $\controlled Z^\dagger$, along with $O(\dd\log\frac{\dd}{\eps})$ qubits.
This implies an algorithm for unitary estimation, by using this state tomography algorithm for standard process tomography (see \cref{prop:weaker-base,prop:puretomog}) which uses $O(\frac{\dd^2}{\eps}\log\frac{\dd}{\delta})$ applications of $\controlled Z$ and $\controlled Z^\dagger$ and $\Theta(\dd\log\frac{\dd}{\eps})$ space overhead.
This algorithm is based on quantum singular value transformation, so the use of $\controlled Z$ and $\controlled Z^\dagger$ and the nonzero space overhead appear to be inherent limitations of their approach.
Similarly, other standard primitives in quantum algorithms like amplitude estimation, which we might ordinarily reach for when aiming for a quadratic improvement in error, have similar limitations in terms of requiring stronger access to $Z$ or space overhead.
Our algorithm is more direct and so does not lose anything in query complexity or space complexity, and only uses queries to $Z$, and not $\controlled Z$ or $\controlled Z^\dagger$.
The gate complexity of both algorithms is similar.

\paragraph{Comparison of techniques to prior work.}

Existing algorithms for estimating a parametrized class of gates with Heisenberg scaling, like metrology with a GHZ state~\cite{jwdfy08} and robust phase estimation~\cite{Kimmel2015}, proceed by applying the black box many times in parallel or in series (respectively).
Our algorithm uses the same principles (see the warmup in \cref{fig:warmup}), but as discussed in the next section, naive generalizations of these approaches cannot work.
We introduce a novel ``shift to identity'' step to avoid areas in the space of unitary channels where applying the black box in series fails; this adaptivity allows us to extend robust phase estimation, which can estimate a parameter on the complex unit circle, to work for elements in a much trickier space.

The process tomography protocol most similar to the algorithm presented in our work is that of \emph{gate set tomography}~\cite{Nielsen2020}, which can achieve Heisenberg scaling, $\Theta(1/\eps)$, in regimes of practical interest by interleaving long sequences of gates.
This is qualitatively similar to what our algorithm does, though we are not aware of work giving theoretical bounds on such protocols.

\paragraph{Prior work on lower bounds.}
Lower bounds on channel estimation and discrimination are just as well-studied as upper bounds.
Some prior work has tried to characterize the optimal strategy for various problems, such as showing they are sequential or parallel~\cite{Duan2006,Bavaresco2021}.
As mentioned previously, though these results describe optimal strategies for these problems, 
it is unclear if they imply lower bounds better than $\Omega(\dd^2 + \frac{1}{\eps})$.
Up to log factors, this lower bound is an immediate consequence of parameter counting (e.g.\ \cite[Chapter 8.4.2]{nielsen2002quantum}, \cite{bkd14}, which one can make rigorous with a Holevo bound~\cite{Cleve1997}) and the optimality of Heisenberg scaling.
However, these two arguments do not combine naturally to get the $\Omega(\dd^2/\eps)$ bound we would expect.

To our knowledge, we give the first \emph{jointly optimal} lower bound, in the sense that 
for any decaying function $\dd \mapsto \eps = \varphi(\dd)$
our lower bound $\Omega(\dd^2 / \varphi(\dd))$ 
matches the query complexity of our algorithm up to a universal constant.
Our lower bound combines prior work on unitary channel discrimination~\cite{Bavaresco2021} with a reduction technique from quantum query complexity~\cite{cgmsy09,bccks17}.
Though quantum query complexity focuses on diagonal unitaries, we observe that roughly the same technique can be applied to this general unitary estimation setting with some modification.

\subsection{Techniques} \label{sec:techniques}

\paragraph{An algorithm with suboptimal $\eps$ scaling.} A standard idea in process tomography is to estimate $Z$ by doing state tomography on $Z|0\rangle$, $Z|1\rangle$, and so on, and then somehow collate these estimates into a full estimate of $Z$.
Since each state tomography requires $O(\dd/\eps^2)$ samples, this whole procedure can be done with $O(\dd^2/\eps^2)$ applications of $Z$ (\cref{thm:base}).
There are two minor issues to address when formalizing this.
First, an $\eps$ error in each state tomography could cascade to a $\eps\sqrt{\dd}$ error in the final estimate of $Z$.
We address this by ensuring that the state tomography algorithm produces Haar-random error, so that with high probability the errors do not compound.
Second, we can only estimate each column up to a phase, so we need to additionally deduce the relative phases between columns.
We address this by showing that learning the columns up to phase of both $Z$ and $Z F$, $F$ being the discrete Fourier transform, suffices to deduce these relative phases (\cref{prop:weaker-base}).

\paragraph{Achieving the optimal Heisenberg scaling.}
We wish to improve the $O(1/\eps^2)$ dependence of process tomography to $O(1/\eps)$.
Our strategy will be to use process tomography as a subroutine, but boost the error some other way.
Specifically, we give a bootstrapping procedure (\cref{algo:theboot}) that estimates a unitary to $\eps$ error by making calls to a ``base'' algorithm that can only output unitary estimates to error $\frac{1}{200}$, 
paying an $1/\eps$ factor of overhead in the query complexity (\cref{thm:error-boosting}).
Using the $O(\dd^2/\eps_0^2)$ unitary estimation algorithm with $\eps_0 = \frac{1}{200}$, as the base, 
this gives the desired $O(\dd^2/\eps)$ query complexity.

\begin{figure}[ht]
    \centering
    \includegraphics{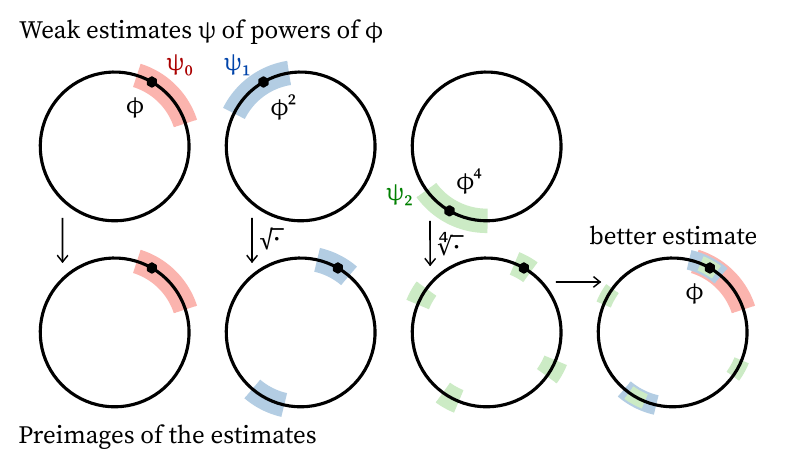}
    \caption{The warmup example: we wish to learn $Z = \big(\begin{smallmatrix} 1 \\ & \phi \end{smallmatrix}\big)$, where we do not know the phase $\phi$ on the complex unit circle.
    Our strategy is to get constant-error estimates to $Z^{2^k}$, which correspond to phases $\psi_k$ that specify a constant-sized interval around $\phi^{2^k}$ for each $k$.
    Although each interval only pins down the corresponding power of $\phi$ to constant error, each $\psi_k$ can be thought of as specifying the $k$th ``bit'' of information of $\phi$.
    So, when taken together, these error intervals can be collated to get an estimate of $\phi$ to error $O(\eps)$ for the cost of computing constant-error estimates of powers of $Z$ up to $Z^{1/\eps}$.}
    \label{fig:warmup}
\end{figure}

A good first try is to get constant-error estimates for $Z^{2^j}$ for $j$ going from 1 to $\lceil \log_2(1/\eps)\rceil$, with the hope that the estimate of $Z^{2^j}$ will refine the estimate of $Z$ to $2^{-j}$ error.
Taking powers of an unknown gate with exponentially increasing degree is an existing method in quantum metrology for achieving the Heisenberg limit.
In particular, we can view this as a non-coherent version of phase estimation~\cite{Kimmel2015}, where the application of $Z^{2^j}$ to an eigenvector of $Z$ extracts the $j$th bit of its corresponding eigenvalue.
This argument can be shown to work in simple cases, such as when $Z$ is a rotation in two dimensions (\cref{fig:warmup}), 
but if all the eigenvalues of $Z$ are $\pm 1$,
then any power is either~$\id$ or~$Z$ so it is impossible 
to gain any better information about the eigenvectors of~$Z$ from its powers.

Nonetheless, we find that 
if the powers of $Z$ are always close to the identity for all sufficiently high exponents (and hence no eigenvalue is close to $-1$), then this idea works out.
Specifically, we show in \cref{lem:sqrt} that if unitaries $U$ and $V$ are $\alpha$-close in diamond-norm distance, 
then $U^{1/p}$ and $V^{1/p}$ are $50\alpha/p$-close, provided that $U$ and $V$ are $0.01$-close to the identity.
We can use this lemma to bootstrap a constant-error estimate to an $\eps$-error one, 
provided we always apply the base algorithm to matrix powers that are close to the identity.
Thus, we always bring the unknown unitary close to the identity: 
instead of running the base unitary estimation algorithm on $Z^{2^j}$, 
we run it on $(Z V_j^\dagger)^{2^j}$, where $V_j$ is our best estimate to $Z$ so far in the algorithm.
This recenters the unitary at the identity so that we can power it up even further.
We note here that, upon formulating the right technical lemma to use (\cref{lem:sqrt}), 
the analysis of the resulting algorithm (\cref{algo:theboot}) becomes surprisingly simple.

\paragraph{Lower bound.}
To achieve our lower bound of $\Omega(\dd^2/\eps)$,
we consider a hard instance of the unitary estimation problem,
in which one is asked to identify one of $\exp(\Omega(\dd^2))$ candidate unitary channels
that are $\eps$-apart from one another 
and are all $O(\eps)$-close to the identity
in the diamond norm.
Specifically, we choose the ensemble to be $\eps$th-power of a net of reflections.

Simple arguments give a lower bound of $\Omega(\dd^2 + \frac{1}{\eps})$, with $\dd^2$ being a generic lower bound for identifying one of $\exp(\Omega(\dd^2))$ candidate unitary channels~\cite{Bavaresco2021} and $\frac{1}{\eps}$ being the number of applications of a channel necessary to discriminate two channels that are $\eps$-close in diamond-norm distance.
To improve this, we wish to argue that,
since a unitary~$Z$ taken from the hard instance applies an $\eps$th power of a reflection,
the task of distinguishing $Z$ is a factor of $1/\eps$ harder than the task of distinguishing the net of reflections.
Intuitively, this holds because $Z$ is essentially the identity for all but an $\eps$ fraction of the time it is applied.

This heuristic picture can be made precise by showing a reduction from ``fractional-query'' algorithms to ``discrete-query'' algorithms~\cite{cgmsy09,bccks17}.
This reduction converts a circuit that calls $Z$, the $\eps$th power of a reflection, into one that calls that reflection conditioned on some ancilla.
So, hardness of a problem using queries to the reflection lifts to hardness of the problem using queries to $Z$.
Translating to our setting, directly applying this reduction shows that a $\Omega(d^2)$ to distinguishing a net of reflections translates to a $\Omega(\frac{\dd^2}{\eps}\frac{\log\log(1/\eps)}{\log(1/\eps)})$ lower bound for the hard instance.
This reduction is tight, though, so removing the $\frac{\log\log(1/\eps)}{\log(1/\eps)}$ term requires additional insight.

We consider again the reduction of \cite{cgmsy09,bccks17} converting a fractional-query circuit to a discrete-query circuit.
This reduction adds one ancilla qubit per oracle call to~$Z$
and the joint state of all the ancillas 
is a superposition of some bitstrings.
The average weight of the bitstring is an effective number of the oracle calls 
to the reflection,
which is smaller than the naive query complexity for~$Z$ by a factor of~$\eps \ll 1$.
Without much change in the output distribution of the overall algorithm,
one can modify the algorithm to monitor the weight of the ancilla bitstring
and actually reduce the number of oracle calls by the factor of~$\eps$.
This reduction however comes at a price:
the modified algorithm now requires postselection
whose success probability is exponentially small in the query number.
Addressing this small success probability is where one picks up the sub-logarithmic term.
However, for this particular lower bound we need not address it: an exponentially small success probability is still hard to achieve for unitary discrimination.
Bavaresco, Murao, and Quintino~\cite{Bavaresco2021} gives a useful relation between
the query complexity of a unitary discrimination problem
and success probability,
and we observe that small success probability is still meaningful
as long as the algorithm is appreciably better than merely guessing the answer.
This gives the optimal $\Omega(\dd^2/\eps)$ lower bound.

\subsection{Discussion}

\paragraph{Improving gate complexity.}
We give an algorithm for unitary channel estimation that is query-efficient and space-efficient.
However, we still need to run quantum circuits with $\poly(\dd, 1/\eps)$ depth, in addition to the oracle calls.
If this could be improved to $\poly(\log(\dd), 1/\eps)$, it would make this algorithm significantly more practical.
However, the depth is bottlenecked by the ``shifting to identity'' step, which requires a unitary synthesis and so is high depth.
As discussed in \cref{sec:techniques}, this step appears to be necessary to avoid the hard case of unitaries with $-1$ eigenvalues.

A simpler question is whether the $O(\dd^2/\eps^2)$ algorithm can be made gate-efficient.
In view of this goal, we give a gate-efficient version of pure state tomography with optimal sample complexity in \cref{app:gate-state}; prior gate-efficient algorithms lose log factors in the sample complexity.
This should be able to give a $O(\dd^2/\eps^2)$ query complexity algorithm with the desired gate complexity and optimal space complexity, though we do not prove this.

\paragraph{Estimating the eigenvalues of a unitary channel.}
We show in \cref{sec:eigenvalues} that shifting to the identity is not necessary when we merely want to learn the eigenvalues of the unitary without the eigenvectors.
In this Appendix, we describe how to use a (gate-efficient) control-free version of phase estimation to achieve this result (though notably, the estimates we achieve do not distinguish between eigenvalues of different multiplicity).
This algorithm appears to be folklore.
If one can reduce unitary estimation to $\poly(\dd)$ instances of eigenvalue estimation, 
then this result would give gate-efficient unitary channel tomography with query complexity $\poly(\dd)/\eps$.

\paragraph{Generalizing beyond unitary channels.}
Finally, we note that our results do not extend to general channels.
For example, the channel that destroys the input and outputs the outcome of a biased coin requires $\Omega(1/\eps^2)$ queries to estimate.
However, we leave open the question of analyzing this algorithm's tolerance to noise, and the related question of whether $o(1/\eps^2)$ process tomography is possible for ``close-to-unitary'' channels.

\section{\texorpdfstring{Base tomography using $O(\dd^2/\eps^2)$ applications}{Base tomography using O(d²/ε²) applications}} \label{sec:base}

In this section we show a base tomography algorithm that has a quadratic dependence on the desired precision.
We will use the ``operator norm up to phase'' distance for this section, defined in \cref{def:p-op-norm}, which is equivalent to diamond norm up to constants.

\begin{theorem} \label{thm:base}
    There is a tomography algorithm that, given access to an unknown unitary $Z \in \UU(\dd)$, as well as parameters $\eps, \eta > 0$, applies $Z$ at most $O(\dd^2/\eps^2) \cdot \log(1/\eta)$ times and outputs an estimate~$\bV$ satisfying $\pudist{Z}{\bV} \leq \eps$ except with probability at most~$\eta$.
    The algorithm uses only a $\dd$-dimensional Hilbert space (in particular, only $\nn$~qubits when $\dd = 2^{\nn}$). 
\end{theorem}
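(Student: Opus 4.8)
The plan is to reduce unitary tomography to pure-state tomography applied to the columns of $Z$. A ``copy'' of the state $Z\ket i$ is produced by preparing $\ket i$ and applying $Z$ once, so pure-state tomography of each column to Euclidean error $\delta = \Theta(\eps)$ (up to global phase) costs $O(\dd/\delta^2) = O(\dd/\eps^2)$ queries, hence $O(\dd^2/\eps^2)$ over all $\dd$ columns; the algorithm only ever touches the single $\dd$-dimensional register, so the space bound is automatic. This yields estimates $\widehat v_i \approx e^{\ii\theta_i}Z\ket i$ with unknown phases $\theta_i$, so the assembled matrix $\widehat V = \sum_i \widehat v_i\bra i$ is close to $ZD$ for the unknown diagonal phase $D = \diag(e^{\ii\theta_0},\dots,e^{\ii\theta_{\dd-1}})$ --- close to $Z$ only up to a diagonal, not a global, phase.

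To pin down the phases I would additionally run pure-state tomography on the columns $ZF\ket j$ of $ZF$, where $F$ is the discrete Fourier transform, prepared by $\ket j \mapsto F\ket j \mapsto ZF\ket j$ with one query; call the resulting estimates $\widehat u_j \approx e^{\ii\phi_j}ZF\ket j$. Since each $ZF\ket j$ is a known superposition of the $Z\ket k$, the overlaps $\braket{\widehat v_k|\widehat u_j}$ between the two families of estimates give approximate linear constraints $\arg\braket{\widehat v_k|\widehat u_j}\approx \phi_j - \theta_k + 2\pi jk/\dd$ on the unknown phases $\{\theta_k\},\{\phi_j\}$; the complete-bipartite/Fourier structure of this phase-synchronization system makes it well conditioned, so it inverts to recover every $\theta_k$ up to a single global phase with controlled error amplification. (This is essentially the reduction \cref{prop:weaker-base}; calibrating the tomography precision on the $ZF$ columns so that the recovered phases are accurate to $O(\eps)$ is the routine-but-careful part.)

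The remaining issue is that, naively, columns each within $\delta$ give only $\opnorm{\widehat V - ZD} \le \fnorm{\widehat V - ZD} \le \sqrt\dd\,\delta$ --- an unwanted $\sqrt\dd$ loss that would force $\delta = \eps/\sqrt\dd$ and $O(\dd^3/\eps^2)$ queries. To recover the $\sqrt\dd$ I would use a variant of pure-state tomography whose error vector, conditioned on its magnitude, is isotropically (Haar-)random in the orthogonal complement of the true state; this is obtained for free by conjugating the tomography subroutine with a fresh Haar-random unitary for each column (estimate $W_i Z\ket i$, then apply $W_i^\dagger$, whose residual freedom --- the stabilizer of the fixed target --- randomizes the error direction). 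Then $\widehat V - ZD$ has independent columns of norm $\le\delta$ pointing in independent random directions, and standard random-matrix concentration gives $\opnorm{\widehat V - ZD} = O(\delta)$ with high probability; the same randomization must be carried through the phase-recovery step so its error does not re-compound. Taking $\delta = \Theta(\eps)$ then gives $\pudist Z V \le \eps$ with constant probability. Running each of the $O(\dd)$ tomography calls to failure probability $\Theta(1/\dd)$ (which, by the additive $O((\dd + \log(1/\eta))/\eps^2)$ form of the pure-state-tomography bound, still costs $O(\dd/\eps^2)$ each) makes the whole procedure succeed with constant probability, and the stated $\log(1/\eta)$ factor comes from the usual median-of-repetitions confidence boosting (\cref{prop:confidence}), which applies because $\pudist\cdot\cdot$ is a metric.

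The step I expect to be the main obstacle is making these two ingredients coexist: the state-tomography subroutine must simultaneously keep the optimal $O(\dd/\eps^2)$ sample complexity and produce genuinely (conditionally) isotropic error, and the Fourier phase-recovery must be robust to errors of exactly this randomized form --- so that neither the operator-norm concentration argument nor the phase synchronization secretly reintroduces a $\sqrt\dd$ or $\log\dd$ factor when one unions over the $\Theta(\dd)$ column estimates.
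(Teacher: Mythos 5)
Your proposal is correct and follows essentially the same route as the paper: column-by-column pure-state tomography with isotropic (Haar-random) error directions so that random-matrix concentration avoids the $\sqrt{\dd}$ compounding, a second run on the columns of $Z$ times the Fourier transform to synchronize the unknown column phases (the paper's \cref{prop:weaker-base}), and median-based confidence boosting (\cref{prop:confidence}) for the $\log(1/\eta)$ factor. The only cosmetic difference is that you enforce isotropy by conjugating with fresh Haar-random unitaries, while the paper observes that the uniform-POVM tomography routine of \cref{prop:puretomog} already has this symmetry; also note the phase-recovery step needs only the operator-norm guarantee from the two runs, so your worry about carrying the randomization through it is moot.
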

In fact, by standard methods (explicitly shown in \Cref{prop:confidence} below) it suffices to prove this theorem for a fixed confidence value less than~$\frac12$, say $\eta = \frac{1}{3}$.

Our algorithm will essentially learn the unknown~$Z$ column by column, using the below pure state tomography result, \Cref{prop:puretomog}.  This result was essentially previously known, but we will take care of a few minor details in \Cref{sec:ancillary}.
\begin{proposition}    \label{prop:puretomog}
    There is a pure state tomography algorithm with the following behavior.  Given access to copies of a pure state $\ket{z} \in \CC^{\dd}$, 
    it sequentially and nonadaptively makes von Neumann measurements
    on~$O(\dd/\eps_0)$ copies of $\ket{z}$ (using only $\dd$-dimensional Hilbert space).
    Then, after classically collating and processing the measurement outcomes, it outputs (a classical description of) an estimate pure state
    \begin{equation}    \label{eqn:puretomog}
        \ket{\bu} = \bphi \sqrt{1-\beps}\ket{z} + \sqrt{\beps}\ket{\bw}
    \end{equation}
    such that: (i)~$\bphi$ is a complex phase; (ii)~the infidelity~$\beps$ is at most $\eps_0$ except with probability at most~$\exp(-5\dd) \leq \frac{1}{100\dd}$; (iii)~the vector~$\ket{\bw}$ is distributed Haar-randomly\footnote{Perfect Haar-randomness is only possible by making idealized assumptions about the algorithm's hardware; such technical issues of algorithmic complexity are deferred to \cref{foot:tedium}.} among all states orthogonal to~$\ket{z}$.
\end{proposition}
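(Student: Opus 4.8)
The plan is to realize \Cref{prop:puretomog} via a \emph{sample-optimal pure-state tomography algorithm that is unitarily covariant}: properties (i) and (iii) then come essentially for free, and the only real work is the concentration analysis behind (ii) and the $O(\dd/\eps_0)$ sample count. Concretely, I would have the algorithm measure each of $N = O(\dd/\eps_0)$ copies of $\ket z$ in an \emph{independent Haar-random orthonormal basis}, record the outcome vector $\ket{\bv_i}$ seen in each measurement, and then output a sample-optimal reconstruction from $\ket{\bv_1},\dots,\ket{\bv_N}$ --- for concreteness, the maximum-likelihood estimate $\ket{\bu}$, i.e.\ the unit vector maximizing $\prod_i |\langle \bv_i | u\rangle|^2$. (Non-uniqueness of the maximizer is a probability-zero event, and finite precision is handled as in \Cref{foot:tedium}.) Property (i) is then immediate: every unit vector can be written as $\bphi\sqrt{1-\beps}\ket z + \sqrt{\beps}\ket{\bw}$ by taking $\bphi\sqrt{1-\beps} = \langle z | \bu\rangle$ (so $\bphi$ is a phase and $\beps = 1 - |\langle z|\bu\rangle|^2$) and $\ket{\bw}$ the unit vector along $(\id - \ket z\bra z)\ket{\bu}$.

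Property (iii) follows from unitary covariance. Replacing the unknown $\ket z$ by $U\ket z$ replaces, in distribution, each recorded $\ket{\bv_i}$ by $U\ket{\bv_i}$ (the left-invariance of the Haar measure on bases absorbs $U$), so the likelihood-maximizer transforms as $\ket{\bu}\mapsto U\ket{\bu}$; hence the output law is covariant. Now fix any unitary $U$ that fixes $\ket z$ exactly, so $U$ restricts to an arbitrary unitary of $\ket z^\perp$. Covariance gives $\ket{\bu}\overset{d}{=}U\ket{\bu}$, and writing out the decomposition and using $U\ket z = \ket z$ this reads $(\bphi,\beps,\ket{\bw})\overset{d}{=}(\bphi,\beps,U\ket{\bw})$. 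Letting $U$ range over the full unitary group of $\ket z^\perp$, which acts transitively on the unit sphere there, we conclude that conditioned on $(\bphi,\beps)$ the vector $\ket{\bw}$ is Haar-distributed on the unit sphere of $\ket z^\perp$; in particular so is its marginal, which is exactly (iii). (Any covariant sample-optimal algorithm taken from the literature would serve equally well; a non-covariant one can be symmetrized by conjugating the measurements with a uniformly random frame.)

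For (ii) and the $O(\dd/\eps_0)$ bound I would expand the log-likelihood $\ell(u) = \sum_i \log|\langle \bv_i|u\rangle|^2$ near the truth. In a chart on the $2(\dd-1)$-real-dimensional tangent space at $\ket z$, the score $\nabla\ell(z)$ is a sum of $N$ i.i.d.\ mean-zero vectors and the Hessian $\nabla^2\ell(z)$ is a sum of $N$ i.i.d.\ negative matrices with mean $-N$ times the Fisher-information matrix of this family, which is a positive constant times the identity on the tangent space. Standard estimates then give that $\|\nabla\ell(z)\|^2/N$ concentrates like a scaled $\chi^2_{2(\dd-1)}$ around $\Theta(\dd)$ and that $\nabla^2\ell(z)$ is within a constant factor of $-N\,\id$, whence the MLE lies within $O(\sqrt{\dd/N})$ of $\ket z$ in the chart and $\beps = O(\dd/N)$; taking the constant in $N = O(\dd/\eps_0)$ large enough makes $\beps\le\eps_0$. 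It is worth noting that the naive linear estimator --- the top eigenvector of $\frac1N\sum_i\ket{\bv_i}\bra{\bv_i}$ --- is \emph{not} sample-optimal here, since its target $\frac{1}{\dd+1}(\id+\ket z\bra z)$ has spectral gap only $\Theta(1/\dd)$, forcing $N$ of order $\dd^2/\eps_0$; a genuinely nonlinear reconstruction such as MLE is needed to reach $O(\dd/\eps_0)$.

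The main obstacle is pushing the failure probability of this last step down to $\exp(-\Omega(\dd))$ --- not merely constant or $\exp(-\Omega(\sqrt{\dd}))$ --- with the constant large enough to give $\exp(-5\dd)$, which \Cref{prop:puretomog} needs since it is later union-bounded over the $\dd$ columns of $Z$. Two points require care. First, the local expansion is only valid after ruling out spurious near-maximizers of $\ell$ far from $\ket z$; this needs a uniform deviation bound for $\ell$ over the whole $(\dd-1)$-complex-dimensional sphere of candidate vectors, and since a constant-scale net of that sphere has size $e^{O(\dd)}$, the per-point tail --- of order $\exp(-\Omega(Nt))$ at deviation scale $t$ --- must beat $e^{O(\dd)}$, which works precisely because $N \gtrsim \dd$. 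Second, the $\chi^2$-type control of the score and the matrix concentration of the Hessian both involve $\Theta(\dd)$-dimensional objects yet must retain $\exp(-\Omega(\dd))$ tails; matrix Bernstein handles the Hessian (summands of operator norm $O(1/N)$ and total variance $O(1/N)$) and a sub-exponential bound handles $\|\nabla\ell(z)\|^2$. This analysis, together with the Fisher-information computation and a confidence-amplification step to state the result at fixed confidence first, is what \Cref{sec:ancillary} is for; alternatively it can be replaced by citing an existing sample-optimal pure-state tomography algorithm and checking only that it is (or can be made) covariant and that its failure probability can be driven to $\exp(-5\dd)$.
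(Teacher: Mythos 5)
Your measurement scheme (independent Haar-random bases, i.e.\ the uniform POVM) and your covariance/symmetry argument for property (iii) coincide with the paper's, and (i) is indeed immediate; the problem is that the quantitative core, property (ii) with $O(\dd/\eps_0)$ samples and failure probability $\exp(-5\dd)$, is only sketched, and one of the claims you use to justify the sketch is false. You assert that the top eigenvector of $\frac1N\sum_i \ket{\bv_i}\!\bra{\bv_i}$ cannot be sample-optimal because the target $\frac{1}{\dd+1}(\id + \ket z\!\bra z)$ has gap $\Theta(1/\dd)$, "forcing $N = \Theta(\dd^2/\eps_0)$". That conclusion implicitly treats the noise $\opnorm{\frac1N\sum_i \ket{\bv_i}\!\bra{\bv_i} - \frac{1}{\dd+1}(\id+\ket z\!\bra z)}$ as $\Theta(N^{-1/2})$; in fact each summand satisfies $\E[(\ket{\bv}\!\bra{\bv})^2] = \E[\ket{\bv}\!\bra{\bv}] = \frac{1}{\dd+1}(\id + \ket z\!\bra z)$, so the matrix-variance parameter is $O(1/(N\dd))$ and the noise is $O(\sqrt{\log\dd/(N\dd)})$ by matrix Bernstein (the log can be removed with a sharper analysis). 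Dividing by the gap $\Theta(1/\dd)$ gives eigenvector error $O(\sqrt{\dd/N})$, i.e.\ infidelity $O(\dd/N)$, so the linear estimator already achieves $N = O(\dd/\eps_0)$. This linear-inversion-plus-rounding route is exactly the paper's proof: it forms $\bL = (\dd+1)\,\mathrm{avg}\{\ket{\bv_j}\!\bra{\bv_j}\} - \id$, invokes \cite[Theorem~2]{Guta2020} (building on \cite{CL14,KUENG201788}) for the operator-norm bound at failure probability $\exp(-5\dd)$, rounds to the nearest normalized rank-one state, and obtains (iii) by the same symmetry you describe. So your closing fallback ("cite an existing sample-optimal algorithm and check covariance and the failure probability") is in fact the paper's argument, but as written your proposal argues that very route away.

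The primary route you do propose (MLE) is not carried out, and it has obstacles beyond the ones you flag. The per-sample score in a tangent direction $w \perp \ket z$ is $2\Re\big(\braket{\bv|w}\overline{\braket{\bv|z}}\big)/\abs{\braket{\bv|z}}^2$: it has finite variance (the Fisher information is indeed a constant multiple of the identity), but it blows up when $\abs{\braket{\bv|z}}$ is small and has only polynomial tails, so it is not sub-exponential and the claimed tail bound for $\twonorm{\nabla\ell(z)}^2$ at the $\exp(-5\dd)$ level does not follow as stated; similarly, ruling out spurious near-maximizers far from $\ket z$ at confidence $1-\exp(-\Omega(\dd))$ requires uniform control of a heavy-tailed empirical process over an $e^{O(\dd)}$-size net, which is asserted but not supplied. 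Since everything in \cref{prop:puretomog} beyond the trivial decomposition (i) and the symmetry argument (iii) is precisely this $O(\dd/\eps_0)$-sample, $\exp(-5\dd)$-failure guarantee, the proposal as it stands has a genuine gap at the statement's main content.
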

This column-by-column technique has the minor downside that each column estimate can be off by a different complex phase~$\bphi$.
However, there are a few simple ways to work around this flaw; in particular, the following result (proven in \Cref{sec:ancillary} below) gives a completely black-box method:

\begin{proposition} \label{prop:weaker-base}
    Let $\tomalg$ be a tomography algorithm as in \Cref{thm:base}, except with the following weaker guarantee about~$\bV$:
    \begin{equation}    \label{eqn:weaker-base}
        \exists \text{ a diagonal unitary~$\Phi$ such that } \opnorm{Z \Phi - \bV} \leq \eps \leq \tfrac{1}{8}.
    \end{equation}
    Then there is a tomography algorithm (using~$\tomalg$ twice) that achieves \Cref{thm:base}, with~$25\eps$ and $2\eta$ in place of $\eps$ and~$\eta$, and $O(\dd^3)$ additional classical time complexity.
\end{proposition}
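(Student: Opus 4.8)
The plan is to invoke $\tomalg$ twice — once on $Z$ and once on $ZF$, where $F$ is the $\dd\times\dd$ discrete Fourier transform — and then exploit that each column of $ZF$ is a uniform superposition, with \emph{known} phases, of all the columns of $Z$. Since $F$ is a fixed known unitary on the $\dd$-dimensional space, a query to $\calU(ZF)$ is realized by one query to $\calU(Z)$ followed by applying $F$, using no ancilla, so the single-qudit space guarantee of \Cref{thm:base} is preserved. Knowing the columns of both $Z$ and $ZF$ up to phase is enough to pin down the per-column phases that $\tomalg$ leaves undetermined.

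Concretely, running $\tomalg$ on $Z$ returns $\bV_1$ with $\opnorm{Z\Phi_1-\bV_1}\le\eps$ for some diagonal unitary $\Phi_1$, and running it on $ZF$ returns $\bV_2$ with $\opnorm{ZF\Phi_2-\bV_2}\le\eps$ for some diagonal unitary $\Phi_2$; by a union bound both hold except with probability $\le 2\eta$, which I assume henceforth. Writing $\bV_1=Z\Phi_1+E_1$, $\bV_2=ZF\Phi_2+E_2$ with $\opnorm{E_1},\opnorm{E_2}\le\eps$, the product $W:=\bV_1^\dagger\bV_2$ satisfies
\[
  \opnorm{W-\Phi_1^\dagger F\Phi_2}=\opnorm{\Phi_1^\dagger Z^\dagger E_2+E_1^\dagger ZF\Phi_2+E_1^\dagger E_2}\le 2\eps+\eps^2\le 3\eps,
\]
using $\eps\le\tfrac{1}{8}$. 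The $(\ell,k)$ entry of the ideal matrix $\Phi_1^\dagger F\Phi_2$ is $\tfrac{1}{\sqrt{\dd}}\,a_\ell\,\omega^{\ell k}\,b_k$, where $\omega=e^{2\pi\ii/\dd}$, $a_\ell:=\overline{(\Phi_1)_{\ell\ell}}$ and $b_k:=(\Phi_2)_{kk}$ all have modulus one.

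The key move is to ``untwist'' $W$: set $\widetilde W_{\ell k}:=\sqrt{\dd}\,\overline{\omega^{\ell k}}\,W_{\ell k}$. The ideal counterpart of $\widetilde W$ has $(\ell,k)$ entry $a_\ell b_k$, i.e.\ it is rank one with every row equal to the fixed vector $b=(b_1,\dots,b_\dd)$ scaled by the modulus-one scalar $a_\ell$; note $\|b\|_2=\sqrt{\dd}$. Since multiplying a row entrywise by the unit-modulus vector $(\overline{\omega^{\ell k}})_k$ is an $\ell_2$-isometry and the operator norm dominates every row's $\ell_2$-norm, each row of $\widetilde W$ lies within $3\sqrt{\dd}\,\eps$ of $a_\ell b$ in $\ell_2$. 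Hence, fixing (say) the first row, for each $\ell$ the inner product of the first row of $\widetilde W$ with its $\ell$-th row equals $\dd\,\overline{a_1}a_\ell$ up to an additive $O(\dd\eps)$ error, by Cauchy--Schwarz — this is where the two factors of $\sqrt{\dd}$ (from $\|b\|_2$ and from the row error) cancel. Dividing by $\dd$ and normalizing to modulus one yields $\widehat a_\ell$ with $|\widehat a_\ell-\overline{a_1}a_\ell|\le O(\eps)$; the hypothesis $\eps\le\tfrac{1}{8}$ is precisely what keeps the additive error below $1$, so that the normalization is well-conditioned and the resulting chord is bounded by an $\arcsin$. Finally output $\wh{\bV}:=\bV_1\,\diag(\widehat a_1,\dots,\widehat a_\dd)$: since $\Phi_1\,\diag(\widehat a_\ell)=\overline{a_1}\,\id+D$ with $D$ diagonal and $\opnorm D=\max_\ell|\widehat a_\ell-\overline{a_1}a_\ell|\le O(\eps)$, one gets $\opnorm{\wh{\bV}-\overline{a_1}Z}\le\opnorm{E_1}+\opnorm D=O(\eps)$, hence $\pudist{Z}{\wh{\bV}}\le O(\eps)$, which (using $\eps\le\tfrac18$ and $\arcsin x\le\tfrac\pi2 x$) comes out comfortably below $25\eps$. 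The only extra classical work is the two matrix products $\bV_1^\dagger\bV_2$ and $\bV_1\diag(\widehat a)$, i.e.\ $O(\dd^3)$, and the Hilbert space never exceeds $\dd$ dimensions.

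The step I expect to be the main obstacle is the error propagation in this phase-recovery stage. A direct operator-norm bound on $\widetilde W$ is worthless: entrywise multiplication by the matrix with entries $\sqrt{\dd}\,\overline{\omega^{\ell k}}$ — unit-modulus entries, but not unitary — can inflate the operator norm by up to a factor of $\sqrt{\dd}$, turning the $O(\eps)$ error into $\Omega(\sqrt{\dd}\,\eps)$ and destroying the precision. The fix is to avoid the operator norm of $\widetilde W$ entirely: argue row-by-row, bound the $\ell_2$ deviation of each row from $a_\ell b$ by $3\sqrt{\dd}\,\eps$, and notice that in the pairwise inner products this $\sqrt{\dd}$ is exactly offset by $\|b\|_2=\sqrt{\dd}$, so that after dividing by $\dd$ the error returns to $O(\eps)$. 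The remaining work — verifying that the accumulated constant stays below $25$ and that $\eps\le\tfrac18$ suffices for all the small-angle estimates — is routine bookkeeping.
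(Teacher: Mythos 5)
Your proposal is correct, but it takes a genuinely different route from the paper's proof. Both arguments run $\tomalg$ twice (you on $Z$ and $ZF$, the paper on $Z$ and $ZF^\dagger$) and both start from the same operator-norm bound $\opnorm{\bV_1^\dagger \bV_2 - \Phi_1^\dagger F \Phi_2} \le O(\eps)$, but the phase-recovery step differs. The paper works entrywise: by pigeonhole, in each column at least a $3/4$ fraction of entries are within $4\eps/\sqrt{\dd}$ of their ideal values, so the ratios $\braket{a|P|b}/\braket{a|P|1}$ (with $P$ the matrix of entries divided by the known Fourier entries) approximate the relative column phases for at least half the rows $a$, and a coordinatewise median over $a$ extracts each phase robustly to $O(\eps)$. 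You instead divide out the known Fourier phases to make the ideal matrix rank one ($a_\ell b_k$) and recover the relative row phases from inner products of rows, with the $\sqrt{\dd}$ row error exactly offset by $\twonorm{b} = \sqrt{\dd}$ via Cauchy--Schwarz; your obstacle paragraph correctly identifies why one must not take the operator norm of the untwisted matrix and why the row-by-row argument sidesteps this. Your approach avoids the pigeonhole/median machinery entirely and its constant bookkeeping is arguably cleaner (I get roughly $15\eps$ final error, comfortably under $25\eps$); the paper's median step buys robustness to a constant fraction of corrupted entries, which is not needed here since the operator-norm bound already controls every row. Both use $O(\dd^3)$ extra classical time and preserve the single-qudit space bound. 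One cosmetic slip: to realize a query to $\calU(ZF)$ you should apply $F$ \emph{first} and then query $Z$ (querying $Z$ and then applying $F$ implements $FZ$); this does not affect the argument.
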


Putting all of the above together, we can establish our base tomography result:
\begin{proof}[Proof of \Cref{thm:base}]
    From the preceding discussion, it suffices to obtain a unitary tomography routine achieving \Cref{eqn:weaker-base} with failure probability at most~$\eta = \frac{1}{6}$.
    We apply the pure state tomography routine from \Cref{prop:puretomog}, with $\eps_0 = \Omega(\eps^2)$ to be chosen later, on each of $\ket{z_1}, \dots, \ket{z_{\dd}}$, where $|z_j\rangle = Z|j\rangle$.
    This indeed uses $Z$ at most $O(\dd^2/\eps^2)$ times, within a Hilbert space of dimension only~$\dd$, and produces estimates $\ket{\bu_1}, \dots, \ket{\bu_{\dd}}$.
    Let $\bU \in \CC^{\dd \times \dd}$ denote the (possibly nonunitary) matrix with the $\ket{\bu_j}$'s as columns; we will show that 
    \begin{equation}    \label{eqn:half-base}
        \exists \text{ diagonal unitary~$\Phi$ such that } \opnorm{Z \Phi - \bU} \leq \eps/2,
    \end{equation}
    except with probability at most~$\frac{1}{50}$.
    Then if we output any unitary~$\bV$ satisfying $\opnorm{\bU - \bV} \leq \eps/2$ 
    (for example, $\bV = \bX\bY^\dagger$, where $\bU = \bX\bSigma \bY^\dagger$ is a singular value decomposition),
    then $\bV$ will satisfy \Cref{eqn:weaker-base}, as desired.
    
    To establish \Cref{eqn:half-base}, write
    \begin{equation}   
        \ket{\bu_j} = \bphi_j \sqrt{1-\beps_j}\ket{z_j} + \sqrt{\beps_j}\ket{\bw_j}
    \end{equation}
    for $j \in [d]$ as in \Cref{eqn:puretomog}.
    Then with $\Phi = \diag(\bphi_1, \dots, \bphi_d)$ and $\bW$ the matrix with $\ket{\bw_j}$'s as columns, we have
    \begin{equation}
        \bU - Z \Phi = Z \Phi \bD + \bW \bE, \quad \text{where} \quad \bD \coloneqq \diag\parens[\Big]{\parens[\big]{\sqrt{1-\beps_j} - 1}_j}, \ \bE \coloneqq \diag(\sqrt{\beps}). 
    \end{equation}
    We have 
    \begin{equation}
        \opnorm{\bD} \leq \opnorm{\bE} = \max\{\sqrt{\beps_j}\} \leq \sqrt{\eps_0}
    \end{equation}
    except with probability at most~$\frac{1}{100}$, by a union bound.
    We will also shortly show:
    \begin{equation}    \label{eqn:claimW}
        \textbf{Claim:} \qquad \opnorm{\bW} \leq C \quad \text{except with probability at most~$\tfrac{2}{100}$.}
    \end{equation} 
    (Here $C$ is a universal constant.)
    Combining the above, we conclude
    \begin{equation}
        \opnorm{ {\bU} - Z \Phi} \leq 1 \cdot 1 \cdot \sqrt{\eps_0} + C \cdot \sqrt{\eps_0}
    \end{equation}
    except with probability at most~$\frac{1}{100} + \frac{2}{100} < \frac{1}{6}$, and this norm bound is at most~$\eps/2$ (as needed for \Cref{eqn:half-base}) provided we take the constant in $\eps_0 = \Omega(\eps^2)$ small enough.

    It remains to prove the claim from \Cref{eqn:claimW}.
    Recall that~$\bW$ has independent unit columns $\ket{\bw_j}$, with $\ket{\bw_j}$ Haar-random orthogonal to~$\ket{z_j}$.  
    Introduce i.i.d.\ real random variables $\bdelta_1, \dots, \bdelta_{\dd}$, where~$\bdelta_j$ is distributed as $\abs{\braket{1|\bx}}^2$ for $\ket{\bx}$ a Haar-random unit vector in~$\CC^{\dd}$.
    If we further introduce i.i.d.\ uniformly random complex phases $\bpsi_1, \dots, \bpsi_{\dd}$, then the unit vectors
    \begin{equation}    \label{eqn:Y-from-W}
        \ket{\by_j} \coloneqq \sqrt{\bdelta_j}\bpsi_j \ket{z_j} + \sqrt{1-\bdelta_j}\ket{\bw_j}
    \end{equation}
    are in fact Haar-random and independent.
    Letting~$\bY$ denote the matrix with the $\ket{\by_j}$'s as columns, it is a standard fact in random matrix theory\footnote{For example, if~$\bY$'s columns were independent Haar-random unit vectors in~$\RR^{\dd}$ (as opposed to~$\CC^{\dd}$) then \cite[Theorems~3.4.6,~4.6.1]{vershynin18} would directly yield that $\opnorm{\bY} \leq K$ except with probability at most~$2\exp(-cK^2\dd)$, for some constant $c > 0$. The generalization to the complex case is very minor.} that, for some universal constant~$K$, we have \begin{equation}
        \opnorm{\bY} \leq K  \text{ except with probability at most $\tfrac{1}{100}$}.
    \end{equation}
    We can now rewrite \Cref{eqn:Y-from-W} as 
    \begin{equation}
        \bY = Z \bDelta_0 + \bW \bDelta_1, \quad \text{where} \quad 
        \bDelta_0 \coloneqq \diag\parens[\Big]{\parens[\big]{\sqrt{\bdelta_j} \bpsi_j}_j}, \quad 
        \bDelta_1 \coloneqq \diag\parens[\Big]{\parens[\big]{\sqrt{1-\bdelta_j}}_j}.
    \end{equation}
    from which we can conclude
    \begin{equation}
        \opnorm{\bW} \leq (\opnorm{\bY} + \opnorm{Z \bDelta_0})\cdot \opnorm{\bDelta_1^{-1}} \leq (K+1) \cdot (1-{\max}_j\{\bdelta_j\})^{-1/2}
    \end{equation}
    except with probability at most $\frac{1}{100}$.
    Thus to complete the proof of the claim in \Cref{eqn:claimW}, it suffices to show 
    \begin{equation} \label{eqn:finalmax}
        {\max}_j\{\bdelta_j\} \leq 1 - 1/K' \quad \text{except with probability at most $\tfrac{1}{100}$}
    \end{equation} 
    for some constant~$K'$.
    Note that for each constant value of~$\dd$, the random variable~$\bdelta_j$ has a continuous probability density on~$[0,1]$; from this observation, it's easy to deduce that it suffices to prove \Cref{eqn:finalmax} under the assumption $\dd \geq \dd_0$ for some constant~$\dd_0$.
    But this is easy: $\bdelta_j^2$~has mean~$1/\dd$ and is sub-exponentially distributed with parameter~$\Theta(1/\dd)$ \cite[Proposition~2.7.1, Lemma~2.7.6, Theorem~3.4.6]{vershynin18}; hence $\Pr[\bdelta_j^2 \geq 1/2] \leq \exp(-\Omega(\dd))$ and---with a union bound---this is more than sufficient for \Cref{eqn:finalmax}, once~$\dd$ is sufficiently large.
\end{proof}

\subsection{Ancillary results for base tomography} \label{sec:ancillary}

We begin by giving a proof of the pure state tomography result we needed:
\begin{proof}[Proof of \Cref{prop:puretomog}]
    This result was essentially proven in~\cite{CL14,KUENG201788}.
    To be precise, we will refer to the analysis from~\cite[Theorem~2]{Guta2020}.
    The result therein is exactly what we need, except for the following distinctions:
    \begin{itemize}
        \item Rather than making von Neumann measurements, \cite{Guta2020} refer to performing the ``uniform POVM'' on each copy of~$\ket{z}$; this is the continuous POVM with elements labeled by unit vectors $\ket{v} \in \CC^{\dd}$, in which the $\ket{v}$-element has density $\dd \cdot \ket{v}\!\bra{v}$ with respect to Haar measure on~$\ket{v}$.
        However, this is mathematically equivalent to first using classical randomness to choose a Haar-random~$\bV \sim \UU(\dd)$, and then projectively measuring in the basis of~$\bV$'s columns.
        \item The classical post-processing algorithms in \cite{KUENG201788,Guta2020} do not necessarily output a pure (rank-one) hypothesis; they output a possibly mixed state~$\brho$, counting it as a success (in the case of \cite{Guta2020}) when $\onenorm{|z\rangle\langle z| - \brho} \leq \sqrt{\eps_0}$.
        This also means they do not explicitly confirm condition~(iii) in \Cref{prop:puretomog}, concerning the Haar-randomness of~$\ket{\bw}$.
    \end{itemize}
        
    But inspection of the actual algorithm in \cite{Guta2020} shows that this second issue is easily fixed. The algorithm first forms $\bL = (\dd+1)\mathrm{avg}\{\ket{\bv_j}\!\bra{\bv_j}\} - \id$, where the $\ket{\bv_j}$'s are the measurement outcomes.  
    With $O(\dd/\eps_0)$ measurements, this matrix is shown to satisfy $\opnorm{|z\rangle\langle z| - \bL} \leq \frac14\eps_0$ except with probability at most~$\exp(-5\dd)$. 
    The authors of \cite{Guta2020} then ``round''~$\bL$ to a quantum state~$\brho$ by first diagonalizing it as $\bW \diag(\blambda) \bW^\dagger$ for $\bW \in \UU(\dd)$ and $\blambda \in \CC^{\dd}$, and then taking $\brho = \bW \diag(\blambda') \bW^\dagger$, where $\blambda'$ is the nearest probability vector to~$\blambda$.  
    With this adjustment, they show that $\onenorm{\ket{u}\!\bra{u} - \brho} \leq \sqrt{\eps_0}$ as needed. 

    Note that this last inequality implies that the \emph{closest} rank-$1$ matrix~$\bM$ to~$\brho$ must satisfy \mbox{$\onenorm{\bM - \brho} \leq \sqrt{\eps_0}$}.  
    On the other hand, it is well known that~$\bM$ is simply given by $\bW \diag(\blambda'') \bW^\dagger$, where~$\blambda''$ is formed from~$\blambda'$ by zeroing out all entries except the largest.
    It follows that the zeroed-out entries sum to at most~$\sqrt{\eps_0}$, and hence $\Tr(\bM) \geq 1 - \sqrt{\eps_0}$.
    If we now form $\widetilde{\bM} = \frac{\bM}{\Tr(\bM)}$, then $\widetilde{\bM}$ is a rank-$1$ state with $\onenorm{\widetilde{\bM} - \bM} \leq O(\sqrt{\eps_0})$ and hence $\onenorm{\widetilde{\bM} - \brho} \leq O(\sqrt{\eps_0})$ and $\onenorm{\ket{u}\!\bra{u} - \widetilde{\bM} } \leq O(\sqrt{\eps_0})$.
    Thus we may express $\widetilde{\bM} = \ket{\wh{\bu}}\!\bra{\wh{\bu}}$ and output~$\ket{\wh{\bu}}$ (after slightly adjusting the constant on~$\eps_0$).

    It only remains to observe that the process of forming~$\bL, \brho, \bM, \widetilde{\bM}, \ket{\wh{\bu}}$ is completely symmetric with respect to the subspace orthogonal to~$\ket{u}$, and hence the ``error vector''~$\ket{\bw}$ is indeed distributed Haar-randomly.
\end{proof}

Next, we give the algorithm for \Cref{prop:weaker-base}, which fixes the column phases for our unitary tomography algorithm:
\begin{proof}[Proof of \Cref{prop:weaker-base}]
Let $F \in \UU(\dd)$ denote the discrete Fourier transform, so $\braket{a|F|b} = (1/\sqrt{\dd})\exp(-2\pi \ii a b/\dd)$.
(The only property we use about~$F$ is that each entry has the same magnitude; if $\dd = 2^\nn$, one may use the Hadamard transform $H^{\otimes \nn}$ instead.)
Given~$Z$, our algorithm applies $\tomalg$ once to $Z$ and once to~$Z F^\dagger$; call the results $\bV, \bG \in \UU(\dd)$, respectively.
Except with probability at most~$2\eta$, we get that there exist diagonal unitaries $\bPhi_\bV$~and~$\bPhi_\bG$ such that
\begin{equation}    \label{eqn:eps}
    \opnorm{Z\bPhi_\bV - \bV},\ \opnorm{ZF^\dagger\bPhi_\bG - \bG} \leq \eps.
\end{equation}
Given this, we conclude that $\bG^\dagger \bV$ is close to $F$, up to phases on rows and columns.
\begin{align}
    \opnorm{\bG^\dagger\bV - \bPhi_\bG^\dagger F \bPhi_\bV}
    &\leq \opnorm{(\bG - ZF^\dagger \bPhi_\bG)^\dagger\bV} + \opnorm{(ZF^\dagger \bPhi_\bG)^\dagger(\bV - Z\bPhi_\bV)} \nonumber\\
    &= \opnorm{\bG - ZF^\dagger \bPhi_\bG} + \opnorm{\bV - Z\bPhi_\bV}
    \leq 2\eps \label{eq:dephasing}
\end{align}
We can then deduce $\bPhi_\bV$ by computing $\bG^\dagger \bV$ in $O(\dd^3)$ time, and then essentially reading off the relative column phases.
First, notice that by pigeonhole principle applied to \cref{eq:dephasing},
\begin{align}
    \text{for every }b \in [\dd],\,\abs[\Big]{\braket{a|\bG^\dagger \bV|b} - \braket{a|\bPhi_\bG^\dagger F \bPhi_\bV|b}} \leq 4\eps/\sqrt{\dd} \text{ for at least a } \tfrac{3}{4} \text{ fraction of } a \in [\dd].
    \label{eq:pigeon}
\end{align}
Let $C(a,b)$ denote the inequality in \cref{eq:pigeon}.
Since $\braket{a|\bPhi_\bG^\dagger F \bPhi_\bV|b} = \overline{\bphi}_\bG^{(a)}\bphi_{\bV}^{(b)}\braket{a|F|b}$ has magnitude $\frac{1}{\sqrt{\dd}}$, we can define $P \in \mathbb{C}^{\dd \times \dd}$ to be $\braket{a|P|b} = \braket{a|\bG^\dagger \bV|b}/\braket{a|F|b}$, and it follows that
\begin{align}
    C(a,b)
    &\implies \abs{\braket{a|P|b} - \overline{\bphi}_\bG^{(a)}\bphi_{\bV}^{(b)}} \leq 4\eps; \\
    C(a,1) \text{ and } C(a, b)
    &\implies \abs[\Big]{\frac{\braket{a|P|b}}{\braket{a|P|1}} - \frac{\bphi_\bV^{(b)}}{\bphi_\bV^{(1)}}} \leq \frac{2\cdot 4\eps}{1-4\eps} \leq 16\eps. \label{eq:relative-phase}
\end{align}
By \cref{eq:pigeon}, $C(a,1)$ and $C(a,b)$ hold for at least half of $a$'s.
Let $\bpsi_b$ denote the coordinate-wise (real and imaginary) median of $\{\braket{a|P|b}/\braket{a|P|1}\}_{a \in [\dd]}$.
Then by \cref{eq:relative-phase},
\begin{align}
    \abs[\Big]{\bpsi_b - \bphi_\bV^{(b)}/\bphi_\bV^{(1)}}
    &\leq \sqrt{\Real(\bpsi_b - \bphi_\bV^{(b)}/\bphi_\bV^{(1)})^2 + \Imag(\bpsi_b - \bphi_\bV^{(b)}/\bphi_\bV^{(1)})^2}
    \leq \sqrt{(16\eps)^2 + (16\eps)^2}
    \leq 24\eps.
\end{align}
Consequently, $\abs{\bpsi_b/\abs{\bpsi_b} - \bphi_\bV^{(b)}/\bphi_\bV^{(1)}} \leq 48\eps$.
Let $\bPsi$ be the (unitary) matrix with $\bpsi_b/\abs{\bpsi_b}$ on the diagonal.
Then $\opnorm{\bphi_\bV^{(1)}\bPsi - \bPhi_\bV} \leq 24\eps$.
Combining this with $\opnorm{Z\bPhi_\bV - \bV} \leq \eps$ from \Cref{eqn:eps} easily yields
\begin{equation}
    \pudist{Z}{\bV\bPsi^\dagger}
    \leq \opnorm{\bphi_\bV^{(1)}Z - \bV\bPsi^\dagger}
    \leq \opnorm{\bphi_\bV^{(1)}Z - Z\bPhi_\bV\bPsi^\dagger} + \opnorm{Z\bPhi_\bV\bPsi^\dagger - \bV\bPsi^\dagger}
    \leq 25\eps,
\end{equation}
and so our algorithm may output $\bV \bPsi^\dag$.
\end{proof}

Finally, we give the (completely standard) trick for boosting confidence:
\begin{proposition} \label{prop:confidence}
    Let $\mathcal{L}$ be a learning algorithm for objects in a metric space with efficiently computable distance $\mathrm{distance}({\cdot},{\cdot})$.
    Assume that on input~$Z$, the algorithm $\mathcal{L}$ outputs $\bV$ satisfying $\mathrm{distance}(Z,\bV) \leq \eps$ except with probability at most~$.49$.
    Then there is an algorithm~$\mathcal{L}'$ that takes an additional input~$\eta > 0$, uses $\mathcal{L}$ at most $O(\log(1/\eta))$ times, and guarantees $\mathrm{distance}(Z,\bV) \leq 3\eps$ except with probability at most~$\eta$.
\end{proposition}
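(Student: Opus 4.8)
The plan is to use the standard ``median trick'' (majority-vote) argument. First I would run $\mathcal{L}$ independently $m$ times, for a value $m = O(\log(1/\eta))$ to be fixed, obtaining outputs $\bV_1, \dots, \bV_m$. Call an index $i$ \emph{good} if $\mathrm{distance}(Z, \bV_i) \le \eps$; by hypothesis each $i$ is good independently with probability at least $0.51$. Since $0.51 - \tfrac12$ is a positive constant, a Chernoff/Hoeffding bound (valid for independent, not-necessarily-identical indicators each with success probability $\ge 0.51$) gives that the number of good indices is strictly more than $m/2$ except with probability at most $\exp(-cm)$ for a universal constant $c > 0$; choosing $m = O(\log(1/\eta))$ makes this failure probability at most $\eta$. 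Condition from now on on the event $\mathcal{G}$ that a strict majority of the indices are good.

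Next I would have $\mathcal{L}'$ compute, for each $i$, the vote count $N_i = \#\{\, j \in [m] : \mathrm{distance}(\bV_i, \bV_j) \le 2\eps \,\}$ --- this is where the assumption that $\mathrm{distance}({\cdot},{\cdot})$ is efficiently computable is used --- and output any $\bV_i$ with $N_i > m/2$. Such an index exists on $\mathcal{G}$: if $i$ and $j$ are both good then the triangle inequality gives $\mathrm{distance}(\bV_i, \bV_j) \le \mathrm{distance}(\bV_i, Z) + \mathrm{distance}(Z, \bV_j) \le 2\eps$, so any good index $i$ already has $N_i \ge \#\{\text{good } j\} > m/2$.

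Finally I would argue correctness of any output $\bV_i$ with $N_i > m/2$. On $\mathcal{G}$, the set $G$ of good indices has $|G| > m/2$ and the set $B_i = \{\, j : \mathrm{distance}(\bV_i, \bV_j) \le 2\eps \,\}$ has $|B_i| = N_i > m/2$; since $|G| + |B_i| > m$, these two subsets of $[m]$ intersect, so there is a good index $j$ with $\mathrm{distance}(\bV_i, \bV_j) \le 2\eps$. Then $\mathrm{distance}(Z, \bV_i) \le \mathrm{distance}(Z, \bV_j) + \mathrm{distance}(\bV_j, \bV_i) \le \eps + 2\eps = 3\eps$, as required, and the number of calls to $\mathcal{L}$ is $m = O(\log(1/\eta))$. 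I do not expect a genuine obstacle here; the one point to be careful about is simply that the hypothesis constant $0.49$ leaves a constant-size gap above $1/2$, which is exactly what the concentration estimate needs, and that the $2\eps$ radius for the voting (rather than $\eps$) is what makes good indices mutually count for each other.
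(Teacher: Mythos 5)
Your proposal is correct and follows essentially the same route as the paper's proof: independent repetitions, a Chernoff bound ensuring a majority of $\eps$-good estimates, selecting any estimate that is within $2\eps$ of a majority of the others, and a pigeonhole-plus-triangle-inequality argument giving the final $3\eps$ bound. The only cosmetic difference is your strict-majority threshold versus the paper's $.505T$, which changes nothing substantive.
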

\begin{proof}
    For $T = O(\log(1/\eta))$ times, run~$\mathcal{L}$ independently, obtaining $\bV_1, \dots, \bV_T$.  
    By a standard Chernoff bound, except with probability at most~$\eta$, there are at least $.505T$ ``good'' estimates, where we say $\bV_j$ is ``good'' if $\dist(Z,\bV_j) \leq \eps$.  
    By the triangle inequality, every good estimate is also ``central'', where we say estimate $\bV_{j_0}$ is ``central'' if it has the following property:  $\dist(\bV_{j_0}, \bV_k) \leq 2\eps$ for at least $.505T$ estimates $\bV_k$.
    Let $\mathcal{L}$ now select and output any central estimate~$\bV_{j^*}$; the method of brute-force checking central-ness has time complexity $O(T^2)$ times the cost of computing a distance, which is $O(\dd^3)$ for $\dd \times \dd$ matrices.
    Since $.505T + .505T > T$, the Pigeonhole Principle implies that at least one of the estimates $\bV_k$ for which $\dist(\bV_{j^*}, \bV_k) \leq 2\eps$ is also good.  
    Thus the triangle inequality implies $\dist(\bV_{j^*}, Z) \leq 3\eps$, as desired.
\end{proof}

\section{\texorpdfstring{Bootstrap of precision to $O(\dd^2/\eps)$ applications}{Bootstrap of precision to O(d²/ε) applications}}

\subsection{Key lemma: Geometry of unitary groups} \label{sec:geometry}

The purpose of this subsection is to recall 
some well-known metric notions
on unitary groups and projective unitary groups; 
see~\cite{Szarek1997} and~\cite[\S 8.3.3]{KSV}.
Only \cref{lem:sqrt} will be used in later sections.
Readers who are familiar with the intrinsic metric induced by operator norm
may quickly proceed to the next subsection.

Consider the operator norm on the Lie algebra~$\uu(\dd)$ of all $\dd$-by-$\dd$ antihermitian matrices.
By demanding left- and right-invariance 
we obtain a metric on a Lie group~$\UU(\dd)$,
as defined in \cref{def:pu-lie-norm}.
Recalling this definition,
the length of a smooth path $\gamma : [0,1] \to \UU(\dd)$ is given by 
\begin{equation}
    \int_0^1  \opnorm{\gamma'(t) \gamma(t)^{-1}} \rd t = \int_0^1  \opnorm{\gamma'(t)} \rd t
\end{equation}
where $\gamma'(t)\gamma(t)^{-1} \in \uu(\dd)$.
The distance between two points of $\UU(\dd)$ 
is the infimum of the lengths of all smooth paths connecting the two:
\begin{equation}
    \dist(U,V) = \inf_{\text{paths }\gamma} \int_0^1  \opnorm{\gamma'(t)} \rd t .
\end{equation}
This definition makes it obvious that $\dist$ is a metric, obeying the triangle inequality.
Though $\dist$ is equivalent to the ``extrinsic'' metric $\opnorm\cdot$,
we will use the intrinsic metric in this section because it leads us
to think in terms of the Lie algebra.

For any $U \in \UU(\dd)$ and any real number $r > 0$,
we define an open metric ball of radius~$r$ centered at~$U$ by
\begin{equation}
    \BB(r)U = \{ V \in \UU(\dd) \mid \dist(U, V) < r \} 
    = \{W U \in \UU(\dd) \mid \dist(W,\id) < r \}
\end{equation}
As the notation suggests, due to the right invariance of the metric,
the ball~$\BB(r)U$ is a shift of $\BB(r)\id$.
We will write~$\BB(r)$ for~$\BB(r)\id$.

Every unitary quantum channel is defined by a unitary,
but specifies the unitary only up to a global phase factor.
This motivates us to consider projective unitary groups~$\UU(\dd) / \uone$
where $\uone$ is the center of~$\UU(\dd)$ consisting of phase factors.
The dimension of a matrix in the center~$\uone$ is implicit.
The metric $\dist$ induces a metric on the projective unitary group 
for which we use the same notation.
In analogy with $\BB$, we write
\begin{equation}
    \PPBB(r) = \{ V \in \UU(\dd)~|~\dist(\uone\id,V) < r \}.
\end{equation}
Strictly speaking, this is not a metric ball of the projective unitary group
since $\PPBB(r)$ is a subset of~$\UU(\dd)$,
but this will hardly cause any confusion below.

We consider the fractional power of a unitary in a small neighborhood of~$\id$
for any real~$r > 0$.
The small neighborhood is actually~$\BB(\pi)$
and we define
\begin{align}
    (e^X)^{r} = e^{r X} \in \UU(\dd)
\end{align}
for any~$X \in \uu(\dd)$ with~$\opnorm X < \pi$,
which we remark is a strict inequality.
This is a proper definition 
because the exponential map is injective in the corresponding neighborhood 
of~$0 \in \uu(\dd)$; see \cref{lem:Szarek} below.
We will use the following lemma in our accuracy boosting algorithm
in the next subsection.

\begin{lemma}\label{lem:sqrt}
    For any $U, V \in \PPBB(\tfrac{1}{3\pi})$ and $p \geq 1$,
    \begin{equation} \label{eq:root-lipschitz-proj}
        \dist(U^{1/p}, V^{1/p}\uone) \leq \frac{\pi^2}{2p} \dist(U, V\uone).
    \end{equation}
\end{lemma}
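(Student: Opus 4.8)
The plan is to lift the inequality from the projective group to the group $\UU(\dd)$ itself, reduce to the case where $U$ and $V$ are honest group elements near $\id$, and then exploit the injectivity and Lipschitz behavior of the exponential map on the ball $\BB(\pi)$. Concretely, I would first unfold the projective distance: pick representatives $U, V \in \UU(\dd)$ with $\dist(U, V) = \dist(U, V\uone) =: \delta < \tfrac{1}{3\pi}$ (such representatives exist since the infimum over phases is attained by compactness), and note that after a further global-phase rotation of $V$ by a phase near $1$ we may assume $U$ and $V$ both lie in $\BB(c)$ for a small constant $c$ — here is where the hypothesis $U,V \in \PPBB(\tfrac{1}{3\pi})$ is used, so that $\dist(\id, U), \dist(\id, V)$ and $\dist(U,V)$ are all $O(1/\pi)$. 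Then write $U = e^X$, $V = e^Y$ with $X, Y \in \uu(\dd)$, $\opnorm{X}, \opnorm{Y} < \pi$ (legitimate by the injectivity statement referenced as \cref{lem:Szarek}), so that the fractional powers $U^{1/p} = e^{X/p}$, $V^{1/p} = e^{Y/p}$ are well-defined. Since left-translating by $e^{-X/p}$ is an isometry, it suffices to bound $\dist(\id, e^{-X/p}e^{Y/p}\uone) \le \frac{\pi^2}{2p}\dist(\id, e^{-X}e^{Y}\uone)$, i.e.\ it suffices to prove the statement in the case $U = \id$, reducing \eqref{eq:root-lipschitz-proj} to a bound of the form $\dist(\id, (e^{W})^{1/p}\uone) \le \frac{\pi^2}{2p}\dist(\id, e^{W}\uone)$ for a single small antihermitian $W$ — except that $e^{-X/p}e^{Y/p}$ is not literally $(e^{-X}e^{Y})^{1/p}$, which is the one subtlety I flag below.

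The cleanest route that avoids the Baker--Campbell--Hausdorff issue above is to argue directly with paths. Given the minimizing phase $\phi_0 \in \uone$ with $\dist(U, \phi_0 V) = \delta$, take a shortest path $\gamma$ in $\UU(\dd)$ from $U$ to $\phi_0 V$ of length $\delta$; by the eigenvalue characterization of $\dist$ (the remark after \cref{def:pu-lie-norm}), we can in fact take the "straight-line" geodesic $\gamma(t) = U \exp(tA)$ where $e^{A} = U^\dagger \phi_0 V$ and $A \in \uu(\dd)$ has $\opnorm{A} = \delta < \pi$. Now define $\tilde\gamma(t) = U^{1/p}\exp(tA/p)$; this is a smooth path from $U^{1/p}$ to $U^{1/p}e^{A/p}$, and $\operatorname{len}(\tilde\gamma) = \int_0^1 \opnorm{A/p}\,dt = \delta/p$. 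If $U^{1/p}e^{A/p}$ were equal to $V^{1/p}$ up to phase we would be done with constant $1/p$, but it is not — $(U^\dagger V)^{1/p} = e^{A/p}$ only after absorbing the phase, and $U^{1/p}e^{A/p}$ differs from $V^{1/p}$ by the discrepancy between "multiply then take root" and "take roots then multiply." So the real content is: bound $\dist\!\big(U^{1/p}e^{A/p},\ V^{1/p}\uone\big)$. Writing $U = e^X$, $\phi_0 V = e^X e^A = e^Z$ with $\opnorm{Z} < \pi$ (again by injectivity on $\BB(\pi)$, valid since both $X$ and $Z$ are forced to be small by the hypotheses), we have $V^{1/p}\uone = e^{Z/p}\uone$, and we must compare $e^{X/p}e^{A/p}$ with $e^{Z/p}$. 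Both $e^{X/p}e^{A/p}$ and $e^{Z/p}$ lie in a small ball around $\id$, and their "logarithms" $X/p + A/p + O(1/p^2)$ (from BCH, since $[X,A]$ has norm $O(1/\pi^2)$) versus $Z/p$ differ by $O(1/p^2) \cdot \opnorm{X}\opnorm{A}$; since $\dist$ is Lipschitz-equivalent to $\opnorm{\,\cdot\,}$ near $\id$ (\cref{lem:Szarek}), this gives $\dist(U^{1/p}e^{A/p}, V^{1/p}) = O(\delta^2/p^2) = O(\delta/p)$ for $\delta = O(1/\pi)$. Combining with the length-$\delta/p$ path via the triangle inequality yields $\dist(U^{1/p}, V^{1/p}\uone) \le \delta/p + O(\delta/p)$, and the constants in the hypothesis $\PPBB(\tfrac{1}{3\pi})$ are chosen precisely so that the total is at most $\frac{\pi^2}{2p}\delta$.

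The main obstacle, then, is controlling the non-commutativity: the map $U \mapsto U^{1/p}$ is not a group homomorphism, so "take the $p$th root of the minimizing ratio $U^\dagger\phi_0 V$" is not the same as "compare the $p$th roots of $U$ and $V$ separately," and the gap between these must be shown to be $O(\delta/p)$ — quadratically small in $\delta$ but only linearly small in $1/p$. I expect the authors' proof to be slicker than this: since only the constant $\pi^2/2$ (rather than something tight) is needed, one can likely fold the BCH estimate into a crude bound using just $\opnorm{X/p + A/p - Z/p} \le C\,\opnorm{[X,A]}/p^2 \le C'\delta^2/p^2$, or sidestep BCH entirely by an integral/mean-value estimate along the path $t \mapsto \log\!\big(e^{tX/p}e^{tA/p}\big)$. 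The remaining steps — attainment of the minimizing phase by compactness, the eigenvalue/geodesic description of $\dist$, and the Lipschitz equivalence of $\dist$ and $\opnorm{\,\cdot\,}$ on $\BB(\pi)$ — are all either quoted from \cref{lem:Szarek} or routine.
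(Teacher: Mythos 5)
Your route is genuinely different from the paper's, and the difference is instructive. The paper never takes the logarithm of the ratio $U^\dagger \phi_0 V$ and never touches Baker--Campbell--Hausdorff in the main argument: after the same phase-centering step you describe (bounding the minimizing phase $\theta$ so that $Ue^{-\ii\theta/2}, Ve^{\ii\theta/2} \in \BB(1/\pi)$), it writes $U = e^X$, $V = e^Y$ \emph{separately} and chains four inequalities,
$\dist(U^{1/p},V^{1/p}) \le \opnorm{e^{X/p}-e^{Y/p}} \le \tfrac1p\opnorm{X-Y} \le \tfrac{\pi}{p}\opnorm{e^X-e^Y} \le \tfrac{\pi^2}{2p}\dist(U,V)$,
using only \cref{eq:metricEq}, the $1$-Lipschitzness of $\exp$ (\cref{eq:exp-bound}), and the $\pi$-Lipschitzness of its inverse on the small ball (\cref{eq:inv-exp-bound}). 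In other words, the non-commutativity problem you identify as the ``main obstacle'' simply never arises: one compares the $p$th roots of $U$ and $V$ through their own logarithms rather than through the logarithm of their ratio. (A BCH-type estimate, essentially your \cref{eq:bch}, does appear in the paper, but only inside the proof of \cref{eq:inv-exp-bound}.)

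Your path-plus-BCH route can be made to work, but one stated estimate is wrong and needs repair. With $Z = \log(e^X e^A)$, the discrepancy between $\log(e^{X/p}e^{A/p})$ and $Z/p$ is \emph{not} $O(\opnorm{X}\opnorm{A}/p^2)$: BCH gives $\log(e^{X/p}e^{A/p}) = \tfrac{X+A}{p} + \tfrac{1}{2p^2}[X,A] + \cdots$, whereas $Z/p = \tfrac{X+A}{p} + \tfrac{1}{2p}[X,A] + \cdots$, so the commutator terms do not cancel and the difference is of order $\opnorm{[X,A]}/p$. This still yields the $O(\delta/p)$ you need, since $\opnorm{X} = O(1)$ on the small ball, so the architecture survives; but the claim of quadratic smallness in $1/p$ is false, and your assertion that ``the constants in the hypothesis $\PPBB(\tfrac{1}{3\pi})$ are chosen precisely so that the total is at most $\tfrac{\pi^2}{2p}\delta$'' is left unverified (a rough accounting with the corrected estimate gives a total on the order of $3\delta/p$, which does fit under $\tfrac{\pi^2}{2p}\delta$, but that bookkeeping has to be done). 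The remaining ingredients you cite --- attainment of the minimizing phase, the eigenvalue/geodesic description of $\dist$, and identifying $(\phi_0 V)^{1/p}$ with $V^{1/p}$ up to phase after checking the phase is small --- match what the paper does and are fine. Net assessment: a correct proof can be extracted from your outline after fixing the BCH exponent and tracking constants, but the paper's argument is shorter precisely because it avoids the root-of-a-product versus product-of-roots comparison altogether.
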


The proof of this lemma will use the following.

\begin{lemma}[Following Eq.\ 5, Lemma 3, and Lemma 4 of \cite{Szarek1997}]\label{lem:Szarek}
For any~$U \in \UU(\dd)$, there is~$X \in \uu(\dd)$ with~$\opnorm X \le \pi$
such that~$U = e^X$. If~$e^X = e^Y$ with~$\opnorm X , \opnorm Y < \pi$, then~$X=Y$.
Further, the following hold:
\begin{enumerate}[label=\emph{(\alph*)},ref=\ref{lem:Szarek}(\alph*)]
\item For any~$X \in \uu(\dd)$ such that~$\opnorm X \le \pi$,
$\dist(\id, e^X) = \opnorm{X}$. \label{eq:dist-by-eigenvalues}

\item For any~$U,V \in \UU(\dd)$,
    $\opnorm{U-V} \le \dist(U,V) \le \frac \pi 2 \opnorm{U - V}$. \label{eq:metricEq}

\item
For any $X,Y \in \uu(\dd)$,
    $\opnorm{e^X - e^Y} \le \opnorm{X-Y}$. \label{eq:exp-bound}

\item 
For any $X, Y \in \uu(\dd)$ such that $\opnorm X , \opnorm Y \le \frac{1}{\pi}$,
    $\opnorm{X-Y} \le \pi \opnorm{e^X - e^Y}$. \label{eq:inv-exp-bound}
\end{enumerate}
\end{lemma}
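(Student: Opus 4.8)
The first sentence is the spectral theorem together with a short commutation argument. For existence I would diagonalize $U = \sum_j e^{\ii\theta_j}\ket{v_j}\bra{v_j}$ in an orthonormal basis with angles $\theta_j \in (-\pi,\pi]$ and set $X = \ii\sum_j\theta_j\ket{v_j}\bra{v_j} \in \uu(\dd)$, so that $\opnorm X = \max_j|\theta_j| \le \pi$ and $e^X = U$. For uniqueness when $\opnorm X,\opnorm Y < \pi$, the point is that any $Z$ with $e^Z = U$ commutes with $U$ and hence preserves each eigenspace $E_\lambda$ of $U$; restricted there $Z$ is antihermitian with every eigenvalue $\ii\mu$ obeying $e^{\ii\mu} = \lambda$ and $|\mu| < \pi$. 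Since $\opnorm X < \pi$ forces $-1 \notin \Eig(U)$ (otherwise some such $\mu$ would have $|\mu| \ge \pi$), each $\lambda$ equals $e^{\ii\theta_\lambda}$ for a unique $\theta_\lambda \in (-\pi,\pi)$, whence $\mu = \theta_\lambda$ and $Z|_{E_\lambda} = \ii\theta_\lambda\,\id$; thus $X = Y$.

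\textbf{The routine parts: (c), left inequality of (b), and (d).} I would derive all three from the single identity $e^X - e^Y = \int_0^1 \tfrac{\rd}{\rd t}\bigl(e^{tX}e^{(1-t)Y}\bigr)\rd t = \int_0^1 e^{tX}(X-Y)e^{(1-t)Y}\rd t$. Pulling $\opnorm{\cdot}$ inside the integral and using that $e^{tX}$ and $e^{(1-t)Y}$ are unitary gives (c). For the left inequality of (b), any smooth path $\gamma$ from $U$ to $V$ satisfies $\opnorm{U-V} = \opnorm*{\int_0^1\gamma'(t)\rd t} \le \int_0^1\opnorm{\gamma'(t)}\rd t = \operatorname{len}(\gamma)$, and one takes the infimum over $\gamma$. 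For (d), subtract $X-Y$ from the identity to get $e^X - e^Y - (X-Y) = \int_0^1\bigl[(e^{tX}-\id)(X-Y)e^{(1-t)Y} + (X-Y)(e^{(1-t)Y}-\id)\bigr]\rd t$; bounding $\opnorm{e^{tX}-\id} \le t\opnorm X \le \tfrac1\pi$ (and likewise for $Y$) via (c) yields $\opnorm{e^X-e^Y-(X-Y)} \le \tfrac1\pi\opnorm{X-Y}$, hence $\opnorm{X-Y} \le \tfrac{\pi}{\pi-1}\opnorm{e^X-e^Y} \le \pi\opnorm{e^X-e^Y}$ since $\pi \ge 3$.

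\textbf{Part (a) and the right inequality of (b).} The bound $\dist(\id,e^X) \le \opnorm X$ is witnessed by the one-parameter path $\gamma(t) = e^{tX}$, for which $\gamma'(t)\gamma(t)^{-1} = X$ and $\operatorname{len}(\gamma) = \opnorm X$. The reverse inequality $\dist(\id,e^X) \ge \opnorm X$ is where I expect the main difficulty: restricting a path to a $1$-dimensional $U$-invariant subspace only yields the chordal bound $2\sin(\opnorm X/2)$, which is strictly weaker. My plan is to track a fixed unit eigenvector $v$ with $Uv = e^{\ii\theta_1}v$ and $|\theta_1| = \opnorm X \le \pi$: for any smooth path $\gamma$ from $\id$ to $U = e^X$, the curve $w(t) = \gamma(t)v$ lies on the unit sphere $S \subset \CC^\dd \cong \RR^{2\dd}$, runs from $v$ to $e^{\ii\theta_1}v$, and satisfies $w'(t) = \bigl(\gamma'(t)\gamma(t)^{-1}\bigr)w(t)$, so $|w'(t)| \le \opnorm{\gamma'(t)\gamma(t)^{-1}}$ and therefore $\operatorname{len}(w) \le \operatorname{len}(\gamma)$. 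Since the length of any curve on $S$ joining $v$ and $e^{\ii\theta_1}v$ is at least their spherical geodesic distance $\arccos\bigl(\Real\angles{v,e^{\ii\theta_1}v}\bigr) = \arccos(\cos\theta_1) = \opnorm X$, taking the infimum over $\gamma$ gives $\dist(\id,e^X) \ge \opnorm X$, which proves (a). For the right inequality of (b), I would use left-invariance to write $\dist(U,V) = \dist(\id,W)$ with $W = U^{-1}V = e^X$, $\opnorm X \le \pi$, and diagonalize $X = \ii\,\diag(\theta_1,\dots,\theta_\dd)$; then part (a) gives $\dist(\id,W) = \max_j|\theta_j|$ while $\opnorm{U-V} = \opnorm{\id-W} = 2\max_j|\sin(\theta_j/2)|$, so the elementary inequality $|\sin s| \ge \tfrac2\pi|s|$ for $|s| \le \tfrac\pi2$ (concavity of $\sin$ on $[0,\tfrac\pi2]$) gives $\max_j|\theta_j| \le \pi\max_j|\sin(\theta_j/2)| = \tfrac\pi2\opnorm{U-V}$, completing the lemma.
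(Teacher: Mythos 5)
Your proof is correct. Parts (a), (c), and the right-hand inequality of (b) follow essentially the paper's own route: for (a) the paper likewise lower-bounds the path length by the length of a curve on the unit sphere $S^{2\dd-1}$ (it tracks the first column of $\gamma(t)$ after diagonalizing, you track a top eigenvector directly --- the same argument, including the observation that the naive chordal bound $2\sin(\opnorm{X}/2)$ is insufficient), for (c) it uses the same integral representation (written as $e^{-Y}e^X-\id=\int_0^1 \partial_t(e^{-tY}e^{tX})\,\rd t$), and for the right half of (b) it uses the same reduction to eigenphases together with $\abs{\theta}\le \pi\abs{\sin(\theta/2)}$ via (a). Where you genuinely diverge is (d), and mildly the left half of (b), which you obtain directly from $\opnorm{U-V}=\opnorm{\int_0^1\gamma'(t)\,\rd t}\le\operatorname{len}(\gamma)$ rather than from (a) plus $\abs{2\sin(\theta/2)}\le\abs{\theta}$ as the paper does. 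For (d) the paper first proves the commutator estimate $\opnorm{e^Xe^Y-e^{X+Y}}\le\frac12\opnorm{[X,Y]}$ and then chains (a), (b), and that estimate, absorbing a $\frac12\opnorm{X-Y}$ term; you instead subtract $X-Y$ from the integral identity for $e^X-e^Y$ and absorb the error term, which avoids the Baker--Campbell--Hausdorff-type bound entirely, needs neither (a) nor (b) as inputs, and yields a slightly better constant (the paper explicitly did not optimize $\pi$). One small wrinkle in your write-up of (d): bounding $\opnorm{e^{tX}-\id}$ and $\opnorm{e^{(1-t)Y}-\id}$ each crudely by $\frac1\pi$ gives $\frac2\pi\opnorm{X-Y}$, not $\frac1\pi\opnorm{X-Y}$; the stated $\frac1\pi$ requires keeping the factors $t\opnorm{X}$ and $(1-t)\opnorm{Y}$ and integrating. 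Either way the conclusion stands, since even $\frac{\pi}{\pi-2}<\pi$. Your expanded uniqueness argument for the logarithm (commutation with $U$, eigenspace-by-eigenspace identification of the unique angle in $(-\pi,\pi)$) correctly fills in what the paper asserts in one line.
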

\noindent
We did not optimize the constant~$\pi$ in~\cref{eq:inv-exp-bound}.

\begin{proof}[Proof of~\cref{lem:sqrt}]
    We first prove the non-projective version of the statement: consider $U,V \in \BB(1/\pi)$ and $p \geq 1$.
    Write $U = e^X$ and $V = e^Y$ with $\opnorm X, \opnorm Y < 1/\pi$.
    Then
    \begin{multline} \label{eq:root-lipschitz}
        \dist(U^{1/p}, V^{1/p})
        \le \opnorm{U^{1/p} - V^{1/p}}
        = \opnorm{e^{X/p} - e^{Y/p}}
        \le \opnorm{\tfrac 1 p X - \tfrac 1 p Y} \\
        = \tfrac 1 p \opnorm{X - Y}
        \le \tfrac \pi p \opnorm{e^X - e^Y}
        \le \frac{\pi^2}{2p} \dist(U, V).
    \end{multline}
    The inequalities follow from \cref{eq:metricEq}, \cref{eq:exp-bound}, \cref{eq:inv-exp-bound}, and \cref{eq:metricEq}, respectively.

    Now, to prove the lemma, without loss of generality let $U, V$ be their representatives in $\BB(\tfrac{1}{3\pi})$, and let $e^{\ii\theta}$ be the global phase minimizing $\dist(U, V\uone)$.
    Then
    \begin{align}
        \abs{\theta} = \dist(\id, e^{\ii\theta}\id)
        &\leq \dist(\id, UV^\dag) + \dist(UV^\dag, e^{\ii\theta}\id) \\
        &\leq \dist(U, V) + \dist(U, V\uone)
        \leq 2\dist(\id, UV^\dag)
        \leq \tfrac{4}{3\pi} \nonumber,
    \end{align}
    So, $Ue^{-\ii\theta/2}, Ve^{\ii\theta/2} \in \BB(1/\pi)$, so we can use \cref{eq:root-lipschitz}.
    \begin{align}
        \dist(U^{1/p}, V^{1/p}\uone)
        &\leq \dist(U^{1/p}, V^{1/p}e^{\ii\theta/p})
        = \dist((Ue^{-\ii\theta/2})^{1/p}, (Ve^{\ii\theta/2})^{1/p}) \\
        &\leq \frac{\pi^2}{2p}\dist(Ue^{-\ii\theta/2}, Ve^{\ii\theta/2})
        = \frac{\pi^2}{2p}\dist(U, V\uone) \nonumber \qedhere
    \end{align}
\end{proof}

\begin{proof}[Proof of~\cref{lem:Szarek}]
The eigenvalues of a unitary are on the complex unit circle,
each of which can be specified uniquely by an angle in~$[-\pi,\pi)$.
So, the exponential map $\uu(\dd) \ni X \mapsto e^X \in \UU(\dd)$ is
injective on the domain where $\opnorm X < \pi$.
This proves the first two statements.

\cref{eq:dist-by-eigenvalues}:
Let $\theta_j \in [-\pi,\pi)$ be the eigenvalues of~$-\ii X$.
Without loss of generality, suppose $\ell = \abs{\theta_1} \ge \abs{\theta_j}$ for all~$j$.
Then the path $\gamma$ taking $t \mapsto \exp( \ii t \diag(\theta_1,\ldots,\theta_\dd))$ 
goes from~$\id$ to~$e^X$ and has length $\ell$.
This shows that $\dist(\id, e^X) \leq \ell$.

To show that $\dist(\id, e^X) \geq \ell$, 
consider a smooth path $\gamma :[0,1] \to \UU(\dd)$ joining~$\id$ and~$e^X$.
The first column of a unitary matrix is a $\dd$-complex-dimensional unit vector, 
which corresponds to a point in the standard unit sphere $S^{2\dd-1} \subset \CC^{\dd}$.
So, the first column of the matrix~$\gamma(t)$, denoted $[\gamma(t)]_1$, defines a smooth curve~$\xi: [0,1] \to S^{2d-1}$.
Since $\opnorm{A} = \sup_{v : \twonorm{v} = 1} \twonorm{A v} \ge \twonorm{A_1}$ for any matrix~$A$, we conclude that the length of~$\gamma$ is at least the length of the path~$\xi$ under the standard Euclidean metric:
\begin{equation}
    \int_0^1 \opnorm{\gamma'(t)} \rd t
    \geq \int_0^1 \twonorm{[\gamma'(t)]_1} \rd t
    = \int_0^1 \twonorm{\xi'(t)} \rd t.
\end{equation}
The path~$\xi$ connects the two points $(1,0,\ldots,0), (e^{\ii\theta_1},0,\ldots,0) \in S^{2d-1}$.
It is well known that the length of~$\xi$ is at least the (smaller) angle between $1$ and $e^{\ii\theta_1}$ in the complex plane,
which is~$\ell$.%
\footnote{
    Since the standard sphere is a closed Riemannian manifold,
    one can appeal to the Hopf--Rinow theorem 
    to obtain a geodesic realizing the distance between any pair of points,
    and characterize geodesics~$g$ by the geodesic equation~$\nabla_{\dot g} \dot g = 0$,
    to conclude that a path of minimum length between any pair of points must be on a great circle.
}

\cref{eq:metricEq}:
By unitary invariance of the two metrics, it suffices to prove the statement for $V = \id$.
For~$\theta \in (-\pi,\pi]$, we observe that $\abs{1-e^{\ii\theta}}^2 = 2-2\cos \theta = 4 \sin^2 \frac \theta 2$,
implying that
$
    \abs{1 - e^{\ii\theta}} = \abs{2\sin(\tfrac \theta 2)} \le \abs{\theta} \le \abs{\pi\sin(\tfrac \theta 2)} = \frac \pi 2 \abs{1 - e^{\ii\theta}}.
$
It follows by~\cref{eq:dist-by-eigenvalues} that $\opnorm{\id - U} \le \dist(\id,U) \le \frac \pi 2 \opnorm{\id - U}$.

\cref{eq:exp-bound}:
\begin{align}
\opnorm{ e^{-Y} e^X - \id } = \opnorm[\Big]{ \int_0^1 \rd t\, \diff_t e^{-t Y} e^{t X} }
\le \int_0^1 \rd t \, \opnorm{e^{-t Y}(-Y +X)e^{t X}} = \opnorm{X-Y}.
\end{align}

\cref{eq:inv-exp-bound}:
Note that for any $X,Y \in \uu(\dd)$, we have
\begin{equation}
    \opnorm{e^X e^Y - e^{X+Y}} \le \frac {1} 2 \opnorm{ [X,Y] } . \label{eq:bch}
\end{equation}
This is a well-known inequality;
see {\em e.g.} Eq.~(143) in arXiv version of~\cite{CSTWZ19}.
To prove it, notice that
$
    \opnorm{e^{X} Y e^{-X} -Y }     
    = \opnorm{\int_0^1 \rd s\, \diff_s[e^{s X} Y e^{-s X}] }    
    \le \int_0^1 \rd s\, \opnorm{e^{s X} (X Y - Y X) e^{-s X} } = \opnorm{[X,Y]}
$,
so therefore,
\begin{align}
    \opnorm{e^X e^Y - e^{X+Y}} 
    &=
    \opnorm{e^{-(X+Y)} e^X e^Y - \id } = \opnorm[\Big]{\int_0^1 \rd t\, \diff_t e^{-t(X+Y)}e^{t X}e^{t Y} } \\
    &\le
    \int_0^1 \rd t\,  \opnorm*{ e^{-t(X+Y)}( -(X+Y) + X + e^{t X} Y e^{-t X} ) e^{t X}e^{t Y} } \nonumber\\
    &=
    \int_0^1 \rd t\,  \opnorm*{ -Y + e^{t X} Y e^{-t X} } \le \int_0^1 \rd t\, \opnorm{[tX, Y]} = \opnorm{[X,Y]}\int_0^1 t\,\rd t. \nonumber
\end{align}
Then,
\begin{align}
    \opnorm{X - Y}
    &= \dist(\id, e^{X-Y}) &\text{by \cref{eq:dist-by-eigenvalues}}\nonumber\\
    &\leq \frac{\pi}{2}\opnorm{\id - e^{X-Y}} \leq \frac{\pi}{2}\left( \opnorm{\id - e^{-Y}e^X} + \opnorm{e^{-Y}e^X - e^{X-Y}} \right) &\text{by \cref{eq:metricEq}}\nonumber\\
    &\leq \frac{\pi}{2}\left(\opnorm{\id - e^{-Y}e^X} + \frac12 \opnorm{X Y - Y X} \right) &\text{by \cref{eq:bch}}\nonumber \\
    &= \frac{\pi}{2}\left(\opnorm{e^Y - e^X} + \frac12\opnorm{X Y - Y Y + Y Y - Y X}\right) \\
    &\leq \frac{\pi}{2}\left(\opnorm{e^Y - e^X} + \opnorm Y \opnorm{X - Y} \right)  \nonumber\\
    &\leq \frac{\pi}{2}\left(\opnorm{e^Y - e^X} + \frac 1 \pi \opnorm{X - Y}\right) &\text{by assumption.}\nonumber
\end{align}
Rearranging, we complete the proof.
\end{proof}

\subsection{Bootstrap algorithm}

Using \cref{lem:sqrt}, we can show that we can bootstrap a base tomography algorithm that achieves constant error to an algorithm that gets $\eps^{-1}\log(1/\eta)$ query complexity.
Following \cref{sec:geometry}, for this section we use the distance metric between two unitaries $U, V \in \UU(\dd)$ of $\dist(U, V\UU(1))$ (\cref{def:pu-lie-norm}).
By \cref{prop:norm-equivalence}, this is equivalent to diamond-norm distance up to universal constants.

\begin{theorem} \label{thm:error-boosting}
    Suppose we have an oracle capable of applying an unknown unitary channel $Z \in \UU(\dd)$.
    Further suppose we have an algorithm $\alg$ that, given such an oracle, can output a unitary $\bU$ such that $\dist(Z, \bU\uone) \leq \eps_0 \leq \frac{1}{600}$ with probability $> 0.51$.
    Then, given error parameters $\eps, \eta \in (0,1)$, \cref{algo:theboot} outputs a unitary $\bU$ such that
    \begin{enumerate}[label=(\alph*),ref=\thetheorem (\alph*)]
        \item $\dist(Z, \bU\uone) \leq \eps$ with probability $\geq 1-\eta$;\label{thm:error-boosting-highprob}
        \item and $\E[\dist(Z, \bU\uone)^2] \leq (1+32\eta)\eps^2$.\label{thm:error-boosting-expectation}
    \end{enumerate}
    \cref{algo:theboot} has the further property that, if $\alg$ uses $Q$ queries to $Z$, then \cref{algo:theboot} requires only $O(\frac{Q}{\eps}\log\frac{1}{\eta})$ queries to $Z$.
\end{theorem}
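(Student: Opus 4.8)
The plan is to build a sequence of increasingly accurate estimates $\bV_0, \bV_1, \bV_2, \ldots$ of $Z$ by repeatedly invoking $\alg$, each time ``recentering'' the unknown unitary near the identity and then taking a power before handing it to $\alg$. Concretely, suppose after iteration $j$ we have a current estimate $\bV_j$ with (high-probability) guarantee $\dist(Z, \bV_j \uone) \le \delta_j$ for some shrinking sequence $\delta_j$. In iteration $j+1$, set $W_j := Z \bV_j^\dagger$, which lies in $\PPBB(\delta_j)$; pick a power $p_j \ge 1$ so that $p_j \delta_j$ is still a small constant (comfortably below $\tfrac{1}{3\pi}$, say $\le \tfrac{1}{600}$-scale so that the hypotheses of \cref{lem:sqrt} apply), and run $\alg$ on the oracle for $W_j^{p_j}$ — which we can implement using $p_j$ queries to $Z$ together with applications of the known $\bV_j^\dagger$ — to get an estimate $\bA_{j+1}$ with $\dist(W_j^{p_j}, \bA_{j+1}\uone) \le \eps_0$. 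Then by \cref{lem:sqrt}, $\dist(W_j, \bA_{j+1}^{1/p_j}\uone) \le \tfrac{\pi^2}{2 p_j}\eps_0$, and so $\bV_{j+1} := \bA_{j+1}^{1/p_j}\bV_j$ satisfies $\dist(Z, \bV_{j+1}\uone) \le \tfrac{\pi^2 \eps_0}{2 p_j} =: \delta_{j+1}$, using right-invariance of $\dist$ and the quotient.

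The key point is the bookkeeping on the powers: since $\dist(Z, \bV_j\uone) \le \delta_j$, we may choose $p_j := \lfloor c/\delta_j \rfloor$ for a suitable constant $c$, which keeps $p_j \delta_j = \Theta(1)$ small enough for the lemma while also giving $\delta_{j+1} \le \tfrac{\pi^2 \eps_0}{2c}\delta_j =: \kappa \delta_j$ with $\kappa < 1$ a fixed constant (here we use $\eps_0 \le \tfrac{1}{600}$). So the $\delta_j$ shrink geometrically, and after $O(\log(1/\eps))$ iterations we reach $\delta_j \le \eps$. The total query count is $\sum_j p_j = \sum_j O(1/\delta_j) = O(1/\delta_{\text{final}}) \cdot O(1) = O(1/\eps)$ since the $\delta_j$ form a geometric sequence dominated by its smallest term; multiplying by the $Q$ queries each call to $\alg$ costs and by the $O(\log(1/\eta))$ overhead needed to boost each individual call's success probability from $0.51$ to something like $1 - \eta/(\text{number of iterations})$ gives the claimed $O(\tfrac{Q}{\eps}\log\tfrac{1}{\eta})$.

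For part~\ref{thm:error-boosting-highprob}: union-bound over the $O(\log(1/\eps))$ iterations, each of which fails with probability $\le \eta / O(\log(1/\eps))$ after confidence boosting, so all succeed with probability $\ge 1 - \eta$, and on that event the final $\delta_j \le \eps$. For part~\ref{thm:error-boosting-expectation}: on the success event $\dist(Z,\bU\uone) \le \eps$; on the failure event (probability $\le \eta$) we still have a crude bound — either $\dist(Z,\bU\uone) \le \pi$ always, or better, one can argue the diameter of $\PPUU(\dd)$ in this metric is $O(1)$ — so $\E[\dist(Z,\bU\uone)^2] \le \eps^2 + \eta \cdot O(1)$; to get the stated $(1+32\eta)\eps^2$ one instead runs the algorithm to a slightly smaller target error, say $\eps' = \eps/2$, and uses that the failure-mode distance is still controlled (the last recentering guarantees $\dist(Z,\bV_{j}\uone)$ is never worse than the second-to-last $\delta$, which is $O(\eps)$, so the failure contribution is $O(\eta \eps^2)$ rather than $O(\eta)$). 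Rescaling constants then yields the $(1+32\eta)\eps^2$ bound.

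The main obstacle I anticipate is the last point: showing the estimate is \emph{never catastrophically bad}, even when some call to $\alg$ fails, so that the expectation bound in~\ref{thm:error-boosting-expectation} has an $\eps^2$ (not $O(1)$) failure term. Handling this cleanly requires either (i) a ``safety check'' step where we verify each new estimate didn't make things worse before committing to it — but verification itself costs queries and reintroduces error — or (ii) arguing structurally that $\bV_{j+1}$ is built from $\bV_j$ via a unitary $\bA_{j+1}^{1/p_j}$ that is itself constrained to lie in a small ball (since $\bA_{j+1} \in \PPBB(\eps_0 + \text{small})$ whenever the \emph{previous} estimate was good, regardless of whether the \emph{current} call succeeds), so the only way to be far from $Z$ is for the current call to fail, costing one factor of the (already small) current scale rather than the whole diameter. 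Option (ii) is cleaner and is likely what the authors intend, exploiting that ``shifting to identity'' confines the damage of any single bad call. The rest — the geometric-sum query bound and the \cref{lem:sqrt} application — is routine once the recentering is set up correctly.
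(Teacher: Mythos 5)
Your core loop is the same as the paper's (\cref{algo:theboot}): recenter by the current estimate, raise the residual to a power $p_j$ growing geometrically, call $\alg$, take the $p_j$-th root via \cref{lem:sqrt}, and the induction for part (a) goes through as you describe. However, there are two concrete gaps, and they share a single missing idea: the \emph{scheduling of per-iteration failure probabilities}. First, on query complexity: boosting each of the $L = \Theta(\log(1/\eps))$ calls to failure probability $\eta/L$ costs $O(\log(L/\eta))$ repetitions per call, so your plan yields $O\bigl(\tfrac{Q}{\eps}(\log\log\tfrac{1}{\eps} + \log\tfrac{1}{\eta})\bigr)$, not the claimed $O(\tfrac{Q}{\eps}\log\tfrac{1}{\eta})$ (and if instead you boost each call only to failure probability $\eta$, the union bound gives total failure $L\eta$, not $\eta$). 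The paper avoids this by making the target failure probability \emph{non-uniform}, $\eta_j = \eta\, 8^{\,j-T-1}$: very stringent for the early iterations, where $p_j$ is small and boosting is cheap, and only constant boosting for the final, expensive iteration, so that $\sum_j Q\, 2^j\log(1/\eta_j) = O(Q\,2^T\log(1/\eta))$.

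Second, your treatment of part (b) is not correct as stated. Your claim that on failure ``$\dist(Z,\bV_j\uone)$ is never worse than the second-to-last $\delta$, which is $O(\eps)$'' fails: if the \emph{first} failed call occurs at an early iteration $f$, the output error is $\Theta(2^{-f})$, which can be $\Theta(1)$, not $O(\eps)$. Also, your structural justification in option (ii) is off: when a call fails, $\alg$'s output $\bU_j$ can be an arbitrary unitary, so it is \emph{not} confined to $\PPBB(\eps_0 + \text{small})$; the correct fact (used in the paper) is that the \emph{root} $\bU_j^{1/p_j}$ always satisfies $\dist(\id,\bU_j^{1/p_j}\uone) \le \pi/p_j$, because taking the principal $p_j$-th root compresses all eigenphases into $(-\pi/p_j,\pi/p_j)$ regardless of what $\bU_j$ is. This gives ``graceful failure'': if the first failure is at iteration $f$, the post-failure drift is at most $\sum_{k>f}\pi 2^{-k} = O(2^{-f})$, so the output error is $O(2^{-f})$. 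But then $\E[\dist(Z,\bU\uone)^2] \le \eps^2 + \sum_f \Pr[\text{first failure at } f]\cdot O(2^{-2f})$, and with your uniform schedule $\Pr[\cdot]\approx \eta/L$ the sum is $\Theta(\eta/\log(1/\eps))$, which is far larger than $32\eta\eps^2$ in general. The paper's decaying schedule $\eta_f = \eta\,8^{\,f-T-1}$ makes early failures (which are the costly ones) occur with probability roughly $\eta\eps^3$, so the failure term sums to $O(\eta\eps^2)$, yielding $(1+32\eta)\eps^2$. So the missing ingredient in your proposal — needed for both the query bound and the expectation bound — is precisely this geometric weighting of the per-iteration confidence levels, together with the eigenphase-compression property of the fractional root.
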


By plugging in the base tomography algorithm from \cref{thm:base} into this bootstrap algorithm, we obtain our main result~\cref{thm:upperbound}.
We restate the theorem now.

\upperbound*

\begin{proof}[Proof of \cref{thm:upperbound}]
    It is clear that our bootstrap algorithm needs as many quantum registers as the base tomography algorithm does.
    The base tomography algorithm combines $O(\dd)$ pure state tomography outcomes, each of which uses projective measurements in computational basis on a pure state of form $A (ZV^\dag)^p B \ket 0$ where $A,B,V$ are unitaries known at the moment of the state tomography. See~\cref{thm:base}.
    
    The query complexity and the accuracy guarantees follow from~\cref{thm:error-boosting}, with \cref{thm:base} as the base algorithm.
    
    The number of one- and two-qubit gates used in the overall algorithm beyond the oracle calls,
    is determined by the complexity of implementing interspersing unitaries $A,B,V$ in the preparation of states~$A (ZV^\dag)^p B \ket 0$. 
    Those unitaries are on a $\dd$-dimensional qudit,
    so it can be compiled to accuracy~$\alpha$ in operator norm 
    by $\poly(\dd, \log(1/\alpha))$ elementary gates~\cite{KSV}.
    We only need $\alpha = 1/\poly(d/\eps)$.
    The algorithm prepares $\poly(\dd/\eps)$ states of form $A (ZV^\dag)^p B \ket 0$ in total.

    The classical time complexity is bounded by those of 
    matrix multiplications, matrix diagonalizations,
    and finding diagonal phase factors in \cref{prop:weaker-base}.
    All of these take $\poly(\dd/\eps)$ time.

    The bound on $\diamnorm{\calM - \calU(Z)}$ follows from \Cref{thm:error-boosting}, which shows that \cref{algo:theboot} gives $\E_\bX \opnorm{\bX}^2 \le O(\eps^2)$, and \cref{prop:conv-to-SAR}.
    Rescaling $\eps$ if necessary, we complete the proof.
\end{proof}

\begin{algorithm}
    \SetKwInOut{Input}{Input}\SetKwInOut{Output}{Output}
    \caption{Final algorithm for unitary tomography\label{algo:theboot}}
    \Input{Oracle able to black-box apply an unknown unitary~$Z$;\\
    Error parameters $\eps, \eta \in (0,1)$;\\
    Process tomography algorithm $\alg: (Z, \eta) \mapsto \bU \in \UU(\dd)$ such that \\
    \qquad $\dist(\uone \bU, Z) < \frac{1}{200}$ with probability $\geq 1-\eta$, \\
    \qquad using $O(Q\log\frac{1}{\eta})$ queries to $Z$}
    \Output{A unitary $\bU$}
    Let $T \leftarrow \lceil \log_2(1/\eps) \rceil$\;
    Let $V_{0} \leftarrow \id$\;
    \For{$j$ from $0$ to $T$}{
        Let $p_j \leftarrow 2^j$ \tcp*{This exponential could have any base $> 1$.}
        Let $\eta_j \leftarrow \eta 8^{j-T-1}$ \tcp*{Failure probability becomes \textbf{worse} each iteration.}
        Let $\bU_j$ be the output of $\alg((Z \bV_j^\dag)^{p_j},\, \eta_j)$\label{line:call}\;      
        $\bV_{j+1} \leftarrow \bU_j^{1/p_j} \bV_j$\label{line:root}\;
    }
    Return $\bU \leftarrow \bV_{T+1}$\;
\end{algorithm}

The bootstrap algorithm (\cref{algo:theboot}) is essentially four lines long: we begin with $V_0 = \id$ as an initial estimate for $Z$, and with every iteration, we refine this estimate.
In iteration $j$, we take the current estimate $\bV_j$, and consider the ``residual'' $Z\bV_j^\dagger$: the residual is close to the identity when $\bV_j$ is close to $Z$.
We run $\alg$ to get an estimate $\bU_j$ of $(Z\bV_j^\dagger)^{p_j}$.
Though this estimate $\bU_j$ is only good to constant error, by \cref{lem:sqrt}, its $p_j$th root will be $(1/p_j)$-close to $Z\bV_j^\dagger$.
So, we can use this root to form the next estimate $\bV_{j+1} = \bU_j^{1/p_j} \bV_j$, which is $(1/p_j)$-close to $Z$.
After $\log_2(1/\eps)$ iterations, this estimate is good enough to output.
We have just given nearly a full proof of the first part of \cref{thm:error-boosting}.
We swept two details under the rug: first, we need that $(Z\bV_j^\dagger)^{p_j}$ is close enough to the identity to satisfy the conditions for \cref{lem:sqrt}, which follows from induction; and we need to account for failure.
We deal with these details below.

\begin{proof}[Proof of \cref{thm:error-boosting}]
    Using the trick in \cref{prop:confidence}, we assume that our oracle $\alg$ can boost its success probability to $\geq 1-\gamma$ with only $O(\log\frac{1}{\gamma})$ overhead to output a unitary that's correct up to $c = \frac{1}{200}$ error.

    We analyze the output of~\cref{algo:theboot} for arbitrary~$T > 0$,
    first assuming that \cref{line:call} is successful for all $j$.
    Let
    \begin{align}
        \bdelta_j = \dist(Z, \bV_j\uone) = \dist(Z\bV_j^\dagger, \uone \id)
    \end{align}
    be the error after iteration $j-1$ of the while-loop.
    We will prove $\bdelta_k < 2^{-k-4}$.
    For iteration $0$, we have $p_0 = 1$ so by our assumption on $\alg$,
    \begin{align}
        \bdelta_1 = \dist(Z, \bV_1\uone) &< c < \frac{1}{32}.
    \end{align}
    We proceed by induction in~$k$.
    First, notice that the input and output of the call to $\alg$ on the $k$th iteration are in $\PPBB(\frac{1}{3\pi})$:
    \begin{align}
        \dist((Z\bV_k^\dagger)^{p_k}, \id\uone)
        &\leq \sum_{i=0}^{p_k-1} \dist((Z\bV_k^\dagger)^{i + 1}, (Z\bV_k^\dagger)^{i}\uone)
        = p_k \bdelta_k
        < \tfrac{1}{16} \tag*{by inductive hypothesis}\\
        \dist(\bU_k, \id\uone) &\leq \dist(\bU_k, (Z\bV_k^\dagger)^{p_k}\uone) + \dist((Z\bV_k^\dagger)^{p_k}, \id\uone)
        < c + \tfrac{1}{16} < \tfrac{1}{3\pi}
    \end{align}
    So, we can apply \cref{lem:sqrt} to conclude that
    \begin{align}
        \bdelta_{k+1} &= \dist(Z, \bV_{k+1}\uone)
        = \dist(Z\bV_k^\dagger, \bU_k^{1/p_k}\uone) \\
        &\leq \frac{\pi^2}{2p_k}\dist((Z\bV_k^\dagger)^{p_k}, \bU_k\uone)
        < \frac{\pi^2}{2p_k}c
        \leq \frac{\pi^2}{400}2^{-k}
        < 2^{-k-5} \nonumber
    \end{align}
    Therefore, the final accuracy guarantee is
    \begin{align}
        \dist(Z, \bV_{T+1}\uone) < 2^{-T-5} < \eps.
    \end{align}
    The query complexity is sum of those at \cref{line:call} for all $j$:
    For any given $j$, this call queries the oracle for $(Z\bV_j^\dag)^{p_j}$ for $O(Q\log\frac{1}{\eta_j})$ times,
    which can be performed using $O(Qp_j\log\frac{1}{\eta_j})$ queries to~$Z$.
    Summing over all the iterations, we get the desired total query complexity,
    \begin{align}
        O\Big(\sum_{j=0}^T Qp_j\log\frac{1}{\eta_j}\Big)
        = O\Big(Q\sum_{j=0}^T 2^j\Big((T-j+1) + \log\frac{1}{\eta}\Big)\Big)
        = O\Big(Q 2^T \log\frac{1}{\eta}  \Big)
        = O\Big(\frac{Q}{\eps}\log\frac{1}{\eta}\Big).
    \end{align}
    Notice that $\eta_j$ is chosen so the query complexity does not incur the $\log\log(1/\eps)$ factor that comes from a naive union bound.
    The failure probability of \cref{line:call} for each~$j$ is $\eta_j$, so the probability that any of them fail is bounded by $\eta$, proving \cref{thm:error-boosting-highprob}:
    \begin{align}
        \sum_{j=0}^T \eta_j
        \leq \eta\sum_{j=0}^T 8^{-(T-j)-1}
        = \eta\sum_{j=0}^T 8^{-j-1}
        \leq \eta.
    \end{align}
    Now, we prove \cref{thm:error-boosting-expectation}.
    \cref{thm:error-boosting-highprob} only implies a bound of $\E[\dist(Z, \bU\UU(1))] \leq \eps^2 + \frac{\pi}{2}\eta$, since when the algorithm fails, the output unitary could be as bad as possible, $\pi/2$ away from $Z$.
    We prove now that \cref{algo:theboot} fails gracefully: namely, if the algorithm first fails in iteration $f$, then the output only has error $O(2^{-f}$.
    Let $f$ be the least index of the iteration where the base tomography algorithm fails, 
    so that $\Pr[f = 0] < \eta_0$, and 
    $\Pr[f = j \mid f > j-1] < \eta_j$ for $j = 1,2,3,\ldots$.
    If the algorithm never fails, we take $f = T+1$.
    The argument we described above still applies up to index~$f$, so that
    \begin{align}
        \bdelta_f = \dist(Z, \bV_{f+1}\UU(1)) < 2^{-f-4}.
    \end{align}
    We can further conclude that the output of the algorithm will be about as good of an estimate to $Z$ as $\bV_{f+1}$, since the size of the adjustment in each iteration decays exponentially regardless of failure.%
    \footnote{
        In the following computation, we use that $\dist(\id, \bU_k^{1/p_k}\uone) \leq \pi 2^{-k}$, which holds because the eigenvalues of $U^{1/p_k}$ are in $(-\pi/p_k, \pi/p_k)$ for all $U \in \UU(\dd)$ with eigenvalues in $(-\pi, \pi)$.
        However, there is a technical issue: \cref{line:root} cannot be performed if $\bU_k$ has $-1 = e^{\ii \pi}$ as an eigenvalue, since in this case, the $1/p_j$-th root of this arbitrary output is not defined.
        One can and should ignore such nongeneric behavior,
        but strictly speaking our algorithm is not defined for such cases.
        Formally, one should address this by changing the choice of discretization (\cref{foot:tedium}) to ensure that the algorithm never outputs an estimate with $-1$ as an eigenvalue.
        This can be done since it only excludes a measure-zero set.
    }
    \begin{align}
        \bdelta_{T+1} &= \dist(Z, \bV_{T+1}\UU(1)) \\
        &\leq \dist(Z, \bV_{f+1}\UU(1)) + \dist(\bV_{f+1}, \bV_{T+1}\UU(1)) \nonumber\\
        &= \dist(Z, \bV_{f+1}\UU(1)) + \dist(\id, \bU_T^{1/p_T}\bU_{T-1}^{1/p_{T-1}}\cdots \bU_{f+1}^{1/p_{f+1}}\UU(1)) \nonumber\\
        &\leq \dist(Z, \bV_{f+1}\UU(1)) + \sum_{k=f+1}^T \dist(\id, \bU_k^{1/p_k}\UU(1)) \nonumber\\
        &\leq 2^{-f-4} + \sum_{k=f+1}^T \frac{\pi}{2^k} \nonumber\\
        &\leq 2^{2-f}\nonumber
    \end{align}
    This suffices to give a bound on the quality of the output $V_{T+1}$ in expectation.
    \begin{align}
        \E[\dist(Z, \bV_{T+1}\UU(1))^2]
        = \sum_{k=0}^{T+1} \Pr[f = k]\E[\bdelta_{T+1}^2 \mid f = k]
        \leq (2^{-T-5})^2 + \sum_{k=0}^T \eta_k (2^{2-k})^2 \\
        \leq (2^{-T-5})^2 + \eta\sum_{k=0}^T 2^{-3(T-k) + 4-2k} 
        \leq (2^{-T-5})^2 +\eta 2^{5-2T}
        \leq (1+32\eta)\eps^2 \tag*{\qedhere}
    \end{align}
\end{proof}

\begin{remark}
Our bootstrap crucially depends on the continuous group structure 
of the (projective) unitary group
since we take fractional roots of a result from base tomography
and the the fractional root needs to be in the set of candidate outputs.
It is not important whether the set of candidate outputs 
is a unitary group or a projective unitary group.
Conversely, our bootstrap works for any closed Lie subgroup of the (projective) unitary group
(\emph{e.g.}, an orthogonal group $O(\dd) \subset \UU(\dd)$)
and an obvious analog of~\cref{thm:error-boosting} is proved verbatim.
Note that if a subgroup is not closed,
then the intrinsic metric $\dist$ and the extrinsic one $\opnorm \cdot$
are not equivalent and our bootstrap will not work.
Also note that some care would be needed in the base tomography;
the Fourier or Hadamard transform in the proof of~\cref{prop:weaker-base}
might not belong to the Lie subgroup in consideration.
\end{remark}

\section{Lower bounds}

The goal of this section is to prove \cref{thm:lowerbound}.
The first subsection will prove a $\Omega(\dd^2)$ lower bound using existing results on unitary discrimination~\cite{Bavaresco2021}, which we lift to a $\Omega(\dd^2/\eps)$ lower bound in the following section through a variant of an argument proving the equivalence of the fractional and discrete query models~\cite{cgmsy09,bccks17}.

\subsection{\texorpdfstring{Bound for constant $\eps$}{Bound for constant error}}

As a warm up, we first prove a query complexity lower bound for constant $\eps$.
We begin by constructing a hard instance of the estimation problem, which is a discrete set of unitaries in which each pair of unitaries is at least distance $1/4$ apart. For this instance, the task of estimating each unitary to error $<1/8$ is equivalent to the task of simply identifying the unknown unitary.

\begin{proposition} \label{reflection-net}
    There exists a set $\net = \{ W_j \in \UU(\dd) : j = 1,\ldots, N\}$ with $N \ge \exp(\dd^2 / 64)$
    of Hermitian unitaries (i.e., the eigenvalues are $\pm 1$) 
    such that $\diamnorm{\calU(W_j) - \calU(W_k)} \ge 1/4$ for any $j \neq k$.
\end{proposition}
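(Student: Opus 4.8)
The plan is to use a probabilistic/counting argument: construct the set $\net$ by choosing unitaries at random from a suitable ensemble of Hermitian unitaries and show that, with positive probability, a large random collection is pairwise far apart in diamond norm. A Hermitian unitary with eigenvalues $\pm 1$ is the same as a reflection $\id - 2P$ for a projector $P$, so the natural parameter space is the set of rank-$r$ projectors (the Grassmannian $\mathrm{Gr}(r,\dd)$); taking $r = \dd/2$ (say $\dd$ even, handling odd $\dd$ by an obvious tweak) maximizes the relevant dimension, which is $r(\dd-r) = \Theta(\dd^2)$. The exponent $\dd^2/64$ in the statement signals exactly this: a packing of an $\Omega(\dd^2)$-dimensional manifold at constant radius has $\exp(\Omega(\dd^2))$ points.

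First I would translate the diamond-norm condition into something geometric on projectors. By \Cref{prop:norm-equivalence}, $\diamnorm{\calU(W_j) - \calU(W_k)}$ is equivalent up to constants to $\pudist{W_j}{W_k} = \min_\phi \opnorm{\phi W_j - W_k}$; since both $W_j, W_k$ are Hermitian with eigenvalues $\pm1$, one checks that $W_j W_k$ has eigenvalues on the unit circle, and $\pudist{W_j}{W_k}$ being small forces $W_j W_k$ (or $-W_j W_k$) to be close to $\id$, i.e. the two reflections nearly commute and their $+1$ eigenspaces nearly coincide — equivalently the corresponding projectors $P_j, P_k$ are close in operator norm. So it suffices to exhibit $\exp(\Omega(\dd^2))$ rank-$(\dd/2)$ projectors that are pairwise $\Omega(1)$-separated in operator norm (with a little care to rule out the ``$-$'' ambiguity, which just doubles the packing requirement or is handled by noting $-W$ has the complementary-rank projector when $\dd/2 \neq \dd/2$... actually here one should instead fix the spectrum so that $W$ and $-W$ are never both reflections of the same rank unless $\dd$ is such that this coincides; cleanest is to pick eigenvalue multiplicities $\dd/2$ and $\dd/2$ and observe $-W_j$ has the same form, so one should directly lower-bound $\min(\opnorm{P_j - P_k}, \opnorm{P_j - (\id - P_k)})$, which is still $\Omega(1)$ for a generic random pair).

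The core estimate is then a standard volumetric/packing bound on the Grassmannian: the maximal number of points in $\mathrm{Gr}(\dd/2, \dd)$ that are pairwise at operator-norm distance $\geq c$ is at least $(\text{const}/c)^{\dim} = \exp(\Omega(\dd^2))$, because a maximal such packing is also a covering, and the volume of a metric ball of radius $c$ relative to the whole manifold is at most $c^{\Theta(\dd^2)}$. Equivalently, I can do this by the first-moment method: pick $N$ Haar-random projectors, bound $\Pr[\opnorm{P_j - P_k} < c]$ by an explicit exponentially small quantity $\exp(-\Omega(\dd^2))$ using concentration for random subspaces (e.g. the probability that a Haar-random $(\dd/2)$-subspace has large overlap with a fixed $(\dd/2)$-subspace decays like $\exp(-\Omega(\dd^2))$), then union-bound over the $\binom{N}{2}$ pairs and choose $N = \exp(\dd^2/64)$ so the failure probability is $<1$. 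Tracking the constants carefully through \Cref{eq:pudist-lie}, \Cref{eq:diam-pudist}, and the concentration bound is what pins down the specific $1/4$ and $1/64$; I would not expect the constants to be tight, just to close.

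The main obstacle is the constant-chasing: getting a clean, self-contained concentration inequality for the operator-norm overlap of two random equidimensional subspaces that yields an explicit exponent like $\dd^2/64$ after passing through the two norm-equivalence inequalities. A convenient shortcut is to avoid full Haar-randomness and instead use a more combinatorial ensemble — e.g. reflections $W_S = \id - 2P_S$ where $P_S$ is the coordinate projector onto a subset $S \subset [\dd]$ of size $\dd/2$ after conjugating by a random (or well-chosen) unitary, or, even simpler, block-diagonal matrices built from a binary code — so that the pairwise distance reduces to a Hamming-type quantity and the count $\exp(\Omega(\dd^2))$ comes from the Gilbert–Varshamov bound. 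I would likely present the Grassmannian packing version as the conceptual proof and, if the constants are annoying, fall back to the coordinate-subspace/coding construction for an explicit clean bound.
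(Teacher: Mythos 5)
Your main route is sound and genuinely different from the paper's. The paper does not build the packing from scratch: it imports Lemma~8 of the Haah--Harrow--Ji--Wu--Yu state-tomography paper, which already supplies $\ge \exp(r^2/8)$ rank-$r$ density matrices in dimension $2r$ that are pairwise $1/4$-far in trace distance; writing these as $(\id_{2r}+V_j)/(2r)$ with $V_j$ trace-zero Hermitian unitaries gives $\opnorm{V_j-V_k}\ge 1/2$ with the constants for free. The global-phase issue is then killed not by a spectral argument but by padding: setting $W_j = V_j\oplus\id_b$ forces any near-optimal phase $\phi$ in $\pudist{W_j}{W_k}=\min_\phi\opnorm{\phi W_j-W_k}$ to be close to $1$ (because of the identity block), yielding $\pudist{W_j}{W_k}\ge\tfrac12\opnorm{V_j-V_k}\ge 1/4$ and then the diamond bound via \cref{eq:diam-pudist}; small $\dd\le 8$ is handled separately with embedded Paulis. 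Your plan instead parametrizes reflections by half-rank projectors, reduces the phase ambiguity to lower-bounding $\min\bigl(\opnorm{P_j-P_k},\opnorm{P_j-(\id-P_k)}\bigr)$ (correct: the spectrum of a product of two reflections is closed under conjugation, so a small $\pudist{}{}$ forces $W_jW_k$ near $\pm\id$), and produces the packing directly on the Grassmannian by a volume or first-moment/Haar argument. This is more self-contained but shifts the real work onto proving an explicit Grassmannian packing or large-deviation estimate whose constants survive the spectral and norm conversions and still give $N\ge\exp(\dd^2/64)$ at diamond separation $1/4$; the paper's padding trick is also a somewhat cleaner way to dispose of the phase than the conjugation-symmetry argument, though yours works.

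One concrete warning about your proposed fallback: a ``binary code / block-diagonal / Gilbert--Varshamov'' construction cannot reach the required cardinality. Any family indexed by $O(\dd)$ bits --- diagonal sign patterns, or block-diagonal reflections with constant-size blocks --- has at most $\exp(O(\dd))$ members, exponentially short of $\exp(\Omega(\dd^2))$; to get $\exp(\Omega(\dd^2))$ well-separated reflections you must exploit the full $\Theta(\dd^2)$ real dimensions of the Grassmannian (or unitary group), so some volumetric or concentration estimate at that scale is unavoidable (a GV bound over a growing matrix alphabet, e.g.\ subspaces realized as graphs of $(\dd/2)\times(\dd/2)$ contractions, is just the Grassmannian volume argument in disguise). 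So if the constant-chasing in the Haar/volumetric route gets unpleasant, the right escape hatch is the paper's: cite an existing explicit packing (Szarek-type nets, or the density-matrix packing the paper uses) rather than a binary-code construction. Also remember to certify separation from $\id-P_k$ as well as from $P_k$, and to treat odd and small $\dd$.
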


    The constants 64 and $1/4$ are not sharp.
    A construction of some packing net of unitaries goes back at least to~\cite{Szarek1983finite}.
    Here we construct a special net of unitaries that share the same real eigenspectrum.

\begin{proof}
    Considering $2$-by-$2$ Pauli matrices 
    embedded into a higher dimensional unitary group,
    we see that the claim is true for $\dd \le 8$.
    Let $\dd = 2r + 2 > 8$ if $\dd$ is even, or $\dd = 2r +1 > 8$ if $\dd$ is odd. 
    
    We first show the existence of a set of far apart unitaries in operator norm.
    Lemma~8 of~\cite{Haah2015} shows that there are $\ge \exp(r^2/8)$ 
    density matrices of rank~$r$ in dimension~$2r$
    where exactly half the eigenvalues are $1/r$ and the other half are zero,
    and any two different density matrices in this set are at least $1/4$-apart in trace distance.
    We may write this set as $\{ (\id_{2r} + V_j)/(2r) : j = 1,\ldots, N \}$ 
    with $N \ge \exp(r^2/8)$ 
    where $V_j \in \UU(2r)$ is a Hermitian unitary of trace zero.
    If $j \neq k$, we have
    \begin{align}
        \frac 1 4 \le 
        \frac 1 2 \onenorm*{ \frac{\id_{2r} + V_j}{2r} - \frac{\id_{2r} + V_k}{2r}}
        =
        \frac 1 {4r} \onenorm{ V_j - V_k}
        \le
        \frac 1 2 \opnorm{V_j - V_k}.
    \end{align}
    Therefore, the set $S = \{ V_j \in \UU(2r) : j =1,\ldots,N\}$ 
    consists of Hermitian unitaries that are far apart in operator norm. We now need to define unitaries that are far apart in the norm used in the statement of the proposition.

    Now, embed $S$ into $\UU(\dd)$ 
    as $V_j \mapsto W_j = V_j \oplus \id_b \in \UU(\dd)$, where $b = 1$ or $2$.
    We claim that $\{ W_j : j = 1,\ldots,N\}$ is a desired set.
    Consider $W_j$ and $W_k$ for $j \neq k$.
    Then, for all $\phi \in \uone$,
    \begin{multline}
        \opnorm{\phi W_j - W_k}
        = \opnorm{\phi \id - (V_j^\dagger V_k) \oplus \id_b}
        = \max(\opnorm{\phi\id_b - \id_b}, \opnorm{\phi \id - V_j^\dagger V_k}) \\
        \geq \max(\abs{\phi - 1}, \opnorm{\id - V_j^\dagger V_k} - \opnorm{\phi \id - \id})
        \geq \half\opnorm{\id - V_j^\dagger V_k}
        = \half\opnorm{V_j - V_k}
        \geq \tfrac14.
    \end{multline}
    Minimizing over all $\phi$ and using \cref{eq:diam-pudist}, we get that $\diamnorm{\calU(W_j) - \calU(W_k)} \geq \pudist{W_j}{W_k} \geq \tfrac14$.
    Finally, $N \ge \exp(r^2/8) \ge \exp(\dd^2/64)$ for $\dd > 8$.
\end{proof}

The following result upper bounds the average success probability of any unitary identification algorithm that makes at most $Q$ uses of the unknown unitary:

\begin{proposition}[Theorem~5 of~\cite{Bavaresco2021}]\label{ChannelLearningConfidenceBound}
Let $U,\ldots, U_N \in \UU(\dd)$ by any unitary channels.
For any unitary estimation algorithm $\alg$ 
that queries an input unitary channel $Q$ times 
and outputs an index~$\hat x$ given $U_x$ with probability $\Pr[\hat x | x]$ where $x,\hat x \in [N]$,
it holds that
\begin{align}
    \frac 1 N \sum_{x=1}^N \Pr[\hat x = x | x] \le \frac 1 N \binom{ Q + \dd^2 - 1 }{Q} .
\end{align}
\end{proposition}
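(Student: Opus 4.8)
The plan is to establish the bound by showing that, as a function of the unknown unitary, the state held by any $Q$-query algorithm just before its final measurement ranges over a \emph{fixed} subspace of dimension at most $\binom{Q+\dd^2-1}{Q}$, and then to apply the elementary bound on the success probability of discriminating $N$ states confined to such a subspace. This is in essence the argument of~\cite[Theorem~5]{Bavaresco2021}.

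\emph{Canonical form.} By the principle of deferred measurement together with a unitary dilation (the channel $\calU(U)$ is dilated by the isometry $U$, with trivial environment), any algorithm making $Q$ queries to $\calU(U)$ may be assumed to prepare a fixed state $\ket 0$ on a space $\CC^\dd\otimes\calH_E$, to alternate $Q$ applications of the oracle on the fixed $\dd$-dimensional ``query'' tensor factor with fixed unitaries $W_0,\dots,W_Q$ on all of $\CC^\dd\otimes\calH_E$, and to finish with a fixed POVM $\{M_{\hat x}\}_{\hat x\in[N]}$. Thus on input $\calU(U)$ it produces the unit vector
\[ \ket{\psi_U}\;=\;W_Q\,(U\otimes\id_E)\,W_{Q-1}\,(U\otimes\id_E)\cdots W_1\,(U\otimes\id_E)\,W_0\ket 0 \]
and outputs $\hat x$ with probability $\braket{\psi_U|M_{\hat x}|\psi_U}$. (Adaptive choices of where, or whether, to apply the oracle are absorbed into the $W_j$ via controlled swaps; parallel strategies, and more general ones, reduce to this form as well; and it suffices to treat exactly $Q$ queries, the optimum being monotone in $Q$.) Arguing that this really is the general form is the step that requires the most care.

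\emph{Linearization and dimension count.} Write $\mathbf i=(i_1,\dots,i_Q)$, $\mathbf j=(j_1,\dots,j_Q)$ and expand $U\otimes\id_E=\sum_{i,j\in[\dd]}U_{ij}\,(\ket i\bra j\otimes\id_E)$ in each of the $Q$ slots:
\[ \ket{\psi_U}\;=\;\sum_{\mathbf i,\mathbf j\in[\dd]^Q}\Big(\prod_{k=1}^{Q}U_{i_kj_k}\Big)\,\ket{\phi_{\mathbf i,\mathbf j}}, \]
where $\ket{\phi_{\mathbf i,\mathbf j}}:=W_Q(\ket{i_Q}\bra{j_Q}\otimes\id_E)\cdots W_1(\ket{i_1}\bra{j_1}\otimes\id_E)W_0\ket 0$ does not depend on $U$. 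The coefficient $\prod_k U_{i_kj_k}$ depends only on the size-$Q$ multiset $\{(i_1,j_1),\dots,(i_Q,j_Q)\}$ drawn from $[\dd]\times[\dd]$, so collecting terms by this multiset gives $\ket{\psi_U}=\sum_S m_S(U)\ket{\Phi_S}$, where $S$ ranges over the $\binom{\dd^2+Q-1}{Q}$ such multisets, $m_S(U)=\prod_{(i,j)\in S}U_{ij}$, and each $\ket{\Phi_S}$ is a fixed vector. Hence every $\ket{\psi_U}$ with $U\in\UU(\dd)$ lies in the fixed subspace $\calV:=\operatorname{span}\{\ket{\Phi_S}\}$, and $D:=\dim\calV\le\binom{\dd^2+Q-1}{Q}=\binom{Q+\dd^2-1}{Q}$. (Equivalently, $\ket{\psi_U}$ is a fixed linear image of $\ket{\mu_U}^{\otimes Q}\in\operatorname{Sym}^Q(\CC^{\dd^2})$, where $\ket{\mu_U}$ records the matrix entries of $U$.)

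\emph{State discrimination.} Let $P_\calV$ denote the orthogonal projection onto $\calV$. Each $\ket{\psi_{U_x}}$ is a unit vector in $\calV$, and $\{\Pi_{\hat x}:=P_\calV M_{\hat x}P_\calV\}_{\hat x}$ is a POVM on $\calV$, i.e.\ $\Pi_{\hat x}\succeq 0$ and $\sum_{\hat x}\Pi_{\hat x}=P_\calV$; moreover $\braket{\psi_{U_x}|M_x|\psi_{U_x}}=\braket{\psi_{U_x}|\Pi_x|\psi_{U_x}}$ since $\ket{\psi_{U_x}}\in\calV$. Using $\braket{\psi|\Pi|\psi}\le\Tr[\Pi]$ for $\Pi\succeq 0$ and $\braket{\psi|\psi}=1$,
\[ \frac1N\sum_{x=1}^{N}\Pr[\hat x=x\mid x]\;=\;\frac1N\sum_{x=1}^{N}\braket{\psi_{U_x}|\Pi_x|\psi_{U_x}}\;\le\;\frac1N\sum_{x=1}^{N}\Tr[\Pi_x]\;=\;\frac1N\Tr[P_\calV]\;=\;\frac DN\;\le\;\frac1N\binom{Q+\dd^2-1}{Q}, \]
which is the claim. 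The main obstacle is the first step; once the general $Q$-query form is pinned down, the dimension count and the discrimination bound are routine.
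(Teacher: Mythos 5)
Your proof is correct, but it takes a genuinely different route from the one the paper follows. The paper reproduces (in compressed form) the tester-based argument of Bavaresco, Murao, and Quintino: the strategy is encoded as a quantum tester $\{T_x\}$ acting on Choi operators, feasibility of the tester in a semidefinite program plus weak duality reduces the bound to exhibiting an operator inequality $V(T)^{\otimes Q} \preceq \binom{Q+\dd^2-1}{Q}\,\bar C$ against the Haar average $\bar C = \int V(U)^{\otimes Q}\,\rd U$, and that inequality is proved with Schur's lemma together with the Cauchy identity for Schur polynomials, which supplies the count $\sum_\mu (\dim R_\mu)^2 = \binom{Q+\dd^2-1}{Q}$. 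You instead put the algorithm into the standard circuit normal form, observe that the pre-measurement state is a homogeneous degree-$Q$ polynomial in the entries of $U$ and therefore lies in a fixed subspace of dimension at most $\binom{Q+\dd^2-1}{Q}$ (the number of degree-$Q$ monomials in $\dd^2$ variables, i.e.\ $\dim \operatorname{Sym}^Q(\CC^{\dd^2})$ --- the same quantity the paper extracts representation-theoretically), and finish with the elementary fact that $N$ states confined to a $D$-dimensional subspace cannot be identified with average success probability better than $D/N$. Your version is more elementary and self-contained (no testers, no SDP duality, no Schur polynomials), but it is tied to the causally ordered circuit model and leans on the canonical-form reduction you rightly flag as the delicate step; the tester/SDP route avoids any normal form and extends uniformly to general quantum networks (and, in the original reference, to strategies beyond fixed causal order), at the price of heavier machinery. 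For the use made of this proposition in the paper's lower bound --- where the algorithm is an ordinary $Q$-query quantum circuit, with intermediate measurements deferred and unused queries padded onto a dummy register as you describe --- your argument fully suffices.
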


Ref.~\cite{Bavaresco2021} contains a proof that relies on results from several references therein.
For the readers' convenience, we present an overview of the proof in~\cref{app:Bavaresco}.

From this, it straightforwardly follows that if we use the set $\net$ we just constructed, if $Q$ is much smaller than $d^2$, then we cannot solve the identification problem with high probability. But we can also prove something stronger:

\begin{lemma}\label{BinomBound}
    If $\frac 1 N \binom{Q + \dd^2 -1}{Q} \ge \exp(- c Q)$ 
    where $Q \ge 1$, $\dd \ge 100/c' \ge 1$ and $N \ge \exp( c' \dd^2 )$ 
    for some constants $c, c' > 0$,
    then~$Q \ge c'' \dd^2$ for some constant $c''$ that depends only on $c,c'$.
\end{lemma}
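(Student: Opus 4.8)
The plan is to take logarithms of the hypothesis $\frac1N\binom{Q+\dd^2-1}{Q}\ge \exp(-cQ)$ and argue by contradiction: suppose $Q$ is much smaller than $\dd^2$, say $Q \le \delta \dd^2$ for a small constant $\delta$ to be chosen, and derive that the left-hand side is far below $\exp(-cQ)$. The first step is to upper bound the binomial coefficient. Writing $m = \dd^2$, we have $\binom{Q+m-1}{Q} \le \binom{Q+m}{Q} \le \left(\frac{e(Q+m)}{Q}\right)^Q$ by the standard bound $\binom{n}{k}\le (en/k)^k$. Since we are assuming $Q \le \delta m$, we get $Q + m \le (1+\delta)m \le 2m$, so $\binom{Q+m-1}{Q} \le \left(\frac{2em}{Q}\right)^Q$. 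Taking $\log$, the hypothesis combined with $N \ge \exp(c'm)$ yields
\begin{equation}
    -cQ \le \log\!\left(\tfrac1N \binom{Q+m-1}{Q}\right) \le -c'm + Q\log\!\frac{2em}{Q}.
\end{equation}
Rearranging, $c'm \le Q\log\frac{2em}{Q} + cQ = Q\left(\log\frac{2em}{Q} + c\right)$.

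The second step is to observe that $t \mapsto t\log(2em/t)$ is increasing on $(0, m]$ (its derivative is $\log(2em/t) - 1 = \log(2m/t) \ge \log 2 > 0$ there), so since $Q \le \delta m$ we can bound $Q\log\frac{2em}{Q} \le \delta m \log\frac{2e}{\delta}$. Plugging in, $c'm \le \delta m \log\frac{2e}{\delta} + cQ \le \delta m \log\frac{2e}{\delta} + c\delta m$, i.e.
\begin{equation}
    c' \le \delta\!\left(\log\tfrac{2e}{\delta} + c\right).
\end{equation}
The right-hand side tends to $0$ as $\delta \to 0^+$, so we may fix $\delta = \delta(c,c') > 0$ small enough that $\delta(\log\frac{2e}{\delta}+c) < c'$, contradicting the displayed inequality. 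Hence $Q > \delta m = \delta \dd^2$, and we set $c'' = \delta$.

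I expect the only mildly delicate point to be handling the edge cases and making the quantifiers clean — in particular, ensuring the bound $Q\log\frac{2em}{Q}\le \delta m\log\frac{2e}{\delta}$ is valid (this needs $Q \ge 1$, which is given, so that $Q/m$ lies in the valid range, and uses $\dd \ge 100/c'$ only to guarantee $m$ is large enough that $\delta m \ge 1$ and the asymptotics are meaningful — actually the argument above works for all $\dd \ge 1$, so the hypothesis $\dd \ge 100/c'$ is only needed if one wants $c''\dd^2 \ge 1$ to be a nonvacuous conclusion). The contradiction argument is otherwise routine; the main obstacle, such as it is, is just bookkeeping the constants so that the final threshold $\delta$ depends only on $c$ and $c'$, which is transparent from the inequality $c' \le \delta(\log\frac{2e}{\delta}+c)$.
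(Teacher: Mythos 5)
Your proof is correct and follows essentially the same route as the paper's: assume for contradiction that $Q \le \delta \dd^2$, upper-bound the binomial coefficient by a Stirling-type estimate, take logarithms, and note that the resulting constraint $c' \le \delta\bigl(\log\tfrac{2e}{\delta}+c\bigr)$ fails once $\delta$ is chosen small enough in terms of $c,c'$. The only difference is the choice of estimate --- you use $\binom{n}{k}\le (en/k)^k$, whereas the paper uses the entropy bound $\binom{x}{sx}\le x\,e^{xH(s)}$, whose polynomial prefactor produces an error term $\tfrac{2\ln\dd+1}{\dd^2}$ that is absorbed via the hypothesis $\dd \ge 100/c'$; your observation that this hypothesis is consequently not needed in your version is accurate.
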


In combination with~\cref{ChannelLearningConfidenceBound}, 
this immediately implies a $\Omega(\dd^2)$ query lower bound for any unitary estimation algorithm that estimates to small constant distance with constant success probability, since one can use it to distinguish between the $\exp(\Omega(\dd^2))$ unitaries in the net from \cref{reflection-net} with constant success probability.
We do not use the full power of this lemma to draw this conclusion,
but we will need it in its full form in the next subsection.

\begin{proof}
    Suppose for a contradiction $1 \le Q \le \delta \dd^2$ for some $\delta \in (0,1)$.
    It is easy to show from Stirling's approximation that $\binom{x}{s x} \le x e^{x H(s)}$ for all $x \ge 1$ where $H(s) = -s \ln (s) - (1-s)\ln(1-s)$ is the binary entropy function.
    Then, $e^{c' \dd^2 - c Q} \le \binom{Q+\dd^2-1}{Q} \leq 
    e^{H(\frac{\delta}{1+\delta}) (1+\delta)\dd^2 } (1+\delta)\dd^2$.
    Taking logs and dividing by~$\dd^2$, we get 
    \begin{equation}
    c' - \frac{2(\ln \dd)+1}{\dd^2} \le c\delta + (1+\delta) H(\frac{\delta}{1+\delta}) .
    \end{equation}
    For $\dd \ge 100/c'$, the left-hand side is at least $c'/2$.
    This completes the proof.
\end{proof}

\subsection{\texorpdfstring{Bound for general $\eps$}{Bound for general error}}

We're now ready to prove the stronger $\eps$-dependent lower bound.
Our high-level strategy is as follows.
First, we construct a special set of unitaries that are pairwise $\eps$ apart using our constructed set $\net$. This constructed set brings all the unitaries in $\net$ closer to the identity (and hence each other) by raising each unitary to a small power $\alpha$, which will be of the order of $\eps$. 
Then assuming there is an $\alg$ that uses $Q$ queries to solve the unitary estimation problem on this set, there exists another quantum algorithm for identifying unitaries from the set $\net$ with $O(\eps Q)$ query complexity, but there is a catch: The query complexity stated is in the \emph{fractional query model}, a model introduced by Cleve, Gottesman, Mosca, Somma, and Yonge-Mallo~\cite{cgmsy09}. (In this model, it is cheaper to query a small power of a unitary compared to querying the unitary itself.) 
But we want to show our lower bound in the standard model assumed in \Cref{thm:lowerbound}. So we use a modified proof of the equivalence of the fractional and discrete query models from Berry, Childs, Cleve, Kothari, and Somma~\cite{bccks17} to give an algorithm using $O(\eps Q)$ queries in the usual sense, except with failure probability $\geq \exp(-50\eps Q)$ instead of $\frac23$. 
This extremely low success probability is not a problem, since \cref{ChannelLearningConfidenceBound} and \Cref{BinomBound} yield strong lower bounds even with small success probability. Thus any such algorithm, which used $O(\eps Q)$ queries, must use $\Omega(\dd^2)$ queries, which gives $Q = \Omega(\dd^2/\eps)$ as desired.

We start by defining some notation for the fractional power of a reflection.
Following convention from~\cite{bccks17}, this is the power in the \emph{negative angle} direction: with this definition, $(-1)^\alpha = e^{-\ii\pi\alpha}$.
\begin{definition}
For a reflection $R \in \UU(\dd)$ (i.e., $R^2 = \id$) and $\alpha \in [-1,1]$, we define
\begin{align}
    R^\alpha \coloneqq \frac12(\id + R) + e^{-\ii\pi\alpha}\frac12(\id - R)
    = e^{-\ii\pi\alpha/2}\left(\id \cos \frac{\pi\alpha}{2} + \ii R \sin \frac{\pi\alpha}{2} \right). \label{eq:PiToAlpha}
\end{align}
\end{definition}

We can now state our main conversion lemma, which takes an algorithm that queries a fractional power of $Z$ and converts it to an algorithm that only queries $Z$ and makes fewer queries, at the cost of having low success probability.

\begin{lemma} \label{lem:bad-bccks}
    Let $\circuit \in \UU(D)$ be implemented by a quantum circuit that uses $Q$ queries to $Z$ or $Z^\dagger$, with $Z = R^\alpha$ for $R \in \UU(\dd)$ a reflection and $\alpha \in (0,1]$.

    Then there is a circuit $\circuit_R \in \UU(2^Q D)$ with $Q$ ancilla qubits, 
    which uses $50 + 100\alpha Q$ queries to the controlled unitary $\controlled R$ and such that, 
    for all $\ket{\psi} \in \mathbb{C}^D$,
    \begin{gather}
        \circuit_R\ket{0}^{\otimes Q}\ket{\psi} = \nu\ket{0}^{\otimes Q}\ket{\widetilde{\mathrm{out}}} + \ket{\Phi^\perp}, \\
    \intertext{where $\ket{\widetilde{\mathrm{out}}}$ is normalized, $\nu \in \mathbb C$ is a scalar, $(\bra{0}^{\otimes Q} \otimes I) \ket{\Phi^\perp} = 0$  and }
        \abs{\nu} \geq 0.99\exp(-\alpha\pi \tfrac{Q}{2}) \qquad \text{ and }\qquad \twonorm{\ket{\widetilde{\mathrm{out}}} - \circuit\ket{\psi}} < \exp(-99\alpha \pi Q - 20).
    \end{gather}
\end{lemma}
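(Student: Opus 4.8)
The plan is to follow the fractional-to-discrete query simulation of \cite{cgmsy09,bccks17}, but truncate the ancilla register so that we only keep terms with few ``real'' queries to $R$, and track the resulting loss in amplitude. First I would write $Z = R^\alpha$ using \cref{eq:PiToAlpha}: up to the harmless global phase $e^{-\ii\pi\alpha/2}$, we have $R^\alpha = \cos(\tfrac{\pi\alpha}{2})\,\id + \ii\sin(\tfrac{\pi\alpha}{2})\,R$. The key gadget is a single-qubit-controlled operation acting on $\mathbb C^2 \otimes \mathbb C^\dd$ that, conditioned on an ancilla initialized to $\ket 0$, applies $R$ to the data register and flips the ancilla to $\ket 1$ with amplitude $\ii\sin(\tfrac{\pi\alpha}{2})$, and does nothing (leaving the ancilla at $\ket 0$) with amplitude $\cos(\tfrac{\pi\alpha}{2})$. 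This gadget uses one query to $\controlled R$ (after a rotation on the ancilla, a query to $\controlled R$, and another rotation, as in \cite{bccks17}). Replacing each of the $Q$ uses of $Z$ (or $Z^\dagger$, using $\controlled R^\dagger = \controlled R$ since $R$ is a reflection, up to the sign of $\alpha$) by this gadget with a fresh ancilla qubit gives a circuit $\circuit_R'$ on $2^Q D$ dimensions. Expanding, $\circuit_R'\ket{0}^{\otimes Q}\ket\psi = \sum_{S \subseteq [Q]} (\ii\sin\tfrac{\pi\alpha}{2})^{|S|}(\cos\tfrac{\pi\alpha}{2})^{Q-|S|}\ket{\mathbf 1_S}\ket{\psi_S}$, where $\ket{\mathbf 1_S}$ is the ancilla bitstring with support $S$ and $\ket{\psi_S}$ is the data state obtained by inserting $R$ at the query locations in $S$. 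The coefficient of $\ket 0^{\otimes Q}$ is $(\cos\tfrac{\pi\alpha}{2})^Q \ket{\psi_\emptyset}$, and $\ket{\psi_\emptyset}$ is exactly $\circuit\ket\psi$ up to the global phase from $\cos\tfrac{\pi\alpha}2\,\id$ being the ``$R^0$'' branch — but this is not yet what we want, since $(\cos\tfrac{\pi\alpha}{2})^Q$ is close to $1$, whereas we want $\approx e^{-\alpha\pi Q/2}$, and more importantly $\circuit_R'$ uses $Q$ queries, not $O(\alpha Q)$.

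The fix, following \cite{bccks17}, is to reduce the per-gadget probability of flipping the ancilla: instead of weight $\sin(\tfrac{\pi\alpha}{2})$ per gadget, we split each of the $Q$ query-slots into $m$ sub-slots each implementing $R^{\alpha/m}$, apply the gadget in each, and then \emph{postselect on the total Hamming weight of the ancilla being small} (say at most $K = O(\alpha\pi Q)$). Concretely, take $Z = R^\alpha = (R^{\alpha/m})^m$; since $R^{\alpha/m} \approx \cos(\tfrac{\pi\alpha}{2m})\id + \ii\sin(\tfrac{\pi\alpha}{2m})R$ with flip-amplitude $O(\alpha/m)$, after $mQ$ gadgets the expected ancilla weight is $\approx \tfrac{\pi\alpha Q}{2}$, and the amplitude on weight $\le K$ branches is $\ge 0.99$ for $K$ a large enough constant multiple of $\alpha\pi Q$ (by a Chernoff/second-moment bound on the binomial, or exactly computing $\sum_{w\le K}\binom{mQ}{w}\sin^{2w}\cos^{2(mQ-w)}$ and letting $m\to\infty$ so this tends to a Poisson tail). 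Each branch with ancilla weight $w$ corresponds to applying $R$ exactly $w$ times interleaved with the original gates, so the whole post-selected circuit can be implemented with only $\le K = O(\alpha\pi Q)$ queries to $\controlled R$ by a standard ``compression'' trick (use a small counter register, and only actually invoke $\controlled R$ when the counter says we are in a low-weight branch — the ancilla bookkeeping needs $\approx mQ$ qubits naively, but one can reuse them or argue $m$ may be taken as a fixed polynomial and the query count, which is what the lemma bounds, is only $O(\alpha Q)$). I would set the constants $50$ and $100$ generously to absorb these estimates. Taking $m\to\infty$ is the clean way to get exact exponentials; the amplitude on the all-zeros branch tends to $e^{-\alpha\pi Q/2}$, matching the claimed $|\nu|\ge 0.99 e^{-\alpha\pi Q/2}$, and the post-selected-onto-low-weight state differs from $\circuit\ket\psi$ only through the discarded high-weight tail, which is exponentially small — giving $\|\ket{\widetilde{\mathrm{out}}} - \circuit\ket\psi\| < e^{-99\alpha\pi Q - 20}$ after choosing $K$ appropriately.

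The main obstacle, I expect, is the bookkeeping to simultaneously get (i) the query count down to $O(\alpha Q)$ rather than $O(mQ)$ or $O(Q)$, (ii) the amplitude $|\nu|$ matched to $e^{-\alpha\pi Q/2}$ up to the factor $0.99$, and (iii) the error $\|\ket{\widetilde{\mathrm{out}}} - \circuit\ket\psi\|$ genuinely exponentially small in $\alpha Q$ and not merely small. These three are in tension: truncating the ancilla weight at $K$ both caps the query count and controls the error, but pushing $K$ down hurts $|\nu|$ and the error bound. The resolution is that all three only need to hold up to constants hidden in the $50$ and $100\alpha Q$, so one picks $K$ to be a large constant times $\alpha\pi Q$ plus a large additive constant, applies concentration for the binomial/Poisson once, and checks the arithmetic closes; I would present this as an explicit-but-crude calculation rather than an optimized one, exactly as the statement's loose constants invite. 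A secondary subtlety is handling $Z^\dagger$ alongside $Z$: since $R=R^\dagger$, $R^\dagger{}^\alpha = R^{-\alpha}$ which in \cref{eq:PiToAlpha} just conjugates the flip amplitude, and $\controlled R$ is its own inverse, so the same gadget (with a conjugated ancilla rotation) works, costing one $\controlled R$ query either way.
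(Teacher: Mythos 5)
There is a genuine gap in your gadget: it is missing the recombination (interference) step, and without it the rest of the argument cannot produce the required conclusion. Your per-query gadget maps $\ket{0}\ket{\phi}\mapsto \cos(\tfrac{\pi\alpha}{2})\ket{0}\ket{\phi}+\ii\sin(\tfrac{\pi\alpha}{2})\ket{1}R\ket{\phi}$, so the ancilla stores which-path information and the two branches never interfere. Conditioned on the ancilla being $\ket{0}^{\otimes Q}$, the data register then holds the circuit with every query answered by the \emph{identity}, not $\circuit\ket{\psi}$; your claim that $\ket{\psi_\emptyset}$ equals $\circuit\ket{\psi}$ up to a global phase is false, and this is precisely why the statement's required form $\nu\ket{0}^{\otimes Q}\ket{\widetilde{\mathrm{out}}}+\ket{\Phi^\perp}$ with $\ket{\widetilde{\mathrm{out}}}\approx\circuit\ket{\psi}$ is not achieved. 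Your proposed fix---subdividing each query into $m$ sub-queries of $R^{\alpha/m}$ and postselecting on low total Hamming weight---does not repair this: postselecting onto a \emph{set} of computational-basis ancilla strings never coherently recombines the $\id$ and $R$ branches into $Z=\cos(\tfrac{\pi\alpha}{2})\id+\ii\sin(\tfrac{\pi\alpha}{2})R$, so the surviving state $\sum_{|S|\le K}c_S\ket{\mathbf{1}_S}\ket{\psi_S}$ stays entangled with the path register and no branch of it is close to $\circuit\ket{\psi}$. The limits you quote are also incorrect in your scheme: with flip amplitude $\sin(\tfrac{\pi\alpha}{2m})$ per sub-gadget, the all-zeros amplitude is $\cos(\tfrac{\pi\alpha}{2m})^{mQ}\to 1$ as $m\to\infty$, not $e^{-\alpha\pi Q/2}$, and the $mQ$ ancillas (or the $m\to\infty$ limit) conflict with the lemma's $Q$-ancilla, finite-circuit form.

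The ingredient you are missing is the second ancilla rotation of the gadget in \cref{eq:fractional-power-circuit}: each ancilla is prepared by $P^\pm_\alpha$ with the asymmetric weight $\gamma=\cos(\tfrac{\alpha\pi}{2})/(\cos(\tfrac{\alpha\pi}{2})+\sin(\tfrac{\alpha\pi}{2}))$, the $\controlled R$ is applied, and then $P^\pm_\alpha$ is applied \emph{again} before reading the ancilla. The second rotation makes the $\ket{0}$-outcome an interference of the ``$R$ applied'' and ``$R$ not applied'' paths, so that branch carries exactly $R^{\pm\alpha}\ket{\phi}$ with per-query amplitude $(\cos(\tfrac{\alpha\pi}{2})+\sin(\tfrac{\alpha\pi}{2}))^{-1}$; composing gives the all-zeros branch holding exactly $\circuit\ket{\psi}$ with amplitude $(1+\sin(\alpha\pi))^{-Q/2}\ge e^{-\alpha\pi Q/2}$, which is where the claimed bound on $\abs{\nu}$ actually comes from. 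Your correct high-level instinct---that the \emph{initial} ancilla product state is concentrated on Hamming weight $O(\alpha Q)$, so truncating it both caps the number of genuine $\controlled R$ invocations (via conditioning the interleaved unitaries on the ancilla string, as you sketch) and perturbs the output only by an exponentially small amount---is exactly the paper's second step, and your Chernoff-type accounting would go through once it is applied to the state $\bigotimes_{t=1}^Q(\sqrt{\gamma}\ket{0}+\sqrt{1-\gamma}\ket{1})$ with $1-\gamma=\Theta(\alpha)$, rather than to your uninterfered construction. No subdivision into $m$ sub-slots is needed.
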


A similar conversion was also stated in \cite[Lemma~3.8]{bccks17}, but their lemma converts the circuit $\circuit$ that uses $Q$ oracle calls to $Z$ and $Z^\dagger$ to a circuit that uses $O(\alpha Q \frac{\log(\alpha Q/\xi)}{\log\log(\alpha Q/\xi)})$ oracle calls to $\controlled R$ and incurs $\ell^2$-norm error~$\xi$ in the output state.
Here, $\alpha Q$ is the fractional query complexity of the circuit, since in that model each application of $Z$ and $Z^\dagger$ incurs $\alpha$ cost. So their lemma is similar to ours, but incomparable: It can achieve much smaller error, but the query complexity has a log factor which would weaken our lower bound if used directly.

To eliminate this log factor, we will modify their argument and approximate the complete circuit instead of each individual segment.
This gives a circuit with $O(\alpha Q)$ calls to $\controlled R$, but with the significantly worse success probability of $0.9\exp(-\pi \alpha Q)$.
This exponentially decaying success probability is generally undesirable, but will turn out not to affect the lower bound argument, which merely needs that the success probability of the algorithm is not significantly better than guessing, 
which succeeds with probability $\frac 1 N \le \exp(-\dd^2/64)$.

\begin{proof}
    We closely follow the fractional query to discrete query reduction in \cite{bccks17}.
    We begin by writing a circuit to implement $Z$ or $Z^\dagger$ from an application of $\controlled R$.
    Consider the following circuit. 
    \begin{align} \label{eq:fractional-power-circuit}
        \begin{array}{ccc}
        \begin{array}{c}\Qcircuit @C=1em @R=1.2em {
            \lstick{\ket{0}} & \gate{P^\pm_\alpha} & \ctrl{1} & \gate{P^\pm_\alpha} & \qw \\
            \lstick{\ket{\psi}} & \qw & \gate{R} & \qw & \qw
        }\end{array}
        &
        \begin{array}{c} P^\pm_\alpha \coloneqq \begin{bmatrix} \sqrt{\gamma} & \pm \ii \sqrt{1-\gamma} \\ \sqrt{1-\gamma} & \mp\ii\sqrt{\gamma}\end{bmatrix}\end{array}
        &
        \begin{array}{c} \gamma \coloneqq \frac{\cos\frac{\alpha\pi}{2}}{\cos \frac{\alpha\pi}{2} + \sin\frac{\alpha\pi}{2}} >0\end{array}
        \end{array}
    \end{align}
    Let us calculate the evolution of the state vector along this circuit:
    \begin{align} \label{eq:gadget-computation}
        &\ket{0}\ket{\psi} \\
        &\mapsto (\sqrt{\gamma}\ket{0} + \sqrt{1-\gamma}\ket{1})\ket{\psi} \nonumber\\
        &\mapsto \sqrt{\gamma}\ket{0}\ket{\psi} + \sqrt{1-\gamma}\ket{1}R\ket{\psi} \nonumber\\
        &\mapsto \sqrt{\gamma}(\sqrt{\gamma}\ket{0} + \sqrt{1-\gamma}\ket{1})\ket{\psi} \pm \ii \sqrt{1-\gamma}(\sqrt{1-\gamma}\ket{0} - \sqrt{\gamma}\ket{1})R\ket{\psi}\nonumber\\
        &= \ket{0}(\gamma\ket{\psi} \pm \ii (1-\gamma)R\ket{\psi}) + \sqrt{\gamma(1-\gamma)}\ket{1}(\ket{\psi} \mp \ii R\ket{\psi})\nonumber\\
        &= \ket{0}\frac{\cos(\half\alpha\pi)\ket{\psi} \pm \ii\sin(\half\alpha\pi)R\ket{\psi}}{\cos(\half{\alpha}\pi) + \sin(\half{\alpha}\pi)} + \sqrt{\gamma(1-\gamma)}\ket{1}(\ket{\psi} \mp \ii R\ket{\psi}) \nonumber\\
        &= \frac{e^{\ii\pi\alpha/2}}{{\cos(\half{\alpha}\pi) + \sin(\half{\alpha}\pi)}}\ket{0}R^{\pm\alpha}\ket{\psi} 
        + \sqrt{\gamma(1-\gamma)}\ket{1}(\ket{\psi} \mp \ii R\ket{\psi}) \nonumber
    \end{align}
    
    Using the circuit \cref{eq:fractional-power-circuit} with either $P^+ _\alpha$ or $P^-_\alpha$, 
    we can apply $R^\alpha = Z$ or $R^{-\alpha} = Z^\dagger$, respectively, with one use of $\controlled R$
    and postselection of the ancilla on~$\ket 0$.
    This reproves \cite[Lemma~3.3]{bccks17} in the slightly more general setting 
    where we may implement $Z^\dagger$ and the reflection~$R$ is arbitrary, not necessarily diagonal.

\begin{figure}[ht]
\[ 
\Qcircuit @R=.6em @C=0.65em {
\push{\ket{0}} \gategroup{1}{1}{4}{2}{.5em}{.} & \gate{P_\alpha^{\pm}} & \qw & \ctrl{4} & \qw & \qw &{\cdots}& & \qw & \qw& \gate{P_\alpha^{\pm}}  & \qw  \\
\push{\vdots} &        &    &   & &   &\ddots& &  &    & \vdots &   \\
   &        &        &   & &   & & &  &    &  &   \\
\push{\ket{0}} & \gate{P_\alpha^{\pm}} & \qw & \qw & \qw & \qw &{\cdots}& & \qw & \ctrl{1} & \gate{P_\alpha^{\pm}} & \qw \\
\push{\hphantom{\ket{\psi}}}& \qw & \multigate{3}{U_0} & \multigate{3}{R} & \multigate{3}{U_1} & \qw &{\cdots}& & \multigate{3}{U_{m-1}} & \multigate{3}{R}& \multigate{3}{U_{m}} &  \qw  \\
\push{\ket{\psi}}& {\vdots} & &  &  &  &\ddots& &  &  &  &    \\
& & &  &  &  && &  &  &  &   \\
\push{\hphantom{\ket{\psi}}} & \qw & \ghost{U_1} & \ghost{Q} & \ghost{U_1} & \qw &{\cdots}& & \ghost{U_{Q-1}} & \ghost{Q}& \ghost{U_{Q}} &    \qw  \\
}
\]
\caption{
The augmented version $\widehat{\circuit}$ of the circuit $\circuit$, which alternates between unitaries $U_i$ and an application of $Z$ or $Z^\dagger$, to perform $Q$ queries in total.
Each application of $Z$ or $Z^\dagger$ is replaced with the gadget in~\cref{eq:fractional-power-circuit}, making for $Q$ additional ancilla qubits.
The dotted box indicates the state prepared in the ancilla, denoted $\ket{\anc}$ in \cref{eq:anc-def}.
Figure taken from \cite[Figure~2]{bccks17} with slight modifications.
}
\label{fig:frac-reduction}
\end{figure}
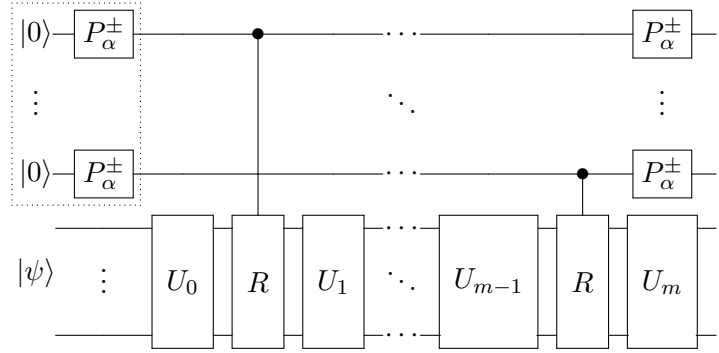

    Let $\widehat{\circuit}$ denote the circuit $\circuit$ augmented so that every instance of $Z$ and $Z^\dagger$ is replaced with the circuit 
    in~\cref{eq:fractional-power-circuit} (where we suppose $\alpha > 0$) as shown in \cref{fig:frac-reduction}.
    By composing \cref{eq:gadget-computation}, we get that
    \begin{align} \label{eq:failure}
        \widehat{\circuit}\ket{0}^{\otimes Q}\ket{\psi} &= 
        \underbrace{\Big(\frac{e^{\ii\pi\alpha/2}}{{\cos(\half{\alpha}\pi) + \sin(\half{\alpha}\pi)}}\Big)^Q}_\nu\ket{0}^{\otimes Q}\circuit\ket{\psi} + \ket{\Phi^\perp},
    \end{align}
    where $(\bra{0}^{\otimes Q} \otimes I) \ket{\Phi^\perp} = 0$, and the amplitude $\nu$ on the desired state satisfies
    \begin{align}
        \abs{\nu}
        &= (\cos(\half\alpha\pi) + \sin(\half\alpha\pi))^{-Q}
        = (1 + \sin(\alpha\pi))^{-\frac{Q}{2}}
        \geq \exp(-\sin(\alpha\pi)\tfrac{Q}{2})
        \geq \exp(-\alpha\pi\tfrac{Q}{2}).
    \end{align}
    Now it seems like we have made no progress, since the augmented algorithm $\widehat{\circuit}$ naively uses $Q$ queries to $\controlled R$, which is the same number of queries used by the original algorithm. We would like to reduce this query complexity to $100 \alpha Q$. 
    If $\alpha > \half$, the $\circuit_R = \widehat{\circuit}$ is what we want. 
    So, assume $0 < \alpha \leq \half$.

    The observation that lets us drastically reduce the query complexity of this algorithm is that the initial state of the ancilla has very low weight on $Q$-bit strings with a large number of $1$'s.
    In other words, it is mostly supported on low Hamming weight strings. Observe that the state is
    \begin{align} \label{eq:anc-def}
        \ket{\anc}
        = \bigotimes_{t=1}^Q P^\pm_\alpha \ket{0}
        = \bigotimes_{t=1}^Q (\sqrt{\gamma}\ket{0} + \sqrt{1-\gamma}\ket{1}),
        \quad \text{where }\gamma = \frac{\cos(\half\alpha\pi)}{\cos(\half\alpha\pi) + \sin(\half\alpha\pi)}.
    \end{align}
    Here, we think of $\alpha$ as being small, making $1-\gamma = \Theta(\alpha)$. Indeed,
    \begin{align} \label{eq:gamma-bounds}
        \tfrac14\alpha\pi \leq 1-\gamma \leq \tfrac12\alpha\pi \text{ for } \alpha \in [0,\half].
    \end{align}
    If $\ket{\anc}$ were replaced by an approximate state $\ket{\widetilde{\anc}}$ that is a superposition over \emph{only} the bit strings of Hamming weight $\leq K$, if $K$ is large enough (roughly $O(\alpha Q)$), then the approximate state will be very close to the original state, and hence will not affect the output too much. 
    
    Then notice what the resulting circuit looks like: It has $Q$ c$R$ gates, but the control register never holds a string of Hamming weight larger than $K$. In other words, in no branch of the superposition do we ever apply $R$ more than $K$ times, although the circuit formally has $Q$ copies of c$R$. So morally, this circuit should only require $K$ queries to $R$, not $Q$ queries.

    The argument for this is given more explicitly in \cite[Page 28]{bccks17}, but briefly, a bit string in the superposition can be thought of as encoding a partition of the circuit, dictating how long the circuit should be run before a query from $\alg$ is responded with $R$ instead of $\id$; in this way, one can run the circuit with $K$ oracle queries, by conditioning the intermediate unitaries on the ancilla bit string.
    
    Let $\ket{\anc_{\leq K}}$ denote the subnormalized state of $\ket{\anc}$
    obtained by discarding all computational basis components
    of Hamming weight $> K$.
    We consider a cutoff $K = k(1-\gamma) Q$, where $(1-\gamma) Q$ is the ``mean'' bitstring weight, and $k$ is a parameter to be chosen.
    Then
    \begin{align}
        \ket{\anc} - \ket{\anc_{\leq K}}
        &= \sum_{\substack{b \in \{0,1\}^Q \\ \abs{b} > K}} \sqrt{\gamma}^{Q-\abs{b}}\sqrt{1-\gamma}^{\abs{b}}\ket{b}, \quad \textrm{and} \\
        \twonorm{\ket{\anc} - \ket{\anc_{\leq K}}}
        &= \Big(\sum_{\substack{b \in \{0,1\}^Q \\ \abs{b} > K}} \gamma^{Q-\abs{b}}(1-\gamma)^{\abs{b}}\Big)^{\frac12}
        < \exp\Big(-\frac{k^2(1-\gamma)Q}{2(2+k)}\Big),
    \end{align}
    where we used a Chernoff bound since the error can be written as the probability that a binomial random variable takes a large value.
    With this, we can compute the $\ell^2$ error incurred by replacing $\ket{\anc}$ with a normalized state
    $\ket{\anc_{\leq K}}/\twonorm{\ket{\anc_{\leq K}}}$. 
    \begin{align}
        \twonorm[\Big]{\ket{\anc} - \frac{\ket{\anc_{\leq K}}}{\twonorm{\ket{\anc_{\leq K}}}}}
        &\leq \twonorm[\Big]{\ket{\anc} - \ket{\anc_{\leq K}}} + \twonorm[\Big]{\ket{\anc_{\leq K}} - \frac{\ket{\anc_{\leq K}}}{\twonorm{\ket{\anc_{\leq K}}}}} \label{eq:ancilla-error}\\
        &= \twonorm{\ket{\anc} - \ket{\anc_{\leq K}}} + 1 - \twonorm{\ket{\anc_{\leq K}}} \nonumber\\
        &\leq 2\twonorm{\ket{\anc} - \ket{\anc_{\leq K}}}
        < 2\exp\Big(-\frac{k^2(1-\gamma)Q}{2(2+k)}\Big) \nonumber
    \end{align}
    We take $K = \ceil{40 + 40(1-\gamma)Q}$, so $k \geq 40 + \frac{40}{(1-\gamma)Q}$, and
    \begin{align}
        2\exp\parens[\Big]{-\frac{k^2(1-\gamma)Q}{2(2+k)}}
        &\leq \exp\parens[\Big]{\ln(2)-\frac{k(1-\gamma)Q}{4}} \\
        &\leq \exp\parens[\Big]{\ln(2)-\frac{20^2(1-\gamma)Q + 100}{4}}
        \leq \exp\parens[\Big]{-100 \alpha \pi Q - 24} \tag*{by \cref{eq:gamma-bounds}}
    \end{align}
    So, let $\widehat{\circuit}_{\leq K}$ denote the circuit of $\widehat{\circuit}$, modified so that upon being given $\ket{0}^{\otimes Q}$, the circuit prepares the truncated $\ket{\anc_{\leq K}}/\twonorm{\ket{\anc_{\leq K}}}$ instead of $\ket{\anc}$.
    This will be our final choice of circuit $\circuit_R$, so our goal now is to establish the claimed properties of the resulting circuit
    \begin{align}
        \widehat{\circuit}_{\leq K} \ket{0}^{\otimes Q}\ket{\psi}
        = \nu_{\leq K}\ket{0}^{\otimes Q}\ket{\widetilde{\mathrm{out}}} + \ket{\Phi^\perp},
    \end{align}
    where $\ket{\Phi^\perp}$ is some state satisfying $(\bra{0}^{\otimes Q} \otimes I) \ket{\Phi^\perp} = 0$.
    
    There is some phase ambiguity in definitions, since $\ket{\widetilde{\mathrm{out}}}$ could have any phase as long as $\nu_{\leq K}$ compensates for it so that the product of the phases is correct. So we will choose $\nu_{\leq K}$ to have the same phase as $\nu$, so that $\abs{\nu_{\leq K} - \nu} = \abs{\abs{\nu_{\leq K}} - \abs{\nu}}$.
    If the ancilla state changes by $\xi$ in $\ell^2$ distance, since the rest of the circuit is unitary, this changes the output state of $\widehat{\circuit}$ by $\xi$ error.
    Using this, we can conclude
    \begin{align} \label{eq:nu-correctness}
        \abs{\abs{\nu} - \abs{\nu_{\leq K}}}
        &= \abs{\twonorm{(\bra{0}^{\otimes Q}\otimes I)\widehat{\circuit}\ket{0}^{\otimes Q}\ket{\psi}} - \twonorm{(\bra{0}^{\otimes Q}\otimes I)\widehat{\circuit}_{\leq K}\ket{0}^{\otimes Q}\ket{\psi}}} \\
        &\leq \twonorm{(\bra{0}^{\otimes Q}\otimes I)(\widehat{\circuit} - \widehat{\circuit}_{\leq K})\ket{0}^{\otimes Q}\ket{\psi}} \nonumber\\
        &\leq \twonorm{(\widehat{\circuit} - \widehat{\circuit}_{\leq K})\ket{0}^{\otimes Q}\ket{\psi}} \leq \exp\parens{-100 \alpha \pi Q - 24}. \nonumber
    \end{align}
    A consequence of this is that $\abs{\nu_{\leq K}} \geq \exp(-\alpha\pi\tfrac{Q}{2}) - 0.01\exp(-100\alpha\pi Q) \geq 0.99\exp(-\alpha\pi\tfrac{Q}{2})$.
    Similarly, we can show
    \begin{align}
        \twonorm{\circuit\ket{\psi} - \ket{\widetilde{\mathrm{out}}}}
        &\leq \frac{1}{\nu}(\twonorm{\nu\circuit\ket{\psi} - \nu_{\leq k}\ket{\widetilde{\mathrm{out}}}} + \abs{\nu_{\leq k} - \nu}\twonorm{\ket{\widetilde{\mathrm{out}}}}) \\
        &=\frac{1}{\nu}(\twonorm{(\bra{0}^{\otimes Q}\otimes I)(\widehat{\circuit} - \widehat{\circuit}_{\leq K})\ket{0}^{\otimes Q}\ket{\psi}} + \abs{\abs{\nu_{\leq k}} - \abs{\nu}}\twonorm{\ket{\widetilde{\mathrm{out}}}}) \\
        &\leq\frac{2}{\nu}\twonorm{(\bra{0}^{\otimes Q}\otimes I)(\widehat{\circuit} - \widehat{\circuit}_{\leq K})\ket{0}^{\otimes Q}\ket{\psi}} \leq \frac{2}{\nu}\exp\parens{-100 \alpha \pi Q - 24} \tag*{by \cref{eq:nu-correctness}}\\
        &\leq 2\exp(\alpha \pi\tfrac Q2)\exp\parens{-100 \alpha \pi Q - 24}
        < \exp\parens{-99 \alpha \pi Q - 20}. \tag*{by \cref{eq:failure}}
    \end{align}
    This gives the desired properties.
    The number of queries made is $K = \ceil{40(1 + (1-\gamma)Q)} \leq 50 + 100\alpha Q$.
\end{proof}

We are now ready to prove the final lower bound, which we restate for the reader's convenience:

\lowerbound*

\begin{proof}
    Consider a quantum circuit $\alg$ with the assumed properties and that uses $Q$ queries to the oracle, where $Q$ is a function of $\dd$ and $\eps$.
    Consider the net of reflections $\net \subset \UU(\dd)$ as described in \cref{reflection-net}.
    We let $Z = R^{\alpha}$.
    If we run a process tomography algorithm $\alg$ with the properties assumed in the statement on $Z$, then $\alg$ is promised to output an estimate~$\bU$ such that $\dist(\UU(1)\bU, Z) < \eps$ with probability $\geq 2/3$ using $Q$ queries of $\controlled Z$ and $\controlled Z^\dagger$.
    Since $(\controlled R)^\alpha = \controlled(R^\alpha) = \controlled Z$, these $\controlled Z$ and $\controlled Z^\dagger$ queries can be thought of as \emph{fractional queries} to $\controlled R$.
    We first show that the assumed output guarantee, run on an element $R \in \net$ of the net, suffices to determine which element $R$ is, upon choosing $\alpha = \Theta(\eps)$; then, we show how to perform this algorithm, only using queries to $\controlled\controlled R$, the doubly controlled~$R$.

    For the first step, suppose we have access to an unknown $R \in \net$, and wish to identify which element it is.
    We can do this by applying $\alg$ to $Z = R^\alpha$ for $\frac{1}{\alpha} = \floor{\frac{1}{8\eps}}$ (where we assume that $\eps < \frac{1}{8}$ so that $\frac1\alpha \geq 1$).
    On a successful run, which occurs with probability $\geq \frac23$, the output $\bU$ satisfies $\diamnorm{\calU(\bU) - \calU(Z)} < \eps$.
    In this case, we can identify which element in the net it is by computing $\bU^{1/\alpha}$, and picking the element of the net it is closest to.
    The correctness guarantee implies that $\bU^{1/\alpha}$ and $R$ are close in diamond norm:
    \begin{multline}
        \diamnorm{\calU(\bU^{1/\alpha}) - \calU(R)}
        = \diamnorm{\calU(\bU^{1/\alpha}) - \calU(Z^{1/\alpha})} \\
        = \diamnorm[\Bigg]{\sum_{p=1}^{1/\alpha}(\calU(\bU^{p}Z^{1/\alpha - p}) - \calU(\bU^{p-1}Z^{1/\alpha - p+1}))} 
        \leq \frac1\alpha\diamnorm{\calU(\bU) - \calU(Z)} < \frac{\eps}{\alpha} \leq \frac{1}{8},
    \end{multline}
    where above we used the triangle inequality and unitary invariance of the diamond norm.
    By the property of the net $\net$, only one element is ever within $\frac{1}{8}$ of any particular $\bU^{1/\alpha}$, so the closest net element to it must be the correct element $R$.

    Now, we show that $\alg$ can still be run given appropriate access to $R$ only.
    Applying \cref{lem:bad-bccks} on $\alg$ with $Q$ queries to $\controlled Z, \controlled Z^\dagger$, we get a circuit $\alg_R$ that uses $50+100\alpha Q$ queries to $\controlled\controlled R$.
    Further, if we run $\alg_R$, the ancilla qubits will be measured in~$\ket{0}^{\otimes Q}$ with probability $\geq 0.9\exp(-\alpha\pi Q)$, and the resulting pure state will have $\ell^2$-norm error $\exp(-99\alpha \pi Q - 20)$ from the true output of~$\alg$.
    We will perform $\alg_R$, measure all of the ancilla in the computational basis, and if it is~$\ket{0}^{\otimes Q}$ (postselection), 
    we output the output of~$\alg_R$ (which we assume to be the measurement of the rest of the qubits in the computational basis); 
    otherwise, we declare failure.
    Since $\alg$ produces a correct estimate with probability $\geq \frac23$, 
    and since closeness (in $\ell^2$ norm) of states $\ket{a}, \ket{b}$ implies closeness (in total variation distance) between the probability distributions resulting from measuring in the computational basis, $\tvdist{\mathbf{a}}{\mathbf{b}} \leq 4\twonorm{\ket{a} - \ket{b}}$~\cite[Lemma~3.6]{bv97},
    we have 
    \begin{align}
        &\Pr[\alg_R\text{ succeeds}]\\
        &= \Pr[\alg_R\text{ postselection succeeds}]\Pr[\alg_R \text{ succeeds} \mid \text{postselection succeeds}] \nonumber\\
        &\geq 0.9 e^{-\alpha\pi Q} \big(\Pr[\alg \text{ succeeds}] - \tvdist{\text{\textbf{output}\ of postselected} \alg_R}{\text{\textbf{output}\ of}\alg}\big) \nonumber\\
        &\geq 0.9 e^{-\alpha\pi Q} \big(\Pr[\alg \text{ succeeds}] - 4\twonorm[\big]{\ket{\text{output state of postselected } \alg_R} - \ket{\text{output state of }\alg}}\big) \nonumber\\
        &\geq 0.9 e^{-\alpha\pi Q} (\tfrac{2}{3} - 4 e^{-99\alpha \pi Q - 20} ) \nonumber\\
        &\geq \half e^{-\alpha\pi Q}. \nonumber
    \end{align}
    To summarize, we have an algorithm that, with $50 + 100\alpha Q$ queries to $\controlled\controlled R$, can output a classical description $\bU$ of $Z = R^\alpha$ such that $\diamnorm{\calU(\bU) - \calU(Z)} < \eps$ with probability $\geq \frac{1}{2}\exp(-\alpha\pi Q)$.

    If we are promised that $R \in \net$ with $\net$ the net from \cref{reflection-net}, then choosing $\alpha = \floor{\frac{1}{8\eps}}^{-1}$, the output of this algorithm can distinguish an element $R$ in the net with probability $\geq \frac{1}{2}\exp(-\alpha\pi Q)$.
    It can also distinguish elements of the net $\controlled\controlled\net = \{\controlled\controlled R \mid R \in \net\} \subset \UU(4\dd)$, since identifying $R$ is equivalent to identifying $\controlled\controlled R$.
    By~\cref{ChannelLearningConfidenceBound} and~\cref{BinomBound}, any algorithm that can distinguish between all the elements in $\controlled\controlled\net$ with probability $\geq \exp(\delta_2-\delta_1 Q)$, 
    where $\delta_1 \geq 0, \delta_2 \in \mathbb{R}$ are constants,
    must use $\Omega(\dd^2)$ queries.
    From this we can conclude the bound:
    \begin{equation*}
        \Omega(\dd^2) \le 50 + 100\alpha Q = 50 + 100\floor{\tfrac{1}{8\eps}}^{-1}Q
        \implies Q = \Omega(\dd^2/\eps).\qedhere
    \end{equation*}
\end{proof}

\addcontentsline{toc}{section}{References}
\bibliography{main}
\bibliographystyle{alphaurl}

\appendix

\section{\texorpdfstring{Deferred proofs for \cref{sec:norms}}{Deferred proofs for Section 1.1}} \label{sec:norm-proofs}

\normeq*

\begin{proof}
We can get \cref{eq:pudist-lie} from minimizing the terms in the inequality \cref{eq:metricEq} over a global phase.
So, it suffices to prove \cref{eq:diam-pudist}, which relates diamond-norm distance to the other distances.

Let $\lambda_1,\ldots,\lambda_\dd$ denote the eigenvalues of $U^\dagger V$, and let $\sigma$ denote their \emph{spread}, i.e.\ the length of the shortest arc (of the complex unit circle) containing the spectrum of $U^\dagger V$:
\begin{align} \label{eq:spread}
    \sigma = \max_{\substack{-2\pi \leq a \leq b \leq 2\pi \\ \{\lambda_i\} \subset \{e^{\ii\theta} \mid \theta \in [a,b]\}}} \abs{b - a}.
\end{align}
Then, examining the geometry, we can rewrite all of the distances in terms of this spread $\sigma$.%
\footnote{
    This reflects that all these norms are \emph{worst-case} norms, in that they in some sense maximize error over all possible directions.
}

For diamond-norm distance, we get a simple formula using special properties for when the two channels are isometries.
\begin{align} 
    \diamnorm*{ \calU(U) - \calU(V) }
    &= \max_{x \in \mathbb{C}^\dd}\onenorm{Ux(Ux)^\dagger - Vx(Vx)^\dagger} \tag*{by \cite[Theorem~3.55]{Watrous2018}} \\
    &= \max_{x \in \mathbb{C}^\dd} 2\sqrt{1 - \abs{x^\dagger U^\dagger Vx}^2} \tag*{by \cite[Equation~1.186]{Watrous2018}} \\
    &= 2 \sqrt{1-\min\{|z|^2 : z \in \text{convex-hull}(\lambda_1, \dots, \lambda_{\dd})\}} \label{eq:diamond-hull}\\
    &= \begin{cases}
        2\sqrt{1 - \abs{\half(1 + e^{\ii\sigma})}^2} = 2\sin(\sigma/2) & \text{ if }\sigma < \pi \\
        2 & \text{ if }\pi \leq \sigma \leq 2\pi
    \end{cases}
    \label{eq:diamond}
\end{align}
To get \cref{eq:diamond-hull}, if the spread is $\geq \pi$, then the convex hull must contain the origin, and so the minimum is $0$; otherwise, the minimum is achieved at the midpoint of the extremal eigenvalues, $\half(e^{\ii a} + e^{\ii b})$ according to the notation in \cref{eq:spread}.

We can also see
\begin{align}
    \pudist{U}{V}
    &= \min_{\phi \in \uone}\opnorm{\id - \phi U^\dagger V}
    = \min_{\phi \in \uone}\max_{i} \abs{1 - \phi \lambda_i}
    = \abs{1-e^{\ii\sigma/2}}
    = 2\sin(\sigma/4); \label{eq:pudist-spread}\\
    \dist(U, \uone V)
    &= \min_{\phi \in \uone} \dist(\id, \phi U^\dagger V)
    = \min_{\phi \in \uone} \max_{i} \dist(1, \phi \lambda_i)
    = \sigma/2.
\end{align}
Using the following basic inequalities,
\begin{align}
    \half\sin(\sigma/2) \leq \sin(\sigma/4) &\leq \sin(\sigma/2) &\text{for }& \sigma < \pi, \\
    \half \leq \sin(\sigma/4) &\leq 1 &\text{for }& \pi \leq \sigma \leq 2\pi, \nonumber
\end{align}
we can conclude from \cref{eq:diamond,eq:pudist-spread} that \cref{eq:diam-pudist} holds:
\begin{equation*}
    \half\diamnorm{\calU(U) - \calU(V)} \leq \pudist{U}{V} \leq \diamnorm{\calU(U) - \calU(V)}. \qedhere
\end{equation*}
\end{proof}

\fiddiam*

\begin{proof}
For $\dd = 1$, there is only one unitary channel (the identity channel), so all distances are zero and the equality holds trivially.
So, suppose $\dd \geq 2$ and, as in the proof above, let $\lambda_1,\ldots,\lambda_\dd$ denote the eigenvalues of $U^\dagger V$.
Let $\sigma \in [0,2\pi)$ be the length of a shortest arc on the unit complex circle 
that covers all the eigenvalues.
By rotating if necessary, we may set $\lambda_1 = e^{\ii \sigma / 2}$ and $\lambda_\dd = e^{-\ii \sigma / 2}$.
Recall the following expressions.
\begin{align}
    \ol{F}(\calU(U),\calU(V)) &= 1 - \abs[\Big]{\tfrac{1}{\dd} \Tr(U^\dag V)}^2
    = 1 - \abs[\Big]{\tfrac{1}{\dd}\sum_i \lambda_i}^2, & (\text{\cref{eq:entfid}}) \nonumber\\
    \tfrac 1 4 \diamnorm*{ \calU(U) - \calU(V) }^2
    &= 1-\min\{|z|^2 : z \in \text{convex-hull}(\lambda_1, \dots, \lambda_{\dd})\} & (\text{\cref{eq:diamond-hull}})\\ 
    &= \sin^2 (\min(\sigma, \pi)/2). & (\text{\cref{eq:diamond}})\nonumber 
\end{align}
Since $\frac{1}{\dd}\sum_i \lambda_i$ is a point in the convex hull, 
the first two lines here give the left-hand side of the proposition.
For the right-hand side,
if $\sigma \le \pi$, then
\begin{equation}
    \abs*{\frac 1 \dd \sum_i \lambda_i } = \abs*{ \frac 2 \dd \cos \frac \sigma 2 + \frac 1 \dd \sum_{i=2}^{\dd -1} \lambda_i } 
    \le \frac 2 \dd \cos \frac \sigma 2 + \frac{\dd -2}{\dd} .
\end{equation}
If $\sigma > \pi$, then $\dd > 2$ and there must be some eigenvalue, say, $\lambda_j = e^{\ii \theta}$ 
where $\theta$ is in the interval $J := [-\frac \pi 2 + \frac{\sigma -\pi}{2}, \frac \pi 2 - \frac{\sigma - \pi}{2}]$,
since otherwise there would be a shorter arc containing all the eigenvalues.
It is easy to see by drawing a parallelogram on the complex plane 
that the magnitude of $e^{\ii \sigma/2} + e^{-\ii \sigma/2} + e^{\ii x}$ for varying~$x \in J$ assumes the largest value~$1$
when $x$ is either end point of the interval~$J$.
So,
\begin{equation}
    \abs*{\frac 1 \dd \sum_i \lambda_i } \le \frac 1 \dd \abs*{\lambda_1 + \lambda_j + \lambda_\dd} + \frac 1 \dd \abs*{\sum_{i \neq 1, j,\dd} \lambda_i} \le \frac 1 \dd + \frac{\dd - 3}{\dd} = \frac{\dd - 2}{\dd}.
\end{equation}
Therefore, if we set $\sigma' = \min(\sigma,\pi)$ in all cases, we have
\begin{align}
    \abs*{\frac 1 \dd \sum_i \lambda_i } &\le \frac 2 \dd \cos \frac {\sigma'} 2 + \frac{\dd -2}{\dd} ,\\
    \ol{F}(\calU(U),\calU(V)) 
    &\ge 1 - \left( \frac 2 \dd \cos \frac {\sigma'} 2 + \frac{\dd -2}{\dd}  \right)^2 \ge \frac 2 \dd \sin^2 \frac {\sigma'} 2 = \frac 1 {2\dd}  \diamnorm*{ \calU(U) - \calU(V) }^2 \nonumber
\end{align}
where the second inequality in the second line uses $\dd \ge 2$ and $\cos (\sigma'/2) \ge 0$.

The tightness of the left-hand side is seen by setting $\lambda_1 = \cdots = \lambda_{\dd/2} \neq \lambda_{\dd/2 +1} = \cdots =\lambda_\dd$
where $\dd$ is even.
The tightness of the right-hand side is seen by setting 
$\lambda_1 = e^{\ii \sigma / 2}$, $\lambda_\dd = e^{-\ii\sigma/2}$ with small~$\sigma$ and $\lambda_i = 1$ for all other~$i$.
\end{proof}

\yrcmixing*

The inequality follows from \cite[Lem.~2]{Yang2020},
but we give another short proof (with worse constants) using the idea of ``Mixing Unitary'' Lemma of~\cite{Hastings2016,Campbell2016}.

\begin{proof}[Proof of \cref{mixingIneq}]
    Write $Z^\dag \bU = e^\bX$ where $\opnorm{\bX} \le \pi$.
    Since the distribution of~$\bX$ is conjugation invariant,
    \begin{equation}
        \E_\bX \bX = \E_\bX \int \rd V ~ V \bX V^\dag = \E_\bX \frac{\Tr \bX}{\dd} \id  .
    \end{equation}
    Then, for any state~$\rho$ that may be entangled with some ancilla,
    we see, using nested commutators 
    $[\bX,\rho]_k = [\bX,[\bX,\rho]_{k-1}]$ for $k > 0$ and $[\bX,\rho]_0 = \rho$,
    \begin{multline}
        \diamnorm{\calM(\bU) - \calU(Z)}
        = \onenorm*{
                \E_{\bU} \bU \rho \bU^\dag 
                -
                Z \rho Z^\dag
                }
            =
        \onenorm*{            
                \E_\bX e^\bX \rho e^{-\bX} - \rho
            }
            =
            \onenorm*{
            \E_\bX \sum_{k=1}^{\infty} \frac{[\bX,\rho]_k}{k!}
            } \\
            \le
            \onenorm*{
                \E_\bX [\bX,\rho]
            }
             +
         \onenorm*{
             \E_\bX \sum_{k=2}^{\infty} \frac{[\bX,\rho]_k}{k!}
         }
         \le
             \E_\bX \sum_{k=2}^\infty \frac{2^k \opnorm{\bX}^k \onenorm{\rho}}{k!} \\
         \le
             \E_\bX 4 \opnorm{\bX}^2 \exp(2 \pi)
         =
             \E_\bU 4 \dist(Z, \uone \bU)^2 \exp(2 \pi).
    \end{multline}
    The result follows using norm equivalence (\cref{prop:norm-equivalence}).
\end{proof}

\section{Estimating eigenvalues} \label{sec:eigenvalues}

In this section we describe an algorithm that, given access to an unknown unitary~$Z \in \UU(\dd)$, outputs an approximation to the eigenvalues of~$Z$ that is close in ``$L_\infty$-distance up to a global phase''.
Let us make this precise:
\begin{definition}
    Let $\Theta_1, \Theta_2 \subset [0,2\pi)$ be two finite sets of ``eigenphases''.
    We write
    \begin{equation}
        \dH{\Theta_1}{\Theta_2} = \min_{\tau \in [0,2\pi)} d_{\mathrm{H}}(\Theta_1 + \tau, \Theta_2),
    \end{equation}
    where $d_{\mathrm{H}}$ denotes Hausdorff distance in~$\RR/2\pi\ZZ$.
    In other words, $\dH{\Theta_1}{\Theta_2} \leq \eps$ iff, for some $\tau$, the sets $\Theta_1' = \Theta_1 + \tau$ and $\Theta_2$ have the following property (when all numbers are taken mod~$2\pi$): For every $\theta_1 \in \Theta'_1$ there is $\theta_2 \in \Theta_2$ with $\abs{\theta_1 - \theta_2} \leq \eps$, and vice versa (interchanging the roles of $\Theta'_1$~and~$\Theta_2$). 
\end{definition}

Our goal is to prove the following theorem:
\begin{theorem} \label{thm:esteigs}
    There is an eigenvalue estimation algorithm that, given $\eps > 0$ and access to an unknown unitary $Z \in \UU(\dd)$, applies $Z$ at most $O(\dd/\eps)\cdot \log^2 \dd$ times and outputs the classical description of a set~$\wh{\Theta} \subset [0,2\pi)$ such that $\dH{\Theta}{\wh{\Theta}} \leq \eps$ except with probability at most~$\dd^{-100}$, where $\Theta$ is the set of $Z$'s eigenphases.  (That is, $\Theta = \{ \theta \in [0,2\pi) : e^{\ii \theta} \text{ is an eigenvalue of } Z\}$.)
    Moreover, the quantum space requirement for the algorithm is only $2$~qudits plus $1$-qubit, and the gate complexity beyond the uses of~$Z$ is only $\poly\log(\dd/\eps)$.
\end{theorem}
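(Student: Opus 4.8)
The plan is to build the algorithm out of three ingredients: a \emph{coarse, control-free spectral-estimation primitive} for a single unitary, an iterative doubling $Z, Z^2, Z^4, \dots$ exactly as in the warmup of \cref{fig:warmup}, and a purely classical collation that stitches the coarse estimates together into an $\eps$-accurate set estimate in the $\dH{\cdot}{\cdot}$ metric. The point of doubling plus collation is that one only ever needs each scale resolved to a fixed constant precision, so the quantum part never has to do anything fine-grained (in particular, no unitary synthesis), which is what buys both the $\polylog(\dd/\eps)$ gate overhead and the tiny quantum workspace.

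The primitive is a folklore control-free version of iterative phase estimation: given only forward queries to a unitary $W\in\UU(\dd)$ (no $W^\dagger$, no controlled-$W$), it returns one \emph{sample eigenphase} $\wh\vartheta$ that is within $\tfrac1{10}$ of some true eigenphase of $W$, using $O(\log^2\dd)$ applications of $W$, two $\dd$-dimensional registers, and one qubit. The idea is to put the two qudits in the maximally entangled state $\ket{\Phi}$ (so that one register effectively carries the uniform mixture over an orthonormal eigenbasis of $W$), use the reference register together with controlled-\textsc{swap}s and a single phase-readout qubit to synthesize the ``controlled-$W$'' step of iterative phase estimation, and run $W$ in series with classically-random dyadic exponents $W,W^2,W^4,\dots$; each shot collapses onto one eigenspace and reads off a constant number of bits of the corresponding eigenphase, and $O(\log\dd)$ repetitions per bit amplify the per-shot failure probability to $\dd^{-\Omega(1)}$. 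Since under $\ket{\Phi}$ each eigenvalue of multiplicity $m$ is sampled with probability $m/\dd\ge 1/\dd$, a coupon-collector argument shows that $O(\dd\log\dd)$ shots hit every distinct eigenvalue; hence the set $\wh\Psi$ of samples satisfies $\dH{\wh\Psi}{\Psi(W)}\le\tfrac1{10}$ except with probability $\dd^{-\Omega(1)}$, using $O(\dd\log^2\dd)$ applications of $W$, where $\Psi(W)$ denotes the eigenphase set of $W$. (Multiplicities are discarded, which is harmless because $\dH$ only sees the underlying set.)

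Now run the primitive on $W_j:=Z^{2^j}$ for $j=0,1,\dots,T$ with $T=\lceil\log_2(1/\eps)\rceil$, obtaining sets $\wh\Psi_j$ with $\dH{\wh\Psi_j}{\,2^j\Theta\bmod 2\pi\,}\le\tfrac1{10}$; since $W_j$ costs $2^j$ applications of $Z$, the geometric sum gives $\sum_j 2^j\cdot O(\dd\log^2\dd)=O(\dd\log^2\dd/\eps)$ queries. To collate, maintain a set $\mathcal{C}_j$ of candidate ``live intervals'' for $\Theta$: start with $\mathcal{C}_0$ the radius-$\tfrac1{10}$ intervals around the points of $\wh\Psi_0$; from $\mathcal{C}_j$ form $\mathcal{C}_{j+1}$ by, for each interval $I\in\mathcal{C}_j$, considering its two halves, doubling-mod-$2\pi$, and retaining a refined half only when $2^{j+1}$ times it lands within $\tfrac1{10}$ of some point of $\wh\Psi_{j+1}$; each surviving candidate thereby has its radius halved. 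After $T$ rounds every candidate has radius $\le\tfrac1{10}\cdot 2^{-T}\le\eps$, and we output one representative from each as $\wh\Theta$. Correctness is two-sided: (i) every $\theta\in\Theta$ has $2^j\theta$ within $\tfrac1{10}$ of $\wh\Psi_j$ at \emph{every} scale, so the candidate tracking $\theta$ is never pruned, whence $\wh\Theta$ has a point within $\eps$ of $\theta$; (ii) any candidate surviving all $T$ rounds is, at every scale, close to $\wh\Psi_j$ and hence to $2^j\Theta$, and unwinding the binary refinement forces it within $\eps$ of some $\theta\in\Theta$. A union bound over the $O(\log(1/\eps))$ scales (amplifying each primitive call by an extra constant times $\log\dd$ so that each scale fails with probability $\ll\dd^{-100}$) pushes the overall failure probability below $\dd^{-100}$; the query count stays $O(\dd\log^2\dd/\eps)$, the quantum workspace stays two qudits plus one qubit (the collation is classical), and the gate overhead beyond the $Z$-queries is $\polylog(\dd/\eps)$ because the primitive uses only state preparation, \textsc{swap} networks, a small QFT, and single-qubit gates.

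The main obstacle is pinning down the control-free primitive: verifying that the \textsc{swap}-based interferometry genuinely extracts the needed phase bits without invoking $W^\dagger$ or a true controlled-$W$, and that the induced per-shot distribution over eigenphases really has $\ge 1/\poly(\dd)$ mass on each distinct eigenvalue so that $O(\dd\log\dd)$ shots suffice. A secondary obstacle is robustness of the set-valued collation: two eigenphases whose binary expansions agree for many initial bits and then diverge must each be tracked through the branching, the fixed slop $\tfrac1{10}$ must be small enough that distinct lineages are never merged after the doubling step, and one must argue that the number of live candidates stays $O(\dd)$ throughout. The usual finite-precision issues — discretizing $\wh\Theta$ and the $O(\macheps)$-net bookkeeping of \cref{foot:tedium} — are handled exactly as elsewhere in the paper.
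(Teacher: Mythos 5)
There is a genuine gap, and it comes in two pieces. First, the ``control-free spectral-estimation primitive'' you posit cannot exist as stated: with only forward applications of $W$ (no controlled-$W$), the channel $\calU(W)$ is invariant under $W \mapsto e^{\ii\phi}W$, so \emph{absolute} eigenphases of $W$ are information-theoretically inaccessible. The swap-gadget interferometry only ever produces \emph{differences} of eigenphases, measured against whatever reference eigenvector sits in the second register. Moreover, to read several bits of ``the'' eigenphase in one shot you must apply increasing powers of $W$ to the \emph{same} collapsed eigenvector, i.e.\ you must retain both registers across the repetitions --- at which point you are already rebuilding the paper's \cref{prop:est1}. For your cross-scale stitching to even be well posed, the sets $\wh\Psi_j$ must approximate $2^j\Theta$ up to shifts that cohere under doubling; that forces you to keep one fixed reference eigenvector of $Z$ (with some eigenphase $\alpha$) alive across \emph{all} scales and shots, so that the scale-$j$ shift is exactly $2^j\alpha$. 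With a maximally entangled pair, or a reference re-collapsed per scale or per shot, each shot samples a difference of two random eigenphases and the sets $\wh\Psi_j$ carry unrelated unknown offsets, so the test ``$2^{j+1}I$ lands near $\wh\Psi_{j+1}$'' is vacuous. You flag the primitive as an obstacle, but the obstruction is structural (global-phase invariance plus cross-scale phase coherence), not a matter of verifying a circuit.

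Second, even granting consistent sets $\wh\Psi_j \approx 2^j\Theta \bmod 2\pi$, the soundness direction of your collation fails when $Z$ has many eigenvalues. A candidate $x$ survives scale $j$ as soon as $2^j x$ is within a constant of $2^j\theta_j \bmod 2\pi$ for \emph{some} eigenphase $\theta_j$, and this witness may change from scale to scale; closeness of $2^jx$ to $2^j\theta_j$ modulo $2\pi$ does not localize $x$ to within $2^{-j}$ of $\theta_j$ (the $2\pi k/2^j$ ambiguity), and the inductive ``unwinding'' that resolves this in single-phase robust phase estimation needs the \emph{same} $\theta$ to witness every scale. Concretely, once the sets $2^j\Theta \bmod 2\pi$ are $O(1)$-dense on the circle --- which happens for generic spectra as soon as $\dd$ is moderately large --- no half-interval is ever pruned, and $\wh\Theta$ ends up containing points at distance $\approx \tfrac1{10} \gg \eps$ from $\Theta$, violating $\dH{\Theta}{\wh{\Theta}}\le\eps$. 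The paper avoids both problems by doing the doubling \emph{coherently per sampled eigenvector}: \cref{prop:est1} runs iterative phase estimation on the relative phase $\beta-\alpha$ with the pair $\ket{a}\otimes\ket{b}$ held fixed (the gadget returns it unentangled), so every recorded number is automatically an $\eps$-accurate relative eigenphase; \cref{prop:est2} then coupon-collects over a maximally mixed second register, and the global-shift freedom built into $\dH{\cdot}{\cdot}$ absorbs the unknown reference phase $\alpha$. If you repair (i) by keeping a single reference register for the whole run and (ii) by refining per collapsed eigenvector rather than on set-valued per-scale estimates, your construction collapses back to essentially the paper's proof.
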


The overall algorithm is similar to Phase Estimation. 
We begin as follows:
\begin{proposition} \label{prop:est1}
    There is an algorithm that, given $\eps,\eta > 0$, access to an unknown $Z \in \UU(\dd)$, and two eigenvectors $\ket{a}$, $\ket{b}$ with eigenphases $\alpha, \beta \in [0,2\pi)$ (respectively), has the following behavior:  
    The algorithm uses one additional qubit, applies $Z$ at most $O(\log(1/\eta)/\eps)$ times, and outputs an estimate~$\theta$ that is within~$\pm \eps$ of $\beta-\alpha$, except with probability at most~$\eta$.
    In addition, at the end of the algorithm still holds $\ket{a} \otimes \ket{b}$, unentangled with its results.
\end{proposition}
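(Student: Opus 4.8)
The plan is to run a control-free ``robust'' phase estimation in the style of~\cite{Kimmel2015}, using the single extra qubit only as the control of a controlled-\textsc{swap} of the two $\dd$-dimensional registers. For an integer $k\geq 1$, the basic gadget is: prepare the qubit in $\ket{+}$; apply the controlled-\textsc{swap} that exchanges the two qudit registers when the qubit is $\ket 1$, producing $\tfrac1{\sqrt2}(\ket0\ket a\ket b+\ket1\ket b\ket a)$; apply $Z^k$ to the first qudit register, costing $k$ oracle calls; and undo the controlled-\textsc{swap}. Since $Z\ket a=e^{\ii\alpha}\ket a$ and $Z\ket b=e^{\ii\beta}\ket b$, the two qudit registers return exactly to $\ket a\otimes\ket b$ and the qubit is left in $\tfrac1{\sqrt2}(e^{\ii k\alpha}\ket0+e^{\ii k\beta}\ket1)$, which up to global phase is $\tfrac1{\sqrt2}(\ket0+e^{\ii k\gamma}\ket1)$ with $\gamma:=\beta-\alpha$. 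Measuring the qubit in the basis $\{\ket0\pm\ket1\}$ gives outcome ``$+$'' with probability $\tfrac12(1+\cos k\gamma)$, and in the basis $\{\ket0\pm\ii\ket1\}$ with probability $\tfrac12(1+\sin k\gamma)$ (both up to normalization); so averaging $O(\log(1/\eta'))$ independent runs in each of these two settings estimates $\cos k\gamma$ and $\sin k\gamma$, and hence $k\gamma\bmod 2\pi$ to within $\pm\tfrac\pi4$ except with probability $\eta'$, by Hoeffding's inequality. Crucially, every run restores $\ket a\otimes\ket b$ exactly, so at the end the two eigenvector registers are still in $\ket a\otimes\ket b$, unentangled with everything else, as required.

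I would then run the gadget at $k=2^0,2^1,\dots,2^J$ with $J=\lceil\log_2(\pi/\eps)\rceil$, obtaining at stage $j$ (from $m_j$ runs in each of the two measurement settings) an estimate $\tilde\varphi_j$ of $\varphi_j:=2^j\gamma\bmod 2\pi$ to within $\pm\tfrac\pi4$ except with probability $\eta_j$. These are collated into a single estimate of $\gamma$ by the standard refinement: set $g_0:=\tilde\varphi_0$, and given $g_j$ let $g_{j+1}$ be whichever of the $2^{j+1}$ preimages $\tilde\varphi_{j+1}/2^{j+1}+2\pi m/2^{j+1}$, $0\le m<2^{j+1}$, of $\tilde\varphi_{j+1}$ under multiplication by $2^{j+1}$ on $\RR/2\pi\ZZ$ is closest to $g_j$ on the circle. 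A short induction shows $\abs{g_j-\gamma}\le\pi/2^{j+2}$ (mod $2\pi$): since $\gamma$ is itself a preimage of $\varphi_{j+1}$, the preimage of $\tilde\varphi_{j+1}$ corresponding to it lies within $\tfrac{\pi/4}{2^{j+1}}=\pi/2^{j+3}$ of $\gamma$, hence within $\pi/2^{j+3}+\pi/2^{j+2}=3\pi/2^{j+3}$ of $g_j$, while every other preimage is at distance at least $\tfrac{2\pi}{2^{j+1}}-\tfrac{3\pi}{2^{j+3}}=5\pi/2^{j+3}$ from $g_j$; so $g_{j+1}$ is that correct preimage, giving $\abs{g_{j+1}-\gamma}\le\pi/2^{j+3}$ and closing the induction. (The constant $\tfrac\pi4$ is what makes this work; any per-stage precision below $\tfrac\pi3$ would do.) The output is $\theta:=g_J$, which satisfies $\abs{\theta-\gamma}\le\pi/2^{J+2}\le\eps$ (all distances taken mod $2\pi$).

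For the cost: taking $m_j=O(\log(1/\eta_j))$ runs suffices for the per-stage $\pm\tfrac\pi4$ guarantee, so the total number of oracle calls is $O\big(\sum_{j=0}^J 2^j\log(1/\eta_j)\big)$. Choosing $\eta_j:=\eta\,2^{-(J-j)-1}$ makes the overall failure probability $\sum_j\eta_j<\eta$, while $\sum_{j=0}^J 2^j\log(1/\eta_j)=O\big(2^J\log(1/\eta)+\sum_{j=0}^J 2^j(J-j)\big)=O(2^J\log(1/\eta))=O(\log(1/\eta)/\eps)$, using $\sum_{j=0}^J 2^j(J-j)=O(2^J)$ and $2^J=\Theta(1/\eps)$; this geometrically decaying budget $\eta_j$ is exactly what avoids picking up an extra $\log\log(1/\eps)$ factor from a naive union bound. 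I expect the one delicate point to be the collation step: matching the constants in the $\pm\tfrac\pi4$ per-stage precision against the inductive error bound so that the correct preimage is always \emph{uniquely} closest, while handling circular distances mod $2\pi$ carefully. Everything else is a routine assembly of the control-free gadget, Hoeffding bounds, and the failure-probability budget.
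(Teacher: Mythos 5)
Your proof is correct. The heart of the proposition---the control-free phase-kickback gadget (a $\ket{+}$ ancilla, controlled-SWAP, $Z^k$ on the first register, controlled-SWAP again) that restores $\ket{a}\otimes\ket{b}$ exactly and leaves the relative phase $e^{\ii k(\beta-\alpha)}$ on the single ancilla qubit---is exactly the paper's construction. Where you diverge is the readout: the paper notes that the qubits obtained for $k=1,2,4,\dots$ are precisely the pre-QFT state of textbook phase estimation, invokes Iterative Quantum Phase Estimation as a black box to keep the ancilla count at one, and finishes with $O(\log(1/\eta))$ repetitions plus a median to boost confidence; you instead give a self-contained robust-phase-estimation analysis in the style of Kimmel et al.: nonadaptive $X$/$Y$ measurements at each power, Hoeffding bounds pinning each $2^{j}(\beta-\alpha) \bmod 2\pi$ to $\pm\tfrac{\pi}{4}$, an explicit bit-by-bit collation whose induction you verify (the constants work out: the correct preimage lies within $3\pi/2^{j+3}$ of $g_j$ while all others are at least $8\pi/2^{j+3}-3\pi/2^{j+3}=5\pi/2^{j+3}$ away), and a geometrically decaying failure budget $\eta_j$---the same budgeting device the paper uses in its bootstrap, \cref{thm:error-boosting}---to avoid a $\log\log(1/\eps)$ overhead. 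Your query count $O(2^J\log(1/\eta))=O(\log(1/\eta)/\eps)$ and the exact restoration of $\ket{a}\otimes\ket{b}$ after every run match the claimed guarantees. The trade-off is that your route is longer but fully elementary and nonadaptive on the quantum side, whereas the paper's is shorter by deferring the single-ancilla accuracy guarantee to the cited iterative-QPE result.
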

\begin{proof}
    The algorithm strongly resembles Quantum Phase Estimation.
    Given $\ket{a} \otimes \ket{b}$ in ``registers $1$ and~$2$'', suppose we:
    \begin{itemize}
        \item Adjoin a qubit $\ket{+}$ in a new ``register~$0$''.
        \item Performed controlled-SWAP on registers $1,2$, controlled on register~$0$.
        \item Apply $Z$ to register~$1$.
        \item Perform controlled-SWAP again.
        \item Detach the qubit in register~$0$.
    \end{itemize}
    It is easy to calculate that this procedure leaves registers $1$~and~$2$ in the state $\ket{a}\otimes \ket{b}$, with the detached qubit in state $\sqrt{1/2} (e^{\ii \alpha}\ket{0} + e^{\ii \beta} \ket{1})$.
    Up to a global phase, this is $\sqrt{1/2}(\ket{0} + e^{\ii(\beta-\alpha)}\ket{1})$.
    We can then repeat this procedure but with $Z^2$, $Z^4$, $Z^8$, etc.\ in place of~$Z$, yielding qubits $\sqrt{1/2}(\ket{0} + e^{\ii 2^k(\beta-\alpha)}\ket{1})$ for all $0 \leq k < m$, at a cost of $2^m - 1$ uses of~$Z$.
    This is precisely the scenario arising within textbook Quantum Phase Estimation~\cite{cleve1998quantum}, prior to its QFT.
    If we were to finish with QFT, we would essentially complete the proof, except that it would use $m = \log(1/\eps) + O(1)$ qubits to achieve error~$\eps$ with probability~$2/3$.
    Instead, if we use Iterative Quantum Phase Estimation~\cite{dobvsivcek2007arbitrary}, we only need one additional qubit to get the same guarantee.
    Finally, repeating the whole algorithm $O(\log (1/\eta))$ times and taking the median result completes the proof.
\end{proof}

\begin{proposition} \label{prop:est2}
    There is an algorithm that, given $\eps > 0$, access to an unknown $Z \in \UU(\dd)$, and an eigenvector $\ket{a}$, has the following behavior:  
    The algorithm uses one additional qudit and one additional qubit, applies $Z$ at most $O(\dd/\eps) \cdot \log^2 \dd$ times, and outputs the classical description of a set~$\wh{\Theta} \subset [0,2\pi)$ such that $\dH{\Theta}{\wh{\Theta}} \leq \eps$ except with probability at most~$\dd^{-100}$, where $\Theta$ is the set of $Z$'s eigenphases. 
\end{proposition}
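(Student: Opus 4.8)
The plan is to implement a control-free version of iterative quantum phase estimation for $Z$, using the supplied eigenvector $\ket a$ only as a catalyst that removes the need for a controlled-$Z$, and to sample eigenvectors by starting the work register in a uniformly random basis state; a coupon-collector argument then guarantees that every distinct eigenphase is produced.

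First I would isolate the control-free gadget. Put the control qubit in register~$0$, the given eigenvector $\ket a$ (with eigenphase $\alpha$) in register~$1$, and a $\dd$-dimensional work register in register~$2$. Applying controlled-$\mathrm{SWAP}(1,2)$ (controlled on register~$0$), then $Z^{2^l}$ on register~$1$, then controlled-$\mathrm{SWAP}(1,2)$ again, returns register~$1$ exactly to $\ket a$ in both branches of the control — because $Z\ket a = e^{\ii\alpha}\ket a$, so no branch-dependent entanglement is created — and, up to the irrelevant global phase $e^{\ii 2^l\alpha}$, applies to registers~$0,2$ the operation $\ket 0\!\bra 0\otimes\id + \ket 1\!\bra 1\otimes(e^{-\ii\alpha}Z)^{2^l}$, i.e.\ controlled-$U^{2^l}$ for $U\coloneqq e^{-\ii\alpha}Z$, using $2^l$ queries to $Z$ and leaving register~$1$ free for reuse. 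This is exactly the circuit used in the proof of \cref{prop:est1}, only with the second eigenvector replaced by an arbitrary state of register~$2$.

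Next I would run iterative quantum phase estimation (as in \cref{prop:est1}, cf.\ \cite{dobvsivcek2007arbitrary}) for the unitary $U = e^{-\ii\alpha}Z$ — whose eigenvectors are those of $Z$ and whose eigenphases are $\{\theta_j - \alpha\}$ — with register~$2$ initialized to $\ket k$ for $k$ uniform over the computational basis (equivalently, register~$2$ maximally mixed). Using $\lceil\log_2(1/\eps)\rceil + O(1)$ rounds and the median of $O(\log(1/\delta))$ repetitions, one such run uses $O(\tfrac1\eps\log\tfrac1\delta)$ queries to $Z$; measuring the phase register then (approximately) collapses register~$2$ onto an eigenvector $\ket{v_j}$ sampled with probability $\lvert\langle v_j|k\rangle\rvert^2$ and outputs, except with probability $\delta$, a value within $\pm\eps$ of $\theta_j-\alpha$. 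Averaging over $k$, for every distinct eigenphase $\theta$ of $Z$ a single run returns a value within $\eps$ of $\theta-\alpha$ with probability $\ge\tfrac1{2\dd}$. I would therefore repeat the run $T = O(\dd\log\dd)$ times — resetting register~$2$ to a fresh random $\ket k$ each time, while register~$1$ still holds $\ket a$ — collect all returned values into a set $\wh{\Theta}$, and output it. Every element of $\wh{\Theta}$ is within $\eps$ of some eigenphase minus~$\alpha$; conversely, a coupon-collector bound over the $\le\dd$ distinct eigenphases shows $\wh{\Theta}$ contains a value within $\eps$ of each of them except with probability $\dd^{-101}$, and taking $\delta = \dd^{-\Theta(1)}$ and union-bounding the per-run failures keeps the total failure $\le\dd^{-100}$. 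Since $\dH{\cdot}{\cdot}$ minimizes over a global shift, the shift $\tau=\alpha$ (which the algorithm never needs to know) certifies $\dH{\Theta}{\wh{\Theta}}\le\eps$. The query count is $T\cdot O(\tfrac1\eps\log\tfrac1\delta) = O(\tfrac\dd\eps\log^2\dd)$; the space beyond the input $\ket a$ is one qudit (register~$2$) and one qubit (the iterative-PE ancilla, register~$0$, reused throughout); and the non-oracle gate cost is that of the controlled-$\mathrm{SWAP}$s plus iterative-PE bookkeeping, i.e.\ $\poly\log(\dd/\eps)$.

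The hard part is accounting rather than concept. I expect the delicate points to be: (i) checking that the $\mathrm{SWAP}$ gadget really restores register~$1$ to $\ket a$ in every branch, so that the single copy of $\ket a$ survives all $T$ runs without leaking entanglement that would corrupt later phase estimates; (ii) composing the three error sources — the $\delta$ failure of each of the $T$ phase-estimation runs, the coupon-collector miss probability, and the $O(\log\tfrac1\delta)$ query blow-up from confidence boosting — so they simultaneously meet the $\dd^{-100}$ failure target and the $O(\tfrac\dd\eps\log^2\dd)$ query bound; and (iii) handling degenerate eigenvalues, which only requires the observation that collapsing register~$2$ onto \emph{any} orthonormal eigenbasis suffices, since we merely need each distinct eigenphase to occur among the outputs.
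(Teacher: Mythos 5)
Your proposal is correct and follows essentially the same route as the paper: the paper's proof of \cref{prop:est2} likewise adjoins a maximally mixed qudit (equivalently your random $\ket{k}$), runs the SWAP-gadget/iterative phase estimation routine of \cref{prop:est1} with failure probability $\dd^{-\Theta(1)}$ to record an $\eps$-accurate value of $\boldsymbol\beta-\alpha$ for a uniformly random eigenvector, and repeats $O(\dd\log\dd)$ times with a coupon-collector argument, the unknown shift $\alpha$ being absorbed by the definition of $\dH{\cdot}{\cdot}$. You merely inline the gadget rather than citing \cref{prop:est1} as a black box; the accounting and guarantees match.
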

\begin{proof}
    Write $\alpha$ for the eigenphase of~$Z$ on $\ket{a}$.
    The algorithm repeatedly does the following: 
    Adjoin to~$\ket{a}$ a second qudit in the maximally mixed state.
    Then use the routine from \Cref{prop:est1}, with $\eta = \dd^{-C}$ for some large constant~$C$.
    The result is that except with probability at most~$\dd^{-C}$, the routine uses $O(1/\eps) \cdot \log d$ applications of~$Z$, and ends up holding the following: $\ket{a} \otimes \ket{\bb}$ together with an $\eps$-accurate estimate of $\boldsymbol{\beta}-\alpha$, where $\ket{\bb}$ is a uniformly random eigenvector of~$Z$ and $e^{\ii \boldsymbol \beta}$ is the associated eigenvalue.
    At this point, the difference $\boldsymbol{\beta}-\alpha$ may be recorded, and the register containing $\ket{\bb}$ discarded.
    
    By repeating this procedure $O(\dd \log \dd)$ times, the Coupon Collector analysis ensures that the algorithm will record $\eps$-accurate values of $\theta - \alpha$ for the at most~$d$ distinct values~$\theta \in \Theta$.
    (Except with probability at most~$\dd^{-100}$, having chosen~$C$ appropriately.)
    It follows that the collection $\wh{\Theta}$ of recorded values satisfies $\dH{\Theta}{\wh{\Theta}} \leq \eps$ (independent of what~$\alpha$ is), as required.
\end{proof}
\begin{remark}
    If $\log(1/\eps) \ll \log \dd$ then the log factors in the preceding analysis may be slightly improved. 
    In this case, since the number of possible values for $\theta - \alpha$ is only $O(1/\eps)$, one can use analysis of  Non-Uniform Coupon Collecting to show that only $O(\dd \log 1/\eps)$ repetitions are required. 
    We omit further details.  
\end{remark}

Finally, \Cref{thm:esteigs} now follows by applying the algorithm from \Cref{prop:est2} with the maximally mixed state in place of~$\ket{a}$.

\section{Gate-efficient pure state tomography with optimal sample complexity} \label{app:gate-state}

Although it doesn't seem to significantly improve the gate complexity of our algorithm, we record the following result:
\begin{theorem}
    There is a tomography algorithm for $\dd$-dimensional pure states, $\dd = 2^\nn$,  with the following behavior.  Given $\eps > 0$ and $O(\dd/\eps)$ copies of an unknown $\nn$-qubit pure state $\ket{u} \in \CC^{\dd}$, the algorithm sequentially measures the copies using von Neumann measurements and collects the classical results.  The measurements are nonadaptively chosen using classical randomness, and each is implemented with gate complexity $\nn^{O(1)}\cdot \log(1/\eps)$ on the minimal number of qubits,~$\nn$.  
    Finally, after the classical results are processed in $d^{O(1)}$ time, an estimate~$\ket{\widehat{\bu}}$ is output, and except with probability at most~$\dd^{-100}$ it satisfies $\abs{\braket{u|\widehat{\bu}}} \geq 1 - \eps$.
\end{theorem}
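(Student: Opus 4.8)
The plan is to re-use the algorithm and analysis behind \cref{prop:puretomog} essentially verbatim, changing only the measurement basis so that it becomes gate-efficient. Recall that the algorithm of \cref{prop:puretomog} measures each copy of $\ket u$ in a Haar-random orthonormal basis, forms the empirical (shadow-type) estimator $\bL = (\dd+1)\,\mathrm{avg}_j\{\ket{\bv_j}\!\bra{\bv_j}\} - \id$ from the $m = O(\dd/\eps)$ outcomes $\ket{\bv_j}$, rounds $\bL$ to a state and then to the closest rank-one matrix, and outputs the resulting unit vector; the only step that is not already gate- and time-efficient is the sampling and implementation of the Haar-random basis. The first step is therefore to replace it: measure each copy in the computational basis after applying a fresh, nonadaptively and classically chosen, uniformly random $\nn$-qubit Clifford $\bV_j$, so that $\ket{\bv_j} = \bV_j\ket{\bx_j}$ is a (size-biased) random stabilizer state and no ancilla is used. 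A random Clifford is implementable with $O(\nn^2)$ one- and two-qubit gates; incorporating the finite-precision bookkeeping of \cref{foot:tedium} (so that the total error accumulated over all $O(\dd/\eps)$ measurements is $O(\eps)$) contributes the stated $\nn^{O(1)}\cdot\log(1/\eps)$ gate count, and the classical post-processing (forming $\bL$, one eigendecomposition, projecting onto the probability simplex, extracting the top eigenvector) is unchanged and runs in $\dd^{O(1)}$ time.

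It then remains to check that the analysis survives this substitution. The Haar-randomness of the basis enters the proof of \cref{prop:puretomog} only through (i)~the unbiasedness $\E[\ket{\bv}\!\bra{\bv}] = \tfrac1{\dd+1}(\id + \ket u\!\bra u)$, hence $\E[\bL] = \ket u\!\bra u$, which is a second-moment identity and holds for any unitary $2$-design, and (ii)~a concentration bound showing that, with $m = O(\dd/\eps)$ samples, $\bL$ is close to $\ket u\!\bra u$ with high probability in a sense strong enough that, after the (deterministic, Lipschitz) rounding and rank-one extraction, the output $\ket{\widehat{\bu}}$ satisfies $\fnorm{\ket{\widehat{\bu}}\!\bra{\widehat{\bu}} - \ket u\!\bra u} = O(\sqrt\eps)$, i.e.\ $\abs{\braket{u|\widehat{\bu}}} \ge 1 - O(\eps)$ (rescaling $\eps$ at the end). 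For the Haar basis this concentration is proved by combining the variance bounds for shadow estimators with the \emph{low-rank} structure of the problem: one only needs $\bL$ to be accurate along the $O(\dd)$-dimensional tangent directions at $\ket u\!\bra u$, which is why $O(\dd/\eps)$ (rather than $O(\dd^2/\eps)$) samples suffice. The crux is to reprove this for the Clifford/stabilizer ensemble with failure probability $\dd^{-100}$: the needed fourth-moment estimate $\E[\abs{\braket{x|\bv}}^4] = \Theta(1/\dd^2)$ holds because stabilizer states form a unitary $3$-design (and their fourth moment is within a constant of the Haar one), and the low-rank concentration then follows from the sharp restricted-isometry / null-space bounds known for rank-one measurement ensembles built from (approximate) $4$-designs and, in particular, from stabilizer orbits.

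I expect this last step---transferring the sharp, $\log$-free low-rank concentration from the Haar ensemble to the gate-efficient stabilizer ensemble---to be the main obstacle. The subtlety is that the naive route of bounding $\opnorm{\bL - \ket u\!\bra u}$ by a union bound over an $\eps$-net of the sphere, together with a matrix Bernstein or Rosenthal inequality, loses a factor of $\log\dd$ in the sample complexity (this is precisely the loss in prior gate-efficient state tomography); avoiding it requires a chaining argument exploiting the low dimension of the rank-one manifold, and one must verify that the heavier (power-law rather than sub-exponential) tail of an individual stabilizer-state amplitude---tempered by the fact that all its low-order moments match the Gaussian ones up to constants---does not spoil that argument, as well as handle the size-biasing of the outcome distribution toward $\ket u$. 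Given that, the remaining bookkeeping (the operator-norm and Frobenius robustness of the two rounding steps, and the finite-precision accounting) is routine and identical to the proofs of \cref{prop:puretomog,thm:base}.
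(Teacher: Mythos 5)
Your overall frame (keep the algorithm and estimator of \cref{prop:puretomog}, change only the random measurement basis to something gate-efficient, and argue the analysis transfers) matches the paper's, but the step you yourself flag as the crux is exactly where the proposal has a genuine gap, and the specific route you sketch does not close it. The guarantee you need from the ensemble is a log-free $m = O(\dd/\eps)$ bound on $\opnorm{\ket u\!\bra u - \bL'}$ holding with failure probability $\dd^{-100}$. A uniformly random Clifford is only an exact unitary $3$-design; its degree-$t$ moments for $t \ge 4$ deviate from Haar by non-negligible amounts, so no constant number of matching moments is available beyond the third. From constantly many moments you cannot extract a $\dd^{-100}$ tail without paying extra samples or boosting, and the restricted-isometry/null-space results you invoke for $4$-design and stabilizer-orbit rank-one ensembles (i) carry precisely the $\log\dd$ factor you are trying to eliminate, (ii) concern a different observation model (independent rank-one linear measurements rather than the multinomially sampled von Neumann outcomes entering $\bL$ through $\Pr[\bJ=j\mid \bV]=\abs{\braket{u|\bV|j}}^2$), and (iii) do not come with the exponentially small failure probability needed here. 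The proposed chaining argument over the rank-one manifold with heavy-tailed stabilizer amplitudes is not carried out, and there is no known result to quote in its place; so the central claim of the theorem is not established by the proposal.

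For contrast, the paper does not use Cliffords at all. It keeps Haar randomness as the reference ensemble, bounds $\E[\Tr((\ket u\!\bra u-\bL)^{k})] \le \dd(c\sqrt\eps)^{k}$ for an even $k=\Theta(\log\dd)$ using the exponential concentration of \cite{Guta2020}, expands this trace power into monomials in matrix entries of the $\bV_t$'s, and replaces the $\bV_t$'s by approximate $(\dd,\lambda,k+1)$-tensor-product-expanders whose moments of degree up to $k+1$ match Haar to error $\eta=(\dd/\eps)^{-O(k)}$; such unitaries cost only $\nn^{O(1)}\log(1/\eps)$ gates. Markov's inequality applied to the $k$-th trace power then yields failure probability $\dd^{-100}$ while the conversion from trace power to operator norm costs only $(2\dd^{101})^{1/k}=O(1)$ because $k=\Omega(\log\dd)$. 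This moment-transfer argument is what sidesteps both the $\log\dd$ loss and the need for chaining or RIP, and it genuinely requires moment matching up to degree $\Theta(\log\dd)$ --- which is exactly what the Clifford group cannot supply. If you want to salvage your route, you would have to either prove a new log-free concentration theorem for random-Clifford-basis shadows with inverse-polynomial failure probability, or swap the Cliffords for an approximate $\Theta(\log\dd)$-design as the paper does.
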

We were unable to find this theorem in the literature, although it follows from relatively standard ideas.
It would seem that all previous works on pure state tomography (e.g.,~\cite{gross2010quantum,CL14,KUENG201788,Guta2020}) either have optimal sample complexity but are not gate-inefficient, or are gate-efficient but have sample complexity no better than $O(\dd \log(\dd)/\eps)$.
\begin{proof}
    Recall the standard pure state tomography algorithm from \Cref{prop:puretomog} and its analysis from \Cref{sec:ancillary}.
    Slightly more strongly we use that, with $m = O(\dd/\eps)$ copies, it can be made to satisfy (\cite[Theorem~5]{Guta2020})
    \begin{equation}
        \Pr[\opnorm{\ket{u}\!\bra{u} - \bL} \geq t\sqrt{\eps}] \leq 2 \exp(O(\dd) \cdot (1 - t^2))
    \end{equation}
    simultaneously for all constants $t > 0$, from which 
    \begin{equation}
        \Ex[\opnorm{\ket{u}\!\bra{u} - \bL}] \leq c \sqrt{\eps}
    \end{equation}
    easily follows (for some constant~$c$).
    Hence for any positive even integer~$k$,
    \begin{equation}    \label{eqn:trc}
        0 \leq \Ex[\Tr((\ket{u}\!\bra{u} - \bL)^{k})] \leq \dd(c \sqrt{\eps})^{k}.
    \end{equation}
    Recall that
    \begin{equation}
        \bL = \frac{\dd+1}{m}\sum_{t=1}^m \bV_t \ket{\bJ_t}\!\bra{\bJ_t}\bV_t^\dagger  - \id,
    \end{equation}
    where $\bV_1, \dots, \bV_m$ are independent Haar-random unitaries and $\bJ_t$ denotes the column index of the $t$th measurement outcome.
    For a given $t \in [m]$, let us define a ``$\bV_t$-entry'' to be an expression of the form $\braket{i|\bV_t|\bJ_t}\braket{\bJ_t|\bV_t^\dagger|i'}$, where $\ket{i}, \ket{i'}$ are standard basis vectors.
    
    Now
    \begin{align*}
        &\phantom{{}=} \Ex[\Tr((\ket{u}\!\bra{u} - \bL)^{k})] 
        = \Ex\bracks[\Bigg]{\Tr\parens[\Big]{\parens[\Big]{\id + \ket{u}\!\bra{u} - \frac{\dd+1}{m}\sum_{t=1}^m \bV_t \ket{\bJ_t}\!\bra{\bJ_t}\bV_t^\dagger}^{k}}} \\
        &= \Ex\bracks[\Bigg]{\Tr\parens[\Big]{\parens[\Big]{\sum_{i_1=1}^d \ket{i_1}\!\bra{i_1} + \ket{u}\!\bra{u} - \frac{\dd+1}{m}\sum_{t=1}^m \sum_{i_2,i_3=1}^{\dd}\ket{i_2}\!\braket{i_2|\bV_t|\bJ_t}\!\braket{\bJ_t|\bV_t^\dagger|i_3}\!\bra{i_3}}^{k}}},
    \end{align*}
    which in turn is a sum of $(\dd+1+m\dd^2)^k$ ``monomials'', each of the form
    \begin{equation}    \label{eqn:monom}
        c\cdot (\text{product of $k_1$ $\bV_1$-entries})(\text{product of $k_2$ $\bV_2$-entries})\cdots(\text{product of $k_{m}$ $\bV_m$-entries})
    \end{equation}
    for a constant $c$ with $|c| \leq 1$ and nonnegative integers~$k_t$ summing to at most~$k$.
    Note that the expected value of a monomial as in \Cref{eqn:monom} is equal to
    \begin{equation}
        c \cdot \E[(\text{product of $k_1$ $\bV_1$-entries})] \cdots \E[(\text{product of $k_{m}$ $\bV_m$-entries})],
    \end{equation}
    since $\bV_1, \dots, \bV_m$ are independent.
    
    Let us now introduce independent but \emph{pseudorandom} unitaries $\bV'_1, \dots, \bV'_m$.
    We write~$\bL'$ and $\bJ'_t$ and repeat the above development.
    Now suppose the following holds: 
    \begin{equation}    \label{eqn:achieve}
        \abs*{\E[(\text{product of $k_t$ $\bV'_t$-entries})] - \E[(\text{product of the same $k_t$ $\bV_t$-entries})]} \leq \eta \qquad \text{for any $k_t \leq k$}.
    \end{equation}
    Then---using that any $\bV_t$-entry or $\bV'_t$-entry has magnitude at most~$1$---it's not hard to conclude that
    \begin{equation}
        \abs*{\Ex[\Tr((\ket{u}\!\bra{u} - \bL')^{k})] - \Ex[\Tr((\ket{u}\!\bra{u} - \bL)^{k})]} \leq (\dd + 1 + m\dd^2)^k \cdot m\eta.
    \end{equation}
    Thus for some $\eta \leq (\dd/\eps)^{-O(k)}$ the difference is at most~$\eps^k$ and from \Cref{eqn:trc} we get 
    \begin{equation}
        0 \leq \Ex[\Tr((\ket{u}\!\bra{u} - \bL')^{k})] \leq 2\dd(c \sqrt{\eps})^{k} \quad\implies\quad \Pr[\Tr((\ket{u}\!\bra{u} - \bL')^{k}) \geq 2\dd^{101}(c \sqrt{\eps})^{k}] \leq \dd^{-100}.
    \end{equation}
    But when $\Tr((\ket{u}\!\bra{u} - \bL')^{k}) \geq 2\dd^{101}(c \sqrt{\eps})^{k}$ we have
    \begin{equation}
        \opnorm{\ket{u}\!\bra{u} - \bL'} \leq (2\dd^{101})^{1/k}(c\sqrt{\eps}) = O(\sqrt{\eps}), 
    \end{equation}
    provided $k = \Omega(\log d)$.

    In summary, provided:
    \begin{itemize}
        \item $\bV'_1, \dots, \bV'_m$ achieve \Cref{eqn:achieve} for $k = \Omega(\log d)$ and $\eta = O(\dd/\eps)^{-O(k)}$,
        \item each $\bV'_t$ can be implemented with gate complexity $\nn^{O(1)} \cdot \log(1/\eps)$,
    \end{itemize} 
    we get that $\opnorm{\ket{u}\!\bra{u} - \bL'} \leq O(\sqrt{\eps})$ except with probability at most~$\dd^{-100}$, from which the result follows as in \Cref{sec:ancillary}.

    Let us now consider a generic product of $k_t$ $\bV_t$-entries.  Dropping the $t$~subscript for notational simplicity, it looks like 
    \begin{equation}
        \braket{i_1|\bV|\bJ}\!\braket{\bJ|\bV^\dagger|\ell_1} \cdot 
        \braket{i_2|\bV|\bJ}\!\braket{\bJ|\bV^\dagger|\ell_2} \cdots
        \braket{i_k|\bV|\bJ}\!\braket{\bJ|\bV^\dagger|\ell_k}.
    \end{equation} 
    Since $\Pr[\bJ = j \mid \bV = V] = \braket{u|V|j}\!\braket{j|V^\dagger|u}$, the expectation of the above quantity is 
    \begin{equation}
        \sum_{j=1}^\dd \Ex[\braket{u|\bV|j}\!\braket{j|\bV^\dagger|u}\cdot \braket{i_1|\bV|j}\!\braket{j|\bV^\dagger|\ell_1} \cdot 
        \braket{i_2|\bV|j}\!\braket{j|\bV^\dagger|\ell_2} \cdots
        \braket{i_k|\bV|j}\!\braket{j|\bV^\dagger|\ell_k}].
    \end{equation} 
    Since $\ket{u}$ is a unit vector, it follows that the above expectation changes by no more than~$\sqrt{\dd} \cdot \lambda$ in magnitude if $\bV$ is replaced by~$\bV'$ distributed as a quantum $(\dd, \lambda, k+1)$-tensor-product-expander (see Definition~1 and equations (4)--(6) of~\cite{brandao2016local}).
    By the work of~\cite{brandao2016local} (see also the latest strengthening from~\cite{haferkamp2022random}), we know that a $(\dd, \lambda, k)$-TPE can be generated by certain simple probability distribution on $\nn$-qubit unitary circuits with $\mathrm{poly}(\nn,k) \cdot\log(1/\lambda)$ gates (provided $k = o(\dd)$).  
    With $k = \Theta(\log \dd) = \Theta(\nn)$ and $\lambda = O(\dd/\eps)^{O(k)}$, this is $\nn^{O(1)} \log(1/\eps)$ gates, as needed to complete the proof of the theorem.
\end{proof}

\section{\texorpdfstring{Compressed proof of~\cref{ChannelLearningConfidenceBound}}{Compressed proof of Proposition 4.2}}\label{app:Bavaresco}

Here we give a summary of the proof of \cite[Thm.~5]{Bavaresco2021}.
The content in this appendix is an excerpt of results in~\cite{Chiribella2007, Chiribella2016, Hashimoto2009,  Bavaresco2021}.
Our exposition will omit much of general discussion in those references,
but present necessary pieces.

\paragraph{Step~1: Quantum Testers.}

It is well known that a quantum channel~$\calC$ (unitary or not)
corresponds to a Choi operator~$C$,
which is nothing but a density operator obtained by applying~$\calC$
to an unnormalized pure density matrix 
$\sum_{i,i'} \ket{ii}\bra{i'i'}$ 
that is maximally entangled with an ancilla
of dimension equal to that of the input of~$\calC$.
The Choi operator of the composition of quantum channels
has a formula in terms of the Choi operators of the component channels,
using so-called link product denoted by~$*$~\cite{Chiribella2007}.
Although the link product involves \emph{partial transposes} 
(that generally do not preserve positivity)
whenever an output of a channel
is plugged in to another channel,
the result of a link product is always a positive semidefinite (PSD) operator.

Let us call a sequence of quantum channels a quantum network,
which is not necessarily a unitary.
It has been observed~\cite{Chiribella2007}
that even if some part of a quantum network is unspecified,
one can associate a Choi operator to a quantum network 
by introducing a Hilbert space
for each of input and output of an unspecified channel,
giving a blank slot that will be filled by a quantum channel.
Given a Choi operator~$T_x$ for a quantum network 
with $Q$ slots unfilled and with an outcome~$x$ postselected,
if we have $Q$ Choi operators~$C_1,C_2,\ldots,C_Q$ of quantum channels 
to fill the $Q$ slots,
we can express the probability of obtaining~$x$ as
\begin{align}
\Pr[x] = T_x * C = \Tr( T_x C^T ) = \Tr( T_x^T C ) \label{eq:linkproduct}
\end{align}
where 
$C = C_1 \otimes C_2 \otimes \cdots \otimes C_Q$
and 
$*$ is the link product~\cite[\S 2.4]{Chiribella2016}.
This encompasses the scenario where the $Q$ channels are not parallel but causally ordered
since $T_x$ can include \emph{e.g.} identity channels in between the slots.
Here, $C^T$ is the full, rather than a partial, transpose of~$C$.
Normally, a link product of two Choi operators
takes partial transpose on the input Hilbert space of the causally succeeding factor,
but since each channel~$C_j$ takes input from an open ``leg'' of~$T_x$ 
and some other leg of~$T_x$ takes input from the output of~$C_j$,
we equivalently take full transpose of~$C_j$ while keeping~$T_x$ intact.
We may think of this as a consequence of the fact that $T_x * C$ is  
a one-dimensional operator, a number.
Note that $C_j^T$ is a Choi operator of a quantum channel;
in particular, if $C_j$ is that of a unitary channel, so is $C_j^T$.
The collection $\{ T_x \}_x$ is called a quantum tester~\cite{Chiribella2009}.

\paragraph{Step~2: Semidefinite programming.}

Now we consider the problem in the statement of~\cref{ChannelLearningConfidenceBound}.
There are $N$ candidate channels, labeled by $x=1,\ldots,N$.
We are given access to $Q$ uses of one of the candidates.
Letting a quantum network to output~$y$,
we have a tester $\{T_y~|~ y = 1,2,\ldots,N \}$
that has $Q$ blank slots to which we plug in $Q$ identical channels,
chosen uniformly at random from the candidates.
Given $Q$ uses of a candidate channel labeled by~$x$,
the probability of outputting~$y$ is $\Tr(T_y (V_x^{\otimes Q})^T)$ by~\cref{eq:linkproduct}
where $V_x$ is the corresponding Choi operator.
The average probability of outputting a correct label
is given by
\begin{align}
    p_{succ} = \frac 1 N \sum_{x=1}^N \Tr(T_x (V_x^{\otimes Q})^T). \label{eq:prsucc}
\end{align}
Now, each operator~$T_x$ is a link product of some PSD operators,
and hence is PSD.
In addition, since probabilities must add up to~$1$,
we have $\Tr(\sum_y T_y (V_x^{\otimes Q})^T) = 1$ for each~$x$.

Actually, the blank slots of a tester may be filled with arbitrary channels
and must still produce a probability distribution.
In particular, we must have
\begin{align}
\Tr(T_\text{all} V^{\otimes Q}) = 1, \quad T_\text{all} = \sum_x T_x
\end{align}
for \emph{every} $V$ that is 
the Choi operator of a unitary channel.
Here we do not need to take transpose 
because the transpose 
of the Choi operator~$V(U)$ of a unitary channel~$\calU(U)$
is still the Choi operator~$V(U^*)$ of the complex conjugate unitary channel~$\calU(U^*)$.
This condition consists of infinitely many equations;
however, one observes~\cite{Chiribella2016}
that this condition is equivalent to demanding $\Tr(T_\text{all} W) = 1$
for any affine combination~$W$ in the affine span
\begin{align}
\calW = \left\{ \sum_i \xi_i V(U_i)^{\otimes Q} ~\middle|~ U_i \in \UU(\dd), \xi_i \in \RR, \sum_i \xi_i = 1 \right\}
\end{align}
This seemingly more complicated condition 
simplifies the situation 
because any affine space in a finite dimensional vector space 
has a finite basis. 
So, the condition transcribes to a finite number of equations~$\Tr(T_\text{all} W_j) = 1$,
one for each affine basis element~$W_j$ of~$\calW$.

A set of PSD operators $T_y$ 
with the affine constraint $\Tr(T_\text{all} W_j) = 1$
may go beyond what a physical quantum network gives
and may not qualify as a tester;
we do not claim that physically realizable testers 
are fully characterized by~$\calW$.
Nonetheless, we consider a semidefinite program~\cite{Chiribella2016}
specified by~$\{ V_x ~|~ x=1,2,\ldots,N \}$:
\begin{align}
\text{maximize } & \frac 1 N \sum_{x=1}^N \Tr(T_x (V_x^{\otimes Q})^T) 
\text{ by varying $\{T_x\}_x$ subject to } 
\begin{cases}
    T_y \succeq 0 \quad \forall y,\\
    \Tr(T_\text{all} W_j) = 1 \quad \forall j.
\end{cases} \label{primarySDP}
\end{align}
Since any physically realizable tester is a feasible solution,
the optimal value of this program is an upper bound on~$p_{succ}$.
Suppose that $\{T_x\}_x$ is a feasible solution.
Observe that 
\begin{align}
 \exists \lambda \in \RR,~~
 \exists W  \in \calW,~~
 \forall y:  ~\frac 1 N (V_y^{\otimes Q})^T \preceq \lambda W
 \qquad
 \Longrightarrow
 \qquad
 p_{succ} \le \lambda
 \label{eq:lambdaCondition}
\end{align}
because, if $W = \sum_j \xi_j W_j$ with $\sum_j \xi_j = 1$, then
\begin{align}
    \sum_x \frac 1 N \Tr(T_x (V_x^{\otimes Q})^T) 
    \le 
    \sum_x \Tr(T_x \sum_j \lambda \xi_j W_j) 
    = \lambda \sum_j  \xi_j \Tr(T_\text{all} W_j) 
    = \lambda .
\end{align}
It is an instance of the weak duality of semidefinite programming.
The larger the class~$\calW$ is, the stronger our bound $\lambda$ will be.

\paragraph{Step~3: Some representation theory.}

By construction, $\calW$ contains
\begin{align}
    \bar C = \int_{\UU(\dd)} V(U)^{\otimes Q} \rd U
\end{align}
where $\rd U$ is the Haar measure on~$\UU(\dd)$ and $V(U)$ is the Choi operator
of the unitary channel~$\calU(U)$.
We claim~\cite{Bavaresco2021} that for any $T \in \UU(\dd)$,
\begin{align}
    V(T)^{\otimes Q} \preceq \underbrace{\binom{ Q + \dd^2 - 1 }{Q}}_{= N \lambda}  \bar C, \label{eq:VC}
\end{align}
which is in the form of~\cref{eq:lambdaCondition}
and will therefore complete the proof of~\cref{ChannelLearningConfidenceBound}.
Now, \cref{eq:VC} is equivalent to saying~\cite{Hashimoto2009}
that there is $\gamma = \binom{Q+\dd^2-1}{Q} > 0$ 
with which $\abs{\braket{\phi|\psi}}^2 
\le \gamma \E_U \bra \phi R(U) \ket \psi \bra \psi R(U)^\dag \ket \phi$ 
for any vector $\ket \phi$ where
$\ket \psi$ is a pure unnormalized state corresponding to~$V(T)^{\otimes Q}$
and
$R(U) = (\id \otimes U)^{\otimes Q}$ 
is a \emph{unitary representation} of a compact Lie group~$\UU(\dd)$.
It is standard to work in a basis where $R(U)$ is block-diagonal,
so $R(U) = \bigoplus_\mu R_\mu(U) \otimes \id_{m(\mu)}$ 
where $\mu$ ranges over all inequivalent irreps occurring in~$R$.
The natural number~$m(\mu)$ is the multiplicity of~$\mu$ within~$R$.
Correspondingly we have $\ket \psi = \bigoplus_\mu \ket{\psi_\mu}$
and $\ket \phi = \bigoplus_\mu \ket{\phi_\mu}$
where each orthogonal summand, $\ket{\psi_\mu}$ or $\ket{\phi_\mu}$,
can be viewed as an entangled state across the irrep $R_\mu$ 
and the ``multiplicity space.''
Let~$d_\mu = \dim R_\mu$ and $d'_\mu = \min(d_\mu, m(\mu) )$.
We may write $\ket{\psi_\mu} = \sum_{a=1}^{d'_\mu} \ket{\psi^R_{\mu,a}}\ket{\psi^M_{\mu,a}}$
where each $\ket{\psi^R_{\mu,a}}$ is normalized.
By Cauchy--Schwarz,
\begin{align}
\abs{\braket{\phi|\psi}}^2 
&= 
\abs*{\sum_\mu \sum_{a(\mu)} \sqrt{\frac{d_\mu}{d_\mu}} \braket{\phi_{\mu}|\psi^R_{\mu,a}\psi^M_{\mu,a}} }^2
\le
\underbrace{\left( \sum_\mu  \sum_{a(\mu)} d_\mu \right)}_{\gamma'}
\left( \sum_\mu \sum_{a(\mu)} \frac 1 {d_\mu} \abs{\braket{\phi_{\mu}|\psi^R_{\mu,a}\psi^M_{\mu,a}} }^2 \right).
\end{align}
On the other hand,
the average of conjugation by an irrep~$R_\mu$ 
is to trace it out and replace it by the identity, 
scaled to preserve the trace (Schur's lemma). Hence, 
\begin{align}
 \E_U \bra \phi R(U) \ket \psi \bra \psi R(U)^\dag  \ket \phi 
= \sum_\mu \bra{\phi_\mu} \frac{1}{d_\mu} \id_{R_\mu} \otimes \psi^M_\mu \ket{\phi_\mu}
\ge 
\sum_\mu \sum_{a(\mu)} \frac 1 {d_\mu} \abs{\braket{\phi_{\mu}|\psi^R_{\mu,a}\psi^M_{\mu,a}} }^2 .
\end{align}
Therefore, we have proved $V(T)^{\otimes Q} \preceq \gamma' \bar C$~\cite{Hashimoto2009}.

It remains to show that $\gamma' \le \binom{Q + \dd^2 -1}{Q} = \gamma$.
Because $d'_\mu \le d_\mu = \dim R_\mu$ by definition, 
we aim to show
\begin{align}
    \sum_\mu (\dim R_\mu)^2 = \binom{Q + \dd^2 -1}{Q}. \label{eq:schurthesis}
\end{align}
As noted in~\cite{Bavaresco2021}, 
this is a formula appearing in Schur's thesis~\cite[Eq.~(57)]{Schur}.
We note the following pointers to modern textbooks (\emph{e.g.}~\cite[App.~A]{FultonHarris})
to prove this identity.
Obviously, an irrep  appears in~$R(U) = (\id \otimes U)^{\otimes Q}$
if and only if it does in~$U^{\otimes Q}$.
It is well known that 
the irreps $R_\mu$ of~$\UU(\dd)$ 
in the $Q$-fold tensor representation correspond to Young diagrams~$\mu$
with~$Q$~boxes and at most~$\dd$~rows (English notation).
The character~$\Tr R_\mu(U)$ is given 
by the Schur polynomial~$s_\mu(u_1,\ldots,u_\dd)$ of degree~$Q$
evaluated at the eigenvalues~$u_k$ of~$U$.
An identity that is useful for us is the Cauchy identity
\begin{align}
    \sum_\lambda s_\lambda(u) s_\lambda(v) = \prod_{k=1}^\dd \prod_{\ell = 1}^\dd \frac 1 { 1 - u_k v_\ell }
    \label{eq:Cauchy}
\end{align}
where $\lambda$ ranges over \emph{all} Young diagrams 
with at most~$\dd$~nonzero rows but unlimited number of boxes.
Here all the inverse polynomials should be 
understood as a formal power series.
The dimension~$d_\mu$ is~$s_\mu(1,1,\ldots,1)$,
so if we take the degree~$2n$ part of the right-hand side of~\cref{eq:Cauchy}
and evaluate it at $u_k = v_\ell = 1$ for all~$k,\ell$,
then we obtain~$\sum_\mu (\dim R_\mu)^2$.
This is equivalent to reading off the coefficient of~$t^Q$
in the series expansion of~$(1-t)^{-\dd^2}$.
Applying $(Q!)^{-1} \partial_t^Q$, 
the identity~\cref{eq:schurthesis} follows.
\end{document}